\documentclass[11pt]{article}
\usepackage[noparskip,biblatex]{tan}
\usepackage{morecommands}
\usepackage{changepage}

\title{Random dimension reduction and learning symmetric properties of quantum states}
\author{Angus Lowe\thanks{Authors are listed alphabetically.} \thanks{Center for Theoretical Physics --- a Leinweber Institute, MIT, Cambridge, MA, 02139. Email: \texttt{alowe7@mit.edu}.} \and Xinyu Tan\footnotemark[1] \thanks{Department of Mathematics, MIT, Cambridge, MA, 02139. Email: \texttt{norahtan@mit.edu}.}}
\date{}

\begin{document}

\maketitle
\begin{abstract}
We introduce a procedure called \emph{random dimension reduction} that simultaneously reduces the dimensions of many, potentially distinct quantum states while preserving properties invariant under the tensor power action of an isometry.
This provides a black-box method to replace the dimension with the maximum rank in the sample complexity of learning symmetric properties, even those depending on multiple input states.
We show that dimension reduction followed by full state tomography yields improved upper bounds for estimating distances, fidelities, and relative entropies between pairs of states.
We also give an efficient quantum circuit implementation of the procedure using the Schur transform.

Expressing the action of our procedure through the Choi--Jamio\l{}kowski isomorphism reveals an intimate connection with the recently introduced random purification channel by Tang, Wright, and Zhandry. This perspective also completes an end-to-end analysis of sample-optimal tomography without requiring a reference to the Schur transform or Schur polynomials. Finally, we prove that there does not exist a random purification channel that simultaneously purifies copies of multiple, potentially different input states. 
Hence, random dimension reduction is related to, but distinct from, random purification.  
\end{abstract}

\section{Introduction}
Symmetries can often be leveraged to infer statistical properties of experimental data. One basic example of such a symmetry is \textit{label invariance}: the choice of which labels correspond to the various observations is immaterial in many cases.
Concretely, exchanging the role of one and zero for a random bit does not affect its binary entropy. Furthermore, this principle extends beyond the case where one is given independent samples from a \textit{single} distribution. Consider two different groups of patients undergoing the same treatment and exhibiting a range of possible outcomes: whether the outcomes of the groups are the same should not depend on the symbols used to record them, so long as we make the choice consistently across the two groups\footnote{Though, of course, conclusions about the efficacy of the treatment could be affected by a change of notation.}. Using this label invariance is the first step toward optimal algorithms for inferring \emph{symmetric} properties such as the Shannon entropy, support size, and distances between finite probability distributions~\cite{valiant2011estimating,Valiant2017estimatingentropy,han2018local,acharya2017unified}.

For quantum systems, a similar principle holds when the goal is to infer properties which are invariant under a change of basis. This includes spectral information such as von Neumann (and other) entropies~\cite{acharya2020estimating} or the rank~\cite{ow2015spectrum}, and --- though it appears not to have been fully exploited in the literature prior to our work --- distances between unknown states. Besides their fundamental importance as natural quantum analogues (and hence, generalizations) of the properties in classical learning theory mentioned above, such quantum properties have broad applications; for example, to the verification of quantum devices~\cite{elben2020cross}, probing entanglement in many-body systems~\cite{brydges2019probing,soleimanifar2022testing,lovitz2026nearly}, and analyzing quantum algorithms~\cite{chw2017weak}. Thus, understanding information-theoretic limits for these tasks is of both theoretical and technological interest.

Returning to the classical case, one salient feature of symmetric properties is the irrelevance of the ambient dimension of the distributions being sampled. Evidently, it is without loss of generality to discard all information about the labels or, equivalently, assign a fresh set of symbols $[k]\equiv \{1,2,\dots, k\}$ at the end of the sampling, so long as $k$ is chosen to be large enough to accommodate all distinct labels. Thus, labels lying outside the supports of the distributions have no bearing on the problem at hand, and this is true even when multiple unknown distributions are being sampled.

There is strong motivation to make an analogous statement for quantum states, where in many cases the ambient dimension of the unknown states far exceeds their ranks. However, the situation in the quantum case is more subtle and incurs at least two related challenges. The first is the absence of an operational interpretation in terms of discarding labels, upon which to guide one's intuition. The second challenge is in promoting the discrete group of label permutations to the continuous unitary group, which at its surface may seem to risk further complication. In this work, we show these challenges can be overcome, and that the above phenomenon extends to the quantum case quite generically.

\subsection{Our results}
\paragraph{Random dimension reduction.} We give a procedure to simultaneously reduce the dimensions of many, potentially distinct quantum states while preserving certain symmetric properties; namely, properties invariant under the tensor power action of an isometry. 
More precisely, we show:
\begin{theorem}[Random dimension reduction]\label{thm:rand_joint_dim_red}
    Let $n,k,d$ be positive integers such that $1\leq k\leq d$, let $X_0$ be a linear operator on $(\C^k)^{\otimes n}$, and let $X$ be a linear operator on $(\C^d)^{\otimes n}$ such that $X=W^{\otimes n}X_0 (W^\dag)^{\otimes n}$ for some isometry $W\colon \C^k\to\C^d$. The quantum channel $\calR^{\mathrm{TP}}$, defined in \cref{eq:red_tp}, satisfies
    \begin{equation}
        \calR^{\mathrm{TP}}(X) = \E_{\bm{U}\sim \mathrm{U}(k)} \bU^{\otimes n}  X_0 (\bU^\dag)^{\otimes n}\label{eq:rand_joint_dim_red}
    \end{equation}
    where $\bU$ is a $k$-dimensional Haar-random unitary. Furthermore, $\calR^{\mathrm{TP}}$ can be implemented to $\epsilon$ accuracy in diamond norm using $\poly(n, \log d, \log(1/\epsilon))$ elementary gates. 
\end{theorem}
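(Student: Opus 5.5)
We have not yet seen the definition in \cref{eq:red_tp}, so we take the natural candidate: first twirl the input over $\mathrm{U}(d)$, apply Schur--Weyl duality to decompose $(\C^d)^{\otimes n}\cong\bigoplus_{\lambda\vdash n,\ \ell(\lambda)\le d} Q^d_\lambda\otimes P_\lambda$, then replace each unitary irrep $Q^d_\lambda$ with $Q^k_\lambda$. Irreps with $\ell(\lambda)>k$ are routed to some fixed state; this is harmless because they carry no weight for the inputs we consider. Finally, twirl over $\mathrm{U}(k)$. Concretely,
\[
\calR^{\mathrm{TP}}(X)=\bigoplus_{\lambda:\ \ell(\lambda)\le k}\frac{I_{Q^k_\lambda}}{\dim Q^k_\lambda}\otimes \Tr_{Q^d_\lambda}\!\bigl(\Pi_\lambda X\Pi_\lambda\bigr)\;+\;(\text{discarded part}).
\]
This map is manifestly completely positive and trace preserving.

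\textbf{Identity \cref{eq:rand_joint_dim_red}.} We argue in three steps.

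First, we evaluate the right-hand side with Schur--Weyl duality on $(\C^k)^{\otimes n}$. The twirl $\E_{\bU}\bU^{\otimes n}(\cdot)(\bU^\dag)^{\otimes n}$ kills the off-diagonal $\lambda\neq\mu$ blocks. On each diagonal block it acts as $Y\mapsto \frac{I}{\dim Q^k_\lambda}\otimes\Tr_{Q^k_\lambda}(\Pi^k_\lambda Y\Pi^k_\lambda)$, by Schur's lemma applied to operators commuting with the $\mathrm{U}(k)$ action.

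Second, we relate $X$ to $X_0$. Extend $W$ to a unitary $V\in\mathrm{U}(d)$, so that $X=V^{\otimes n}(X_0\oplus 0)(V^\dag)^{\otimes n}$. Because $\calR^{\mathrm{TP}}$ only depends on $X$ through the partial traces $\Tr_{Q^d_\lambda}(\Pi_\lambda X\Pi_\lambda)$, it is invariant under conjugation by $V^{\otimes n}$. We may therefore take $W$ to be the standard embedding $\C^k\hookrightarrow\C^d$.

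Third, we use that Schur--Weyl duality is compatible with this embedding. The subspace $(\C^k)^{\otimes n}\subseteq(\C^d)^{\otimes n}$ is $S_n$-invariant, and its $\lambda$-isotypic component is $Q^k_\lambda\otimes P_\lambda$, embedded as $Q^k_\lambda\subseteq Q^d_\lambda$ (the span of Gelfand--Tsetlin patterns with entries $\le k$), tensored with the identity on $P_\lambda$. This is zero when $\ell(\lambda)>k$. Consequently $\Tr_{Q^d_\lambda}(\Pi^d_\lambda X\Pi^d_\lambda)=\Tr_{Q^k_\lambda}(\Pi^k_\lambda X_0\Pi^k_\lambda)$ as operators on $P_\lambda$, and the discarded part vanishes. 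Comparing with the first step gives the identity.

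\textbf{Efficient implementation.} We implement the channel as follows:
\begin{enumerate}
\item Apply the Schur transform on $(\C^d)^{\otimes n}$, which outputs registers $|\lambda\rangle|q\rangle|p\rangle$. This is doable to accuracy $\epsilon$ with $\poly(n,\log d,\log(1/\epsilon))$ gates by Bacon--Chuang--Harrow and subsequent improvements.
\item Discard $q$.
\item If $\ell(\lambda)\le k$, prepare a maximally mixed state on the $Q^k_\lambda$ register, for example half of a maximally entangled state over a basis of Gelfand--Tsetlin patterns for $\mathrm{U}(k)$. The dimension $\dim Q^k_\lambda$ is computable by the hook-content formula and the patterns can be indexed efficiently. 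Otherwise, output a fixed valid state.
\item Apply the inverse Schur transform for $\mathrm{U}(k)$.
\end{enumerate}
Errors add up to at most a constant multiple of $\epsilon$ in diamond norm.

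\textbf{Main obstacle.} The delicate step is the third one: checking that the Schur basis restricts correctly, namely that $\Pi^d_\lambda$ restricted to $(\C^k)^{\otimes n}$ equals $\Pi^k_\lambda$ and that the $P_\lambda$ factors are identified canonically rather than up to an unknown unitary. We would justify this by noting that $\Pi_\lambda$ is the central idempotent of $\C[S_n]$, which is independent of $d$. The partial trace over $Q$ is then intrinsically the map to $\mathrm{Hom}_{S_n}$-data, so no unitary ambiguity arises. A secondary obstacle is making the maximally-mixed-state preparation over $Q^k_\lambda$ efficient uniformly in $\lambda$.
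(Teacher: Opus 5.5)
Your Schur-basis argument is correct, and it is essentially the paper's Section~3 route. Your candidate map is the formula of \Cref{thm:implementation}, with discarded part $\tr(\Pi_{>k}X)\,I/k^n$. Your first and third steps are \Cref{lem:twirl_schur} and \Cref{lem:isometry_schur}. Your comparison is the paper's alternative proof of \cref{eq:rand_joint_dim_red}, and your circuit is \Cref{alg:dim_red}. In the paper's notation your $Q^d_\lambda$ is $V^d_\lambda$, and $Q$ instead denotes the projector onto $\ker M$.

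Your ``main obstacle'' is settled most cleanly as in \Cref{lem:isometry_schur}. The map $W_0^{\otimes n}$ intertwines the $\mathfrak{S}_n$ actions, so with the same Specht convention on both sides, Schur's lemma forces it to act as $I\otimes\eta_\lambda$ with $\eta_\lambda$ an isometry.

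The gap is that the theorem concerns the specific channel of \cref{eq:red_tp}:
\[
\calR^{\mathrm{TP}}(X)=\E_{\bm{W}}\bigl[(\bm{W}^\dagger)^{\otimes n}\sqrt{M^+}\,X\sqrt{M^+}\,\bm{W}^{\otimes n}\bigr]+\tr(QX)\frac{I}{k^n},\qquad M=\E_{\bm{W}}(\bm{W}\bm{W}^\dagger)^{\otimes n},
\]
with $\bm{W}=\bm{U}W_0$ a Haar-random isometry. You never show that this map equals your candidate. Both claims of the theorem, the identity and the gate count, are about this map. So the identification is not a formality; it is exactly what \Cref{thm:implementation} supplies.

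The missing ingredient is the Grassmannian moment, \Cref{lem:moment_schur}. The operator $M$ commutes with every $F^d_\pi$ and, by Haar invariance, with every $U^{\otimes n}$, so $M=\sum_\lambda a_\lambda\Pi_\lambda$. Computing $\tr(\Pi_\lambda M)=\tr\bigl((W_0^\dagger)^{\otimes n}\Pi_\lambda W_0^{\otimes n}\bigr)$ with your third step gives $a_\lambda=\dim V^k_\lambda/\dim V^d_\lambda$ for $\ell(\lambda)\le k$, and $a_\lambda=0$ otherwise. Hence $Q=\Pi_{>k}$, and $\sqrt{M^+}$ rescales the $\lambda$-block by $\sqrt{\dim V^d_\lambda/\dim V^k_\lambda}$.

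Now your first step, with $d$ in place of $k$, evaluates the $\mathrm{U}(d)$-twirl of $\sqrt{M^+}X\sqrt{M^+}$. The rescaling turns the normalization $1/\dim V^d_\lambda$ into $1/\dim V^k_\lambda$. Compressing by $W_0^{\otimes n}$ then turns $I_{V^d_\lambda}$ into $\eta_\lambda^\dagger\eta_\lambda=I_{V^k_\lambda}$, which is exactly your formula. With this added, your proof is complete.

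For the identity alone, the paper's primary proof works directly from the definition, without Schur bases:
\begin{enumerate}
\item Insert a Haar $\bm{V}\in\mathrm{U}(k)$, using $\bm{U}W_0\bm{V}\sim\bm{U}W_0$.
\item Apply Weingarten calculus in dimension $k$.
\item Use $[M,F^d_\sigma]=0$ and $X\sqrt{M^+}M\sqrt{M^+}=X$ (by \Cref{lem:supp_of_c_op}) to reduce the coefficients to $\tr((F^d_\sigma)^\dagger X)=\tr((F^k_\sigma)^\dagger X_0)$.
\end{enumerate}
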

A useful case to think about is $k=d$, where the quantum channel $\calR^{\mathrm{TP}}$ is equal to the unitary twirling channel. When $k<d$, however, it behaves differently to remove the ambient dimension. We give two proofs of the existence of this channel, one in \Cref{sec:dim_red} using elementary linear algebra and another in \Cref{sec:schur_and_efficient} using representation theory. The latter also shows the method is efficiently implementable on a quantum computer through the Schur transform.

\paragraph{Application to learning symmetric properties.}
Dimension reduction can be applied to simplify the setup in symmetric learning problems.
In the special case where one receives identical copies of a single unknown state (corresponding to $X=\rho^{\otimes n}$ in \Cref{thm:rand_joint_dim_red}), it is already well-known that measurements given by projections onto isotypic components of the symmetric group, known as \textit{Weak Schur sampling} (WSS), are optimal for unitarily invariant properties (see, e.g., Lemma 20 in \cite{montanaro2016survey}), and the resulting classical distribution has no dependence on the ambient dimension~\cite{HayashiMatsumoto02,Christandl2005,chw2017weak,odonnell2016efficient}. This leads to many state-of-the-art upper bounds for estimating symmetric properties.
Our procedure recovers this fact, and goes a step further: if $k$ is a known upper bound on the rank of the input state $\rho\in\C^{d\times d}$, then dimension reduction applied to $\rho^{\otimes n}$ results in a random state $\bm{\rho}_0^{\otimes n}$ where $\bm{\rho}_0\in\C^{k\times k}$ has the same nonzero eigenvalues as $\rho$.

\begin{table}[t]
    \centering
    %\footnotesize
    \renewcommand{\arraystretch}{1.25}
    \begin{tabular}{@{} >{\raggedright\arraybackslash}p{0.28\linewidth} >{\raggedright\arraybackslash}p{0.40\linewidth} >{\raggedright\arraybackslash}p{0.18\linewidth} @{}}
        \toprule
        \text{Estimation task} & \text{Best prior} & \makecell[l]{\textbf{This work}} \\
        \midrule
    
        Trace distance & $O(r^2\cdot\log^2(1/\epsilon)/\epsilon^4)$~\cite{WZ24a,chen2025list} & $O(r^2/\epsilon^2)$\\

        Fidelity & $O{(r^2\cdot\log^2(1/\epsilon)/\epsilon^4)}^*$~\cite{utsumi2025uhlmann} & $O(r^2/\epsilon^2)$ \\

        $\alpha$-Tsallis relative entropy, $0<\alpha<1/2$ & $O(r^{2+2\alpha}/\epsilon^{2/\alpha + 2/(1-\alpha)})$~\cite{chen2025list} & $O(r^2/\epsilon^{1/\alpha})$  \\

        Squared Hellinger distance & $O(r^3\cdot\log^2(r/\epsilon)/\epsilon^8)$~\cite{chen2025list} & $O(r^2/\epsilon^2)$ \\
        \bottomrule
    \end{tabular}

    \vspace{2pt}
    \caption{\label{tab:bounds}Comparison of the best prior sample complexity upper bounds with our results, which are obtained using dimension reduction followed by tomography. Here, $r$ denotes the maximum rank of the input states and $\epsilon$ is the additive error. The asterisk for the best prior result for fidelity is to indicate that the dependence is the same when $r$ is instead the minimum rank of the two states. The results for $\alpha$-Tsallis relative entropy are the same as for $(1-\alpha)$-Tsallis relative entropy when $0<\alpha<1/2$. If we set $\alpha=1/2$, this is exactly equal to the squared Hellinger distance.}
\end{table}

This is already a conceptual departure from previous statements in the literature. The main benefit of our approach, however, is that it extends beyond this special case: dimension reduction applies equally well to the setting where one has multiple, potentially different unknown quantum states, while preserving information about their relative eigenbases. 
This gives:
\begin{corollary}\label{cor:rho_sigma_cor}
    Let $\rho, \sigma \in \C^{d\times d}$ be quantum states such that $\rank(\rho) + \rank(\sigma) \leq k$. For any integers $a,b\geq 0$, the dimension reduction channel with $n=a+b$ in \Cref{thm:rand_joint_dim_red} satisfies
    \begin{equation*}
        \calR^{\mathrm{TP}}(\rho^{\otimes a} \otimes \sigma^{\otimes b}) = \E_{\bU\sim \mathrm{U}(k)}\left[ {(\bU \rho_0 \, \bU^\dagger)}^{\otimes a} \otimes {(\bU \sigma_0 \, \bU^\dagger)}^{\otimes b}\right]
    \end{equation*}
    where $\bm{U}$ is a $k$-dimensional Haar-random unitary and $\rho_0,\sigma_0\in \C^{k\times k}$ are quantum states such that $\rho=W\rho_0 W^\dag$ and $\sigma=W\sigma_0 W^\dag$ for some isometry $W\colon \C^k\to\C^d$. 
\end{corollary}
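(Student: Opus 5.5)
The plan is to reduce the corollary directly to \Cref{thm:rand_joint_dim_red} with $n=a+b$. For this we need an isometry $W\colon\C^k\to\C^d$ whose range contains the supports of both $\rho$ and $\sigma$. Let $S=\operatorname{supp}(\rho)+\operatorname{supp}(\sigma)\subseteq\C^d$. Then $\dim S\le \rank(\rho)+\rank(\sigma)\le k$, and we must also have $k\le d$ for the theorem to apply; I would state $k\le d$ as an implicit standing assumption inherited from the theorem. Extend an orthonormal basis of $S$ to an orthonormal set of $k$ vectors in $\C^d$ and let $W$ be the isometry whose columns are these vectors. Then $WW^\dagger$ is the orthogonal projector $P$ onto a $k$-dimensional subspace containing $S$.

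Next, define $\rho_0=W^\dagger\rho W$ and $\sigma_0=W^\dagger\sigma W$. These are positive semidefinite operators on $\C^k$. Since $P\rho P=\rho$, we get $W\rho_0W^\dagger=P\rho P=\rho$ and
\[
\operatorname{tr}\rho_0=\operatorname{tr}(WW^\dagger\rho)=\operatorname{tr}\rho=1 .
\]
So $\rho_0$ is a state, and the same argument applies to $\sigma_0$.

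Now set $X_0=\rho_0^{\otimes a}\otimes\sigma_0^{\otimes b}$, an operator on $(\C^k)^{\otimes n}$. By multiplicativity of tensor products,
\[
W^{\otimes n}X_0(W^\dagger)^{\otimes n}=(W\rho_0W^\dagger)^{\otimes a}\otimes(W\sigma_0W^\dagger)^{\otimes b}=\rho^{\otimes a}\otimes\sigma^{\otimes b}=X .
\]
The hypothesis of \Cref{thm:rand_joint_dim_red} therefore holds, and it gives
\[
\calR^{\mathrm{TP}}(X)=\E_{\bU}\,\bU^{\otimes n}X_0(\bU^\dagger)^{\otimes n}.
\]
Splitting $\bU^{\otimes n}=\bU^{\otimes a}\otimes\bU^{\otimes b}$ and using the same multiplicativity again, this equals $\E_{\bU}\,(\bU\rho_0\bU^\dagger)^{\otimes a}\otimes(\bU\sigma_0\bU^\dagger)^{\otimes b}$, which is the claim. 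The edge cases $a=0$ or $b=0$ are covered by reading the empty tensor power as the scalar $1$.

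There is no real obstacle here. The only point requiring care is that a single isometry must work for both states simultaneously, which is exactly why the hypothesis bounds the sum of the ranks rather than their maximum, together with confirming $k\le d$ so that the isometry exists.
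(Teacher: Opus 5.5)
Your proof is correct and takes the same route the paper intends: the corollary is stated as an immediate consequence of \Cref{thm:rand_joint_dim_red} with $n=a+b$ and $X_0=\rho_0^{\otimes a}\otimes\sigma_0^{\otimes b}$, followed by splitting $\bU^{\otimes n}=\bU^{\otimes a}\otimes\bU^{\otimes b}$. Your construction of a common isometry $W$ from $\mathrm{supp}(\rho)+\mathrm{supp}(\sigma)$, with $\rho_0=W^\dagger\rho W$ and $\sigma_0=W^\dagger\sigma W$, fills in the existence step that the paper leaves implicit.
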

Note there is nothing particularly special about the case of two $d$-dimensional unknown states: one could apply dimension reduction to any number of such states or, by taking $b=0$, many copies of a single state. We choose to emphasize the case of two quantum states because this leads to improved upper bounds on the sample complexity of inferring several properties, including the trace distance, fidelity, Hellinger distance, and relative entropies, which we summarize along with the best prior work in \Cref{tab:bounds}. Here, the sample complexity refers to the total number of copies of either state.

The improvements are due to the following straightforward observation. In our context, we say that a property $f$ of two quantum states is symmetric if $f(\rho,\sigma) = f(W\rho W^\dag,W\sigma W^\dag)$ for any pair of states $\rho$ and $\sigma$ and isometry $W$. Suppose $f$ is also continuous, in that $F(\widehat{\rho},\rho)\geq 1-\delta$ and $F(\widehat{\sigma},\sigma)\geq 1-\delta$ implies $|f(\rho,\sigma) - f(\widehat{\rho},\widehat{\sigma})|\leq \epsilon$, where $F(\cdot,\cdot)$ is the standard fidelity function for quantum states.
Then, by \Cref{cor:rho_sigma_cor}, \emph{the sample complexity of estimating a symmetric function $f(\rho,\sigma)$ of two states $\rho$ and $\sigma$, each of rank at most $r$, to additive error $\epsilon$ is $O(r^2/\delta)$ via full state tomography on the dimension-reduced states}. The new upper bounds in \Cref{tab:bounds} follow from this statement along with straightforward continuity bounds applicable in each case, which we provide in \Cref{sec:continuity}.

These upper bounds, obtained with minimal reference to the structure of the problem at hand, should be viewed as a baseline against which to measure the performance of bespoke estimators. Remarkably, this approach seems to surpass an extensive list of prior work on such estimators in the low rank setting. For example, Refs.~\cite{WZ24a,WZ24b,utsumi2025uhlmann,bao2026estimatingquantumtsallisrelative,chen2025list} propose various estimators for such properties, improving upon prior work (e.g.,~\cite{gilyen2022improved}), by employing tools including density matrix exponentiation~\cite{Lloyd2014} and the QSVT~\cite{gilyen2019qsvt}. None of these fares better than the naive baseline given by dimension reduction and tomography, without additional assumptions beyond the maximum rank of the states (see \Cref{subsec:discussion}).

\paragraph{Connection to random purifications and tomography.}
The elementary analysis used to prove the existence of our procedure in \Cref{sec:dim_red} turns out to pay dividends through a connection to the recently introduced random purification channel of Tang, Wright, and Zhandry~\cite{tang2025conjugatequerieshelp}. For input states of the form $\rho^{\otimes n}$, the action of the dimension reduction channel is recovered by first applying this random purification channel, and then discarding the original system, leaving behind only the purifying registers. In the other direction, random purifications arise naturally by considering the action of the dimension reduction channel on a larger Hilbert space through the Choi--Jamio\l{}kowski isomorphism. This serves as a description of the rank-dependent random purification channel without referencing the Schur transform or representation theory, which was an open question in~\cite{girardi2026random}. It is also the last step in a line of work~\cite{pelecanos2025mixedstatetomographyreduces} giving an equally elementary analysis of sample-optimal quantum state tomography, which we provide for completeness in \Cref{sec:tomography}.

It is thus tantalizing to posit that dimension reduction and random purification are two sides of the same coin. However, in \Cref{sec:nogo} we show that this is not the case beyond input states of this form. In particular, there can be no single channel which, on the one hand, randomly purifies states of the form $\rho^{\otimes a}\otimes \sigma^{\otimes b}$ and which, on the other hand, reproduces the action of dimension reduction upon tracing out the original system. Hence, dimension reduction is a distinct procedure in all generality.

\subsection{Discussion}\label{subsec:discussion}
In the quantum learning literature, the Schur transform is almost exclusively applied to states of the form $\rho^{\otimes n}$. This is a well-understood procedure that serves as the first step in most learning algorithms that use fully entangled measurements. 
Our work shows that, perhaps surprisingly, it is also fruitful to apply the Schur transform to states of the form $\rho^{\otimes a}\otimes \sigma^{\otimes b}$ (see \Cref{alg:dim_red}). 
We are aware of only one prior instance of a similar observation: B\u{a}descu, O'Donnell, and Wright applied WSS to $\rho^{\otimes n} \otimes \sigma^{\otimes n}$ as part of an algorithm for estimating specific observables \cite[Theorem 7.10]{BOW19} in the context of quantum state certification.

Given that our techniques operate in the setting of fully entangled measurements, one may wonder whether our conceptual contribution --- that ambient dimension does not matter for the sample complexity of learning symmetric properties --- might also hold for less powerful measurements, such as single-copy measurements. However, this cannot be the case: even for pure states $\rho$ and $\sigma$, the task of estimating $\tr(\rho\sigma)$ to $\epsilon$ additive error with single-copy measurements requires $\Omega(\sqrt{d}/\epsilon)$ copies~\cite{ALL22}.

We conclude with several interesting directions for future research raised by this work:
\vspace{-.6em}
\paragraph{Minimum rank.}
Depending on the specific learning task, it is certainly possible to do better than our generic strategy of dimension reduction followed by full state tomography.
One prominent example is fidelity estimation, where we do not improve the previous best bounds in all regimes. 
In particular, prior work on this problem~\cite{gilyen2022improved,utsumi2025uhlmann} expresses the sample complexity in terms of the minimum rank, not the maximum. 
Such a dependence on the minimum rank is natural since $\sqrt{\rho}\sqrt{\sigma}$ has rank at most the minimum of $\rho$ and $\sigma$.
The same principle should hold for many other estimation tasks so long as they depend on quantities of the form $\rho^p \sigma^q$. 
We view improving bounds in this regime as an interesting open problem which will require new ideas beyond the generic method given in this work.

\vspace{-.6em}
\paragraph{Gate complexity.}
While we show that random dimension reduction can be implemented efficiently, we do not focus on optimizing gate counts.
So for certain estimation tasks, our gate complexity is higher than that of prior work, although we improve on the sample complexity.

\vspace{-.6em}
\paragraph{Robustness.} 
For both conceptual and practical reasons, it would be interesting to analyze the robustness of the procedure to cases where the input states are \emph{close to}, rather than exactly, low rank. Similar robustness results are known for state certification, e.g., in Corollary~1.6 of~\cite{BOW19}.

\subsection{Notation}\label{sec:notation}
We let $\reg{A}, \reg{B},\dots$ etc denote quantum registers and $\calH_\reg{A}$, $\calH_{\reg{B}}$ be their corresponding Hilbert spaces. If there are $n$ copies of $\reg{A}$ then the corresponding Hilbert space is denoted $\calH_{\reg{A}^n}\cong\calH_{\reg{A}}^{\otimes n}$. We let $\mathrm{L}(\calH_{\reg{A}},\calH_{\reg{B}})$ denote the set of linear operators mapping $\calH_{\reg{A}}$ to $\calH_{\reg{B}}$, and $\mathrm{L}(\calH_{\reg{A}})$ the set of square linear operators on $\calH_{\reg{A}}$. We also let $\mathrm{D}(\calH_{\reg{A}})\subseteq \mathrm{L}(\calH_{\reg{A}})$ denote the set of density operators on $\calH_{\reg{A}}$. Suppose $\dim(\calH_{\reg{A}})=d$. We write $\mathrm{U}(\calH_{\reg{A}})$ for the unitary group acting on $\calH_{\reg{A}}$ and, if the space is clear from context, just $\mathrm{U}(d)$. We let $\mathfrak{S}_n$ denote the symmetric group on $n$ letters. Given $\pi \in \mathfrak{S}_n$, we write the permutation operator on $n$ copies of $\reg{A}$ as $F_{\pi}^{d}$ where its action in the standard basis of $\calH_{\reg{A}}^{\otimes n}$ is given by
  \begin{equation*}
      F_{\pi}^{d} \ket{i_1, \ldots, i_n} = \ket{i_{\pi^{-1}(1)}, \ldots, i_{\pi^{-1}(n)}} \quad \text{for every}\ i_1, \ldots, i_n \in [d] \,.
  \end{equation*}
If the dimension is clear from context we drop the superscript. Given a subalgebra $\calS\subseteq\mathrm{L}(\calH_{\reg{A}})$ we let $\calS^\prime$ denote its commutant, i.e., the set of all operators commuting with $\calS$.

We use bold font $\bm{x},\bm{y},\bm{V}$, etc for random variables, and $\bm{U}\sim \mathrm{U}(\calH_{\reg{A}})$ to mean $\bU$ is a Haar-random unitary acting on $\calH_{\reg{A}}$. We use $\mathrm{CP}$ and $\mathrm{TP}$ to refer to completely positive and trace-preserving, respectively. We let $X^+$ denote the Moore-Penrose pseudoinverse of the matrix $X$.

\section{Random joint dimension reduction}\label{sec:dim_red}
\subsection{Quantum information basics}

We record here some relevant preliminary facts from quantum information theory. 
\paragraph{Maximally entangled states.}
We let $\ket{\Gamma}_{\reg{AA}}\coloneq \sum_{i=1}^d\ket{i}_{\reg{A}}\otimes \ket{i}_{\reg{A}}$ be the unnormalized, maximally entangled state on two copies of $\reg{A}$.
\begin{lemma}[Maximally entangled state properties]\label{lem:max_ent_state_properties}
  Let $\calH_{\reg{A}}$ and $\calH_{\reg{B}}$ be finite-dimensional Hilbert spaces corresponding to quantum systems $\reg{A}$ and $\reg{B}$, respectively. The unnormalized maximally entangled state satisfies the following properties:
  \begin{enumerate}
    \item $(I_{\reg{A}}\otimes X)\ket{\Gamma}_{\reg{AA}} = (X^T\otimes I_\reg{B})\ket{\Gamma}_\reg{BB}$ for any $X\in \mathrm{L}(\calH_{\reg{A}},\calH_{\reg{B}})$; and
    \item $\tr_{\reg{A}}((X\otimes I_{\reg{A}})\Gamma_{\reg{AA}} (Y^\dag\otimes I_{\reg{A}})) = XY^\dag$ for any $X,Y\in \mathrm{L}(\calH_{\reg{A}},\calH_{\reg{B}})$.
  \end{enumerate}
\end{lemma}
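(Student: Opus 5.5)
Both identities are linear (or sesquilinear) in the operators involved, so the plan is to expand everything in the standard bases and compare coefficients. Fix orthonormal bases $\{\ket{i}\}_{i=1}^{d_{\reg{A}}}$ of $\calH_{\reg{A}}$ and $\{\ket{j}\}_{j=1}^{d_{\reg{B}}}$ of $\calH_{\reg{B}}$. Let $\ket{\Gamma}_{\reg{AA}}=\sum_i\ket{i}\ket{i}$ and $\ket{\Gamma}_{\reg{BB}}=\sum_j\ket{j}\ket{j}$ be defined with respect to these bases. Take the transpose $X^T\in\mathrm{L}(\calH_{\reg{B}},\calH_{\reg{A}})$ with respect to the same bases, so that $\bra{i}X^T\ket{j}=\bra{j}X\ket{i}$. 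It is implicit in the statement that $\Gamma$ and the transpose refer to one fixed choice of bases, and I will say this explicitly.

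For property 1, I will expand both sides in the product basis $\ket{i}\otimes\ket{j}$ of $\calH_{\reg{A}}\otimes\calH_{\reg{B}}$. The left-hand side is
\[
(I\otimes X)\textstyle\sum_i\ket{i}\ket{i}=\sum_{i,j}\bra{j}X\ket{i}\,\ket{i}\ket{j}.
\]
The right-hand side is
\[
(X^T\otimes I)\textstyle\sum_j\ket{j}\ket{j}=\sum_{i,j}\bra{i}X^T\ket{j}\,\ket{i}\ket{j}.
\]
These have identical coefficients by the definition of the transpose. Both sides live in $\calH_{\reg{A}}\otimes\calH_{\reg{B}}$, which is how the register labels in the statement should be read.

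For property 2, write $\Gamma_{\reg{AA}}=\sum_{i,i'}\ket{i}\!\bra{i'}\otimes\ket{i}\!\bra{i'}$. Conjugating by $X\otimes I$ on the left and $Y^\dagger\otimes I$ on the right gives
\[
\textstyle\sum_{i,i'}X\ket{i}\!\bra{i'}Y^\dagger\otimes\ket{i}\!\bra{i'}.
\]
Tracing out the second $\reg{A}$ register keeps only the terms with $i=i'$, leaving $\sum_i X\ket{i}\!\bra{i}Y^\dagger = X I Y^\dagger = XY^\dagger$.

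There is no real obstacle here; the only care needed is bookkeeping of which register the partial trace acts on and the basis-dependence of the transpose.
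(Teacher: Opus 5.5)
Your proof is correct: both identities follow directly from the basis expansions you write down, and you handle the basis-dependence of $X^T$ and the register bookkeeping properly (in property 2 it is the second tensor factor that is traced out). The paper records this lemma as a standard fact without giving a proof, and your coordinate computation is the standard verification one would supply for it.
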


\paragraph{Choi--Jamio\l{}kowski Isomorphism.} We adopt the convention where the Choi--Jamio\l{}kowski operator $J_\calN \in \mathrm{L}(\calH_{\reg{A}}\otimes \calH_{\reg{B}})$ of a quantum channel $\calN\colon \mathrm{L}(\calH_\reg{A})\to\mathrm{L}(\calH_\reg{B})$ is defined through
$
J_\calN \coloneq (\id_{\reg{A}}\otimes \calN)(\Gamma_{\reg{AA}}).
$
In particular, $\calN(X) = \tr_{\reg{A}}(J_\calN(X^T\otimes I))$.

\paragraph{Schur--Weyl duality.} We now briefly review facts related to permutation and tensor power operations in finite dimensional linear algebra. See, e.g., Section~7.1 in~\cite{Watrous2018tqi}. Although useful for reasoning about irreducible representations as well, we defer these more fine-grained consequences to \Cref{sec:schur_and_efficient}. We let $\vee^n \calV$ denote the symmetric subspace of $\calV^{\otimes n}$; that is, the subspace of $\calV^{\otimes n}$ invariant under the natural action of the symmetric group $\mathfrak{S}_n$ on $\calV$. For example, the symmetric subspace of ${(\C^d)}^{\otimes n}$ is
\begin{align*}
  \vee^n\C^d\coloneq \{\ket{\psi}\in(\C^d)^{\otimes n}\colon F^d_\pi\ket{\psi}=\ket{\psi}\ \forall\pi\in \mathfrak{S}_n\}
\end{align*}
while for the vector space $\mathrm{L}((\C^d)^{\otimes n})$ we have
\begin{align*}
  \vee^n\mathrm{L}(\C^d)\coloneq \{X\in\mathrm{L}((\C^d)^{\otimes n})\colon F^d_\pi X(F^d_\pi)^\dag = X , \ \forall \pi\in \mathfrak{S}_n\} = \{F^d_\pi\colon \pi\in \mathfrak{S}_n\}^\prime.
\end{align*}
We refer to $\vee^n\mathrm{L}(\C^d)$ as the set of \emph{exchangeable operators} on $\mathrm{L}((\C^d)^{\otimes n})$. Clearly, the algebra generated by the tensor power action of the unitary group $\{U^{\otimes n}\colon U\in \mathrm{U}(d)\}$ is contained in $\vee^n\mathrm{L}(\C^d)$.\; \emph{Schur--Weyl duality} refers to the fact that these algebras are actually equal. Conversely, by von Neumann's bicommutant theorem, the commutant $\{U^{\otimes n}\colon U\in \mathrm{U}(d)\}^\prime$ is the algebra generated by permutation operators $\{F_\pi^d\colon \pi\in \mathfrak{S}_n\}$. This statement leads to the following expression, which has wide-ranging applications to quantum information.

\begin{lemma}[Weingarten calculus]\label{lem:weingarten}
    For any $X\in \mathrm{L}((\C^d)^{\otimes n})$ it holds that
    \begin{equation*}
        \E \bm{U}^{\otimes n} X (\bm{U}^\dag)^{\otimes n} = \sum_{\pi,\sigma\in \mathfrak{S}_n}(G^+)_{\pi,\sigma} \tr((F^d_\sigma)^\dag X) \cdot F^d_\pi \,,
    \end{equation*}
    where $\bm{U}$ is Haar-random over $\mathrm{U}(d)$ and $G_{\pi,\sigma}=\tr((F^d_\pi)^\dag F^d_\sigma)$. 
\end{lemma}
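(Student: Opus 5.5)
The plan is to show that the twirl $\mathcal{T}(X)\coloneq\E\,\bm{U}^{\otimes n}X(\bm{U}^\dag)^{\otimes n}$ is the orthogonal projection, with respect to the Hilbert--Schmidt inner product, onto the commutant $\{U^{\otimes n}\colon U\in\mathrm{U}(d)\}'$. Schur--Weyl duality identifies this commutant with $\mathcal{A}\coloneq\mathrm{span}\{F^d_\pi\colon\pi\in\mathfrak{S}_n\}$. Once that is established, the stated formula is just the standard expression for an orthogonal projection onto a span of possibly linearly dependent vectors, written using the Gram matrix pseudoinverse.

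The argument has four steps.

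\textbf{Step 1: the range lies in $\mathcal{A}$.} By left and right invariance of Haar measure, $V^{\otimes n}\mathcal{T}(X)(V^\dag)^{\otimes n}=\mathcal{T}(X)$ for every $V\in\mathrm{U}(d)$. So $\mathcal{T}(X)$ lies in the commutant of the tensor power action, which by the bicommutant form of Schur--Weyl duality recalled above is the algebra generated by the $F^d_\pi$. Since $F^d_\pi F^d_\sigma=F^d_{\pi\sigma}$, this algebra equals the linear span $\mathcal{A}$.

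\textbf{Step 2: $\mathcal{T}$ is an orthogonal projection onto $\mathcal{A}$.}
\begin{itemize}
\item It fixes $\mathcal{A}$ pointwise, because each $F^d_\pi$ commutes with $U^{\otimes n}$.
\item It is self-adjoint for $\langle A,B\rangle=\tr(A^\dag B)$. Indeed $\langle Y,\mathcal{T}(X)\rangle=\E\tr\bigl(((\bm{U}^\dag)^{\otimes n}Y\bm{U}^{\otimes n})^\dag X\bigr)$, and $\bm{U}\mapsto\bm{U}^\dag$ preserves Haar measure, so this equals $\langle\mathcal{T}(Y),X\rangle$.
\item Combining these with Step 1 gives $\mathcal{T}^2=\mathcal{T}=\mathcal{T}^*$ with range exactly $\mathcal{A}$.
\end{itemize}

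\textbf{Step 3: express the projection via the Gram matrix.} Write $\mathcal{T}(X)=\sum_\pi c_\pi F_\pi$. The coefficients must satisfy the normal equations: for every $\sigma$, $\tr(F_\sigma^\dag X)=\tr(F_\sigma^\dag\mathcal{T}(X))$, which says $b=G^{T}c$ for the vector $b_\sigma=\tr(F_\sigma^\dag X)$.

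The Gram matrix has entries $G_{\pi,\sigma}=\tr(F_\pi^\dag F_\sigma)=d^{\#\mathrm{cycles}(\pi^{-1}\sigma)}$. These are real, and since $\#\mathrm{cycles}(\pi^{-1}\sigma)=\#\mathrm{cycles}(\sigma^{-1}\pi)$, $G$ is real symmetric, so $G^T=G$. The normal equations are therefore $Gc=b$.

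\textbf{Step 4: the main obstacle, when $G$ is singular.} If $d<n$ the operators $F_\pi$ are linearly dependent and $G$ is not invertible, so $c$ is not unique. This is the only step needing care.

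First, $b$ lies in $\mathrm{range}(G)$. To see this, note that any $v\in\ker G$ satisfies $\|\sum_\sigma v_\sigma F_\sigma\|_2^2=v^\dag G v=0$, so $\sum_\sigma v_\sigma F_\sigma=0$ and hence $v^\dag b=0$. Since $G$ is Hermitian, this places $b$ in $\ker(G)^\perp=\mathrm{range}(G)$.

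So $c=G^+b$ is a valid solution. Any two solutions differ by an element $v\in\ker G$, and we just showed $\sum_\sigma v_\sigma F_\sigma=0$ for such $v$. Hence the operator $\sum_\pi c_\pi F_\pi$ does not depend on which solution is chosen, and equals $\mathcal{T}(X)$.

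Finally, expanding $c_\pi=\sum_\sigma(G^+)_{\pi,\sigma}\tr((F_\sigma^d)^\dag X)$ gives exactly the formula in the statement.
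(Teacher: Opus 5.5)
Your proof is correct and follows the same route the paper indicates: the paper takes the lemma from Schur--Weyl duality (the commutant of $\{U^{\otimes n}\}$ is the span of the $F^d_\pi$) and defers the details to the cited exposition, and those details are your argument that the twirl is the Hilbert--Schmidt orthogonal projection onto that span, written via the Gram pseudoinverse. One bookkeeping remark: since $\tr((F^d_\sigma)^\dag F^d_\pi)=G_{\sigma,\pi}$, the normal equations read $b=Gc$ directly with no transpose, and $b\in\mathrm{range}(G)$ already follows from the existence of $c$ in Step 1, so those detours are harmless but unnecessary.
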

See, e.g., \cite[Section~3]{Mele2024introductiontohaar} and for an exposition of the above lemma and related facts without representation theory.

\subsection{Defining the quantum channel}
Let $\calH_\reg{A}= \C^d$ and $\calH_{\reg{B}}= \C^k$ for some $1\leq k\leq d$, and let $W_0\coloneq \sum_{i\in [k]}{|i\rangle}_{\reg{A}}{\langle i |}_{\reg{B}}$ be the natural isometric embedding of $\calH_{\reg{B}}$ in $\calH_\reg{A}$. By a \emph{Haar random isometry} we mean the isometry-valued random variable
$
    \bm{W} \coloneq \bm{U} W_0
$
where $\bm{U}\in \mathrm{U}(\calH_\reg{A})$ is Haar-random. Here and throughout, let $n$ be a positive integer which denotes the number of identical copies of $\reg{A}$ comprising our system of interest. We then let
\begin{align}\label{eq:m_defn}
    M^{(n)}_{\reg{B}\embed\reg{A}} = \E_{\bm{W}} \left(\bm{W}\bm{W}^\dag\right)^{\otimes n}
\end{align}
denote the expected value of the projector onto the image of $\bm{W}^{\otimes n}$. This is no more than the $n$th moment of a random $k$-dimensional projection taken uniformly over the Grassmannian of $k$-dimensional subspaces in $\C^d$. We also use dimensions in the subscript to denote the same operator, e.g., $M^{(n)}_{k\embed d}$ in this case, and when $n$, $\reg{A}$, and $\reg{B}$ are clear from context, just $M$. 

We define the \emph{dimension reduction map} $\calR: \mathrm{L}(\calH_{\reg{A}^n})\to \mathrm{L}(\calH_{\reg{B}^n})$ through
\begin{align*}
  \calR(X) = \E \left[ (\bm{W}^\dag)^{\otimes n}\sqrt{M^+} X \sqrt{M^+} \bm{W}^{\otimes n}\right]
\end{align*}
for every $X\in\mathrm{L}(\calH_{\reg{A}^n})$, where $\bm{W}\colon \calH_\reg{B}\to\calH_\reg{A}$ is a Haar-random isometry. Evidently, this map is CP and, although it is not TP on the full domain, it \emph{is} CPTP when restricted to operators supported entirely on the image of $M$. Hence, $\calR$ can straightforwardly be modified to a valid quantum channel $\calR^\mathrm{TP}$ through\footnote{Although we use the maximally mixed state on $\calH_{\reg{B}^n}$ in the definition, any fixed quantum state will do.}
\begin{align}\label{eq:red_tp}
  \calR^{\mathrm{TP}}(X) = \calR(X) + \tr(QX)\frac{I_{\reg{B}^n}}{k^n}
\end{align}
for every $X\in\mathrm{L}(\calH_{\reg{A}^n})$, where $Q$ is the projector onto the kernel of $M$. This map has the desirable property that the dimensions of each copy of the system $\reg{A}$ are reduced while preserving all properties symmetric under the tensor power action of an isometry from dimension $k$ to $d$, as stated in \cref{eq:rand_joint_dim_red} within \Cref{thm:rand_joint_dim_red}. We prove this in the proceeding section.

\subsection{Proof of dimension reduction}\label{sec:proof_of_dim_red}
In this section, we prove the first part of \Cref{thm:rand_joint_dim_red}; namely, that the dimension reduction channel $\calR^{\mathrm{TP}}$ (defined in \cref{eq:red_tp}) has the action given in \cref{eq:rand_joint_dim_red}. We restate this claim now for the convenience of the reader: If $X\in \mathrm{L}(\calH_{\reg{A}^n})$ and $X_0\in\mathrm{L}(\calH_{\reg{B}^n})$ satisfy $X=W^{\otimes n}X_0 (W^\dag)^{\otimes n}$ for some isometry $W\colon \calH_{\reg{B}}\to \calH_{\reg{A}}$ then
\begin{equation*}
        \calR^{\mathrm{TP}}(X) = \E_{\bm{U}\sim \mathrm{U}(\calH_{\reg{B}})} \bU^{\otimes n}  X_0 (\bU^\dag)^{\otimes n}.\tag{\cref{eq:rand_joint_dim_red}}
\end{equation*}
We first note the following straightforward lemmas.
\begin{lemma}\label{lem:supp_of_c_op}
  If $X= W^{\otimes n}X_0 (W^\dag)^{\otimes n}$ for some $X_0\in\mathrm{L}(\calH_{\reg{B}^n})$ and fixed isometry $W\colon\calH_{\reg{B}}\to\calH_{\reg{A}}$ then $QX = XQ = 0$, where $Q$ is the projector onto the kernel of $M$.
\end{lemma}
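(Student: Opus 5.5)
We must show that the image of $W^{\otimes n}$ (equivalently, the support of $X$) lies in the image of $M$. Then $Q$ annihilates it, because $Q$ is the projector onto $\ker M = (\operatorname{im} M)^\perp$ and $M$ is Hermitian. The plan is to show that $\ker M \subseteq \ker\bigl((WW^\dag)^{\otimes n}\bigr)$. Taking orthogonal complements, the range of $W^{\otimes n}$ is then contained in $\operatorname{im} M$.

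\emph{Step 1.} $M$ is an expectation of positive semidefinite operators $(\bm{W}\bm{W}^\dag)^{\otimes n}$, each of which is a projector. For any vector $\ket{v}\in\ker M$ we therefore have
\begin{equation*}
0=\bra{v}M\ket{v} = \E_{\bm{U}}\,\bigl\|(\bm{W}^\dag)^{\otimes n}\ket{v}\bigr\|^2 .
\end{equation*}
The integrand is a nonnegative continuous function of $\bm{U}\in\mathrm{U}(d)$. The Haar measure has full support, so the integrand vanishes for every $U\in\mathrm{U}(d)$. In other words, $((UW_0)^\dag)^{\otimes n}\ket{v}=0$ for all $U$.

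\emph{Step 2.} The fixed isometry $W$ can be written as $W=U_0W_0$ for some unitary $U_0$, by extending the orthonormal columns of $W$ to an orthonormal basis of $\C^d$. Step 1 then gives $(W^\dag)^{\otimes n}\ket{v}=0$. Hence $\ker M\subseteq \ker (W^\dag)^{\otimes n}$, and so $(W^\dag)^{\otimes n}Q=0$; taking adjoints, $QW^{\otimes n}=0$. Consequently
\begin{equation*}
QX = QW^{\otimes n}X_0(W^\dag)^{\otimes n}=0, \qquad XQ = W^{\otimes n}X_0(W^\dag)^{\otimes n}Q = 0 .
\end{equation*}

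The only mildly delicate point is the passage in Step 1 from ``the expectation vanishes'' to ``the integrand vanishes for every $U$''. This needs continuity of $U\mapsto \|((UW_0)^\dag)^{\otimes n}v\|^2$ together with the full support of the Haar measure. Alternatively, one can avoid measure-theoretic arguments by using the unitary invariance of $M$. Since $V^{\otimes n}MV^{\dag\otimes n}=M$ for every $V\in\mathrm{U}(d)$, the kernel of $M$ is $V^{\otimes n}$-invariant. One can then argue via the minimality of invariant supports, but I would use the continuity argument, which is direct.
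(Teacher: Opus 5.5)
Your proposal is correct and takes essentially the same route as the paper: write $\bra{v}M\ket{v}$ as a Haar average of $\|(\bm{W}^\dag)^{\otimes n}\ket{v}\|^2$, use continuity and full support of the Haar measure to conclude $(W^\dag)^{\otimes n}\ket{v}=0$, deduce $\ker M\subseteq\ker (W^\dag)^{\otimes n}$, and take adjoints to get $QW^{\otimes n}=0$. Your explicit observation that $W=U_0W_0$ for some unitary $U_0$ spells out a step the paper uses only implicitly.
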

\begin{proof}
 If $\ket{\psi}\in \ker(M)$ then $\bra{\psi}M\ket{\psi}=\bra{\psi}\E(\bm{W} \bm{W}^\dag)^{\otimes n}\ket{\psi}=\E\norm{(\bm{W}^\dag)^{\otimes n}\ket{\psi}}^2 = 0$. Suppose by way of contradiction that $\norm{(W_0^\dag U^\dag)^{\otimes n}\ket{\psi}}^2>0$ for some $U\in \mathsf{U}(\calH_{\reg{A}})$. Then, since the Haar measure has full support and the function $U\mapsto \norm{(W_0^\dag U^\dag)^{\otimes n} \ket{\psi}}^2$ is continuous, $\E \norm{(W_0^\dag \bm{U}^\dag)^{\otimes n}\ket{\psi}}^2 = \E \norm{(\bm{W}^\dag)^{\otimes n}\ket{\psi}}^2>0$. Therefore, we must have $(W^\dag)^{\otimes n} \ket{\psi}=0$. Hence, $\ker(M)\subseteq \ker({(W^\dag)^{\otimes n}})$, and therefore $QW^{\otimes n}=((W^\dag)^{\otimes n} Q)^\dag=0$. 
\end{proof}
\begin{lemma}\label{lem:c-commutes}
    For any $\pi\in \mathfrak{S}_n$ it holds that $[M^{(n)}_{\reg{B}\embed \reg{A}}, F^{d}_\pi]=0$. 
\end{lemma}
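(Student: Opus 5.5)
The claim follows from one observation: permutation operators commute with tensor powers of any single operator. I will show that $F^d_\pi$ commutes with each realization $(\bm{W}\bm{W}^\dag)^{\otimes n}$ of the random projector, and then pass this commutation through the expectation in \cref{eq:m_defn}.

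\textbf{Step 1: commutation for a fixed operator.} For any $A\in\mathrm{L}(\C^d)$ and $\pi\in\mathfrak{S}_n$ we have $F^d_\pi A^{\otimes n} = A^{\otimes n}F^d_\pi$. To check this, use the action of $F^d_\pi$ on product vectors, which follows from its definition on the standard basis by linearity:
\begin{equation*}
F^d_\pi\bigl(\ket{v_1}\otimes\cdots\otimes\ket{v_n}\bigr)=\ket{v_{\pi^{-1}(1)}}\otimes\cdots\otimes\ket{v_{\pi^{-1}(n)}}.
\end{equation*}
Applying $A^{\otimes n}$ before or after this permutation produces the same vector $A\ket{v_{\pi^{-1}(1)}}\otimes\cdots\otimes A\ket{v_{\pi^{-1}(n)}}$. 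Product vectors span $(\C^d)^{\otimes n}$, so the two operators agree everywhere. Equivalently, $A^{\otimes n}$ lies in $\vee^n\mathrm{L}(\C^d)=\{F^d_\pi\}^\prime$, as noted in the Schur--Weyl discussion above.

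\textbf{Step 2: apply this to each realization.} Take $A=\bm{W}\bm{W}^\dag$ for each realization of the Haar-random isometry. Step 1 gives $F^d_\pi(\bm{W}\bm{W}^\dag)^{\otimes n}=(\bm{W}\bm{W}^\dag)^{\otimes n}F^d_\pi$ pointwise.

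\textbf{Step 3: pass to the expectation.} The expectation is linear, and $F^d_\pi$ is a fixed operator that does not depend on the random variable. Multiplying by $F^d_\pi$ on either side therefore commutes with taking $\E$, which gives
\begin{equation*}
F^d_\pi M = \E\, F^d_\pi(\bm{W}\bm{W}^\dag)^{\otimes n} = \E\,(\bm{W}\bm{W}^\dag)^{\otimes n}F^d_\pi = MF^d_\pi,
\end{equation*}
that is, $[M^{(n)}_{\reg{B}\embed\reg{A}},F^d_\pi]=0$.

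None of these steps presents a real obstacle. The only point needing any care is in Step 1: one must check that the index convention $\pi^{-1}$ in the definition of $F^d_\pi$ is applied consistently to product vectors. Since the same permutation acts on both sides, the convention does not affect the argument.
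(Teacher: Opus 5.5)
Your proof is correct and follows essentially the same route as the paper: both show that $F^d_\pi$ commutes with every realization $(\bm{W}\bm{W}^\dag)^{\otimes n}$ and then use linearity of the expectation in \cref{eq:m_defn}. The only difference is cosmetic. The paper gets the pointwise commutation from the rectangular intertwining relations $Y^{\otimes n}F^k_\pi = F^d_\pi Y^{\otimes n}$ and $(Y^\dag)^{\otimes n}F^d_\pi = F^k_\pi (Y^\dag)^{\otimes n}$, which it reuses later, whereas you apply the square case directly to $A=\bm{W}\bm{W}^\dag$.
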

\begin{proof}
     For any linear operator $Y\colon \calH_\reg{B}\to\calH_\reg{A}$, the fact that $Y^{\otimes n}F^{k}_\pi = F^{d}_\pi Y^{\otimes n}$ and $(Y^\dag)^{\otimes n}F^d_\pi = F^k_\pi (Y^\dag)^{\otimes n}$ is immediate. Therefore, the permutation operator $F_\pi^d$ commutes with any value of the random operator $(\bU W_0W_0^\dag \bU^\dag)^{\otimes n}$ appearing in the average that defines $M^{(n)}_{\reg{B}\embed \reg{A}}$ in \cref{eq:m_defn}.
\end{proof}

We are now ready to prove \cref{eq:rand_joint_dim_red}. We defer the proof of an efficient implementation to \Cref{sec:efficient}.
\begin{proof}[Proof of \cref{eq:rand_joint_dim_red}]
  By \Cref{lem:supp_of_c_op} we have $\calR(X)=\calR^{\mathrm{TP}}(X)$, so it suffices to prove the action of $\calR$ on $X$ is as claimed. We compute
  \begin{align}
    \calR(X) &= \E_{\bW} \left[ (\bm{W}^\dag)^{\otimes n} \sqrt{M^+} X \sqrt{M^+}\bm{W}^{\otimes n} \right] \nonumber\\
    &= \E_{\bm{U}\sim\mathrm{U}(\calH_\reg{A})} \E_{\bm{V}\sim \mathrm{U}(\calH_{\reg{B}})} \left[(\bm{V}^\dag W_0^\dag\bm{U}^\dag)^{\otimes n} \sqrt{M^+} X \sqrt{M^+}(\bm{U}W_0 \bm{V})^{\otimes n} \right] \nonumber\\
    &= \sum_{\pi,\sigma\in \mathfrak{S}_n} (G_k^+)_{\pi,\sigma} \E_{\bm{W}}\tr\paren*{ (F_\sigma^{k})^\dag (\bm{W}^\dag)^{\otimes n} \sqrt{M^+} X \sqrt{M^+}\bm{W}^{\otimes n} } \cdot F^{k}_\pi \,, \label{eq:channel_isometry}
  \end{align}
  where in the second equality we used the fact that $\bm{U}W_0\bm{V}$ and $\bm{U}W_0$ are identically distributed for Haar-random $\bm{U}\in\mathrm{U}(\calH_{\reg{A}})$ and $\bm{V}\in \mathrm{U}(\calH_{\reg{B}})$, and in the last equality we used \Cref{lem:weingarten} (with local dimension $k$) to evaluate the expectation over $\bm{V}$. 

  Next, by the cyclic property of trace, \Cref{lem:c-commutes}, and linearity of expectation we have
  \begin{align*}
    \E_{\bm{U}\sim\mathrm{U}(\calH_{\reg{A}})}\tr\paren*{ (F_\sigma^{k})^\dag (W_0^\dag\bm{U}^\dag)^{\otimes n} \sqrt{M^+} X \sqrt{M^+}(\bm{U}W_0)^{\otimes n} } &= \tr\paren*{  (F_\sigma^{d})^\dag \sqrt{M^+} X \sqrt{M^+} M}\\
    &= \tr((F^d_\sigma)^\dag X\sqrt{M^+}M\sqrt{M^+})\\
    &= \tr((F^d_\sigma)^\dag X)
  \end{align*}
  where in the second line we used that $(F_\sigma^{d})^\dag$ commutes with both $M$ and $\sqrt{M^+}$, and in the third line we used the $\sqrt{M^+}M\sqrt{M^+} = P$ as well as the fact established above that the support of $X$ lies in the image of $M$, so $XP = X$. Thus, the right-hand side of \cref{eq:channel_isometry} is equal to
  \begin{align*}
    \sum_{\pi,\sigma\in \mathfrak{S}_n} (G_k^+)_{\pi,\sigma} \tr\paren*{ (F_\sigma^{d})^\dag W^{\otimes n} X_0 (W^\dag)^{\otimes n} } \cdot F^{k}_\pi
    &= \sum_{\pi,\sigma\in \mathfrak{S}_n} (G_k^+)_{\pi,\sigma} \tr\paren*{ (F_\sigma^{k})^\dag  X_0} \cdot F^{k}_\pi  \tag{\Cref{lem:c-commutes}} \\
    &= \E_{\bm{U}\sim \mathrm{U}(\calH_B)} \bm{U}^{\otimes n} X_0 (\bm{U}^\dag)^{\otimes n} \tag{\Cref{lem:weingarten}}
  \end{align*}
  as required. We defer the proof of the efficient quantum circuit implementation to \Cref{sec:efficient}.
\end{proof}
We remark that in the course of the above proof we have shown a useful expression for the action of the dimension reduction map,
\begin{align}
  \calR(X) &= \sum_{\pi,\sigma\in \mathfrak{S}_n}(G^+_k)_{\pi,\sigma}\tr(F_{\sigma^{-1}}^d P X)F_\pi^k\label{eq:dim_red_perm_expansion}.
\end{align}

\subsection{Relation to random purifications (AKA the acorn trick)}\label{sec:acorn}
In this section we explain the connection between random purifications and the dimension reduction channel, culminating in \Cref{thm:rand_pur}. We find it is insightful to consider the Choi--Jamio\l{}kowski operator for the dimension reduction map, $J_\calR\in \mathrm{L}(\calH_{\reg{A}^n}\otimes \calH_{\reg{B}^n})$, which is equal to
\begin{align}
  J_\calR &= \E\left[ (I_{\reg{A}^n}\otimes (\bm{W}^\dag)^{\otimes n}\sqrt{M^+})\Gamma_{\reg{AA}}^{\otimes n}(I_{\reg{A}^n}\otimes \sqrt{M^+}\bm{W}^{\otimes n})\right] \nonumber \\
  &= \sum_{\pi,\sigma\in \mathfrak{S}_n} (G^+_k)_{\pi,\sigma} (PF_\sigma^d)_{\reg{A}^n}\otimes (F_\pi^k)_{\reg{B}^n}\label{eq:choi_op_perm_expansion}
\end{align}
where $P$ is the projector onto the image of $M$, and the second line follows from the Weingarten calculus. This can be verified using \cref{eq:dim_red_perm_expansion} and the well-known equality $\calR(X)=\tr_{\reg{A}^n}(J_\calR\cdot (X^T\otimes I_{\reg{B}^n}))$ for any $X\in\mathrm{L}(\calH_{\reg{A}^n})$ (see, e.g., \cite[Section~2.2.2]{Watrous2018tqi}).

The latter representation for the action of $\calR$, derived from a larger Hilbert space before discarding the $\reg{A}$ registers, is significant in its own right.
Indeed, let us define the \emph{random purification extension} $\calP: \mathrm{L}(\calH_{\reg{A}^n})\to \mathrm{L}(\calH_{\reg{A}^n}\otimes \calH_{\reg{B}^n})$ through
\begin{align}\label{eq:random_pur_ext}
  \calP(X) = J_{\calR}\cdot (X\otimes I_{\reg{B}^n})
\end{align}
for every $X\in\mathrm{L}(\calH_{\reg{A}^n})$.

The use of the word ``extension'' reflects the fact that, while $\calP$ is not CP on the full domain, it \emph{is} a noteworthy CP map when restricted to exchangeable operators, as we show in \Cref{thm:rand_pur} below. This implies it can be modified to a CP map on the full domain by first twirling over the symmetric group: Let $\calT_{\mathfrak{S}_n}\colon \mathrm{L}(\calH_{\reg{A}^n})\to \vee^n \mathrm{L}(\calH_{\reg{A}})$ denote the permutation twirl $\calT_{\mathfrak{S}_n}\colon X\mapsto \frac{1}{n!}\sum_{\pi\in \mathfrak{S}_n}F_\pi X F_\pi^\dag$, which is a quantum channel that fixes every exchangeable operator. We define $\calP^{\mathrm{CP}}\colon \mathrm{L}(\calH_{\reg{A}^n})\to\mathrm{L}(\calH_{\reg{A}^n}\otimes \calH_{\reg{B}^n})$ through
\begin{align*}
  \calP^{\mathrm{CP}}(X) =\calP(\calT_{\mathfrak{S}_n}(X))
\end{align*}
for every $X\in\mathrm{L}(\calH_{\reg{A}^n})$. As with the dimension reduction map, $\calP^{\mathrm{CP}}$ can be modified to be TP as well, for instance, through
\begin{align}\label{eq:tp_rand_pur}
  \calP^{\mathrm{CPTP}}(X) = \calP^{\mathrm{CP}}(X) + \tr(Q X)\frac{I_{\reg{A}^n}\otimes I_{\reg{B}^n}}{k^nd^n}
\end{align} 
for every $X\in\mathrm{L}(\calH_{\reg{A}^n})$, with $Q$ the projector onto the kernel of $M$ (as in \cref{eq:red_tp}). 
Moreover, the actions of all three maps $\calP$, $\calP^{\mathrm{CP}}$, and $\calP^{\mathrm{CPTP}}$ coincide on exchangeable operators supported on the image of $M^{(n)}_{k\embed d}$; for example, when acting on $n$ copies of a quantum state of rank at most $k$.
In this case, they have the same action as the random purification channel (AKA acorn trick) originally introduced in \cite{tang2025conjugatequerieshelp}.

\begin{theorem}[Random purification from dimension reduction]\label{thm:rand_pur}
  It holds that $\calP^{\mathrm{CPTP}}$ is a quantum channel. Moreover, it is a random purification channel, in that for any $\rho\in \mathrm{D}(\calH_{A})$ of rank at most $k=\dim(\calH_\reg{B})$, we have
  \begin{align}\label{eq:random_pur_output}
    \calP^{\mathrm{CPTP}}(\rho^{\otimes n}) = \calP^{\mathrm{CP}}(\rho^{\otimes n}) = \calP(\rho^{\otimes n}) = \E \left[(\ketbra{\bm{\phi}^\rho}{\bm{\phi}^\rho})^{\otimes n}\right]
  \end{align}
  where $\ket{\bm{\phi}^\rho}\coloneq(I_{\reg{A}}\otimes \bm{V})\ket{\phi^\rho}$ for a Haar-random $\bm{V}\in\mathrm{U}(\calH_{\reg{B}})$ and $\ket{\phi^{\rho}}\in \calH_{\reg{A}}\otimes \calH_{\reg{B}}$ an arbitrary fixed purification of $\rho$.
\end{theorem}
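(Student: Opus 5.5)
The plan is to split the statement into three parts: (i) $\calP$ is completely positive on exchangeable operators; (ii) $\calP^{\mathrm{CPTP}}$ is trace preserving; (iii) the explicit formula \cref{eq:random_pur_output}. Everything will go through the Choi operator $J_\calR$ and its permutation expansion \cref{eq:choi_op_perm_expansion}.

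\emph{Complete positivity.} The first observation is that $M$ is invariant under conjugation by $U^{\otimes n}$ for every $U\in\mathrm{U}(\calH_{\reg{A}})$, by left invariance of the Haar measure in \cref{eq:m_defn}. So $M$ commutes with the algebra generated by $\{U^{\otimes n}\}$, which by Schur--Weyl duality is exactly $\vee^n\mathrm{L}(\calH_{\reg{A}})$. Hence $M$, $\sqrt{M^+}$ and $P$ all commute with every exchangeable $X$. Any exchangeable $X$ also commutes with each $F^d_\sigma$. Therefore \cref{eq:choi_op_perm_expansion} shows that $X\otimes I_{\reg{B}^n}$ commutes with $J_\calR$. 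Since $J_\calR\succeq 0$ (it is the Choi operator of a CP map), we can write
\begin{equation*}
\calP(X)=J_\calR^{1/2}(X\otimes I)J_\calR^{1/2}
\end{equation*}
for exchangeable $X$. This map is manifestly CP on $\vee^n\mathrm{L}(\calH_{\reg{A}})$ and extends to a CP map on the whole domain by the same formula. Composing with the CP channel $\calT_{\mathfrak{S}_n}$, whose image is exchangeable, gives that $\calP^{\mathrm{CP}}$ is CP. The added term $\tr(QX)\,I/(k^nd^n)$ is CP because $Q\succeq 0$.

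\emph{Trace preservation.} Here $\tr\calP(Y)=\tr(J_\calR(Y\otimes I))=\tr\calR(Y^T)$ up to the transpose convention. A direct computation from the definition of $\calR$ gives $\tr\calR(Y)=\tr(\sqrt{M^+}Y\sqrt{M^+}M)=\tr(PY)$, and $P$ is invariant under transpose in the relevant sense: $M$ is real symmetric up to the conjugation-invariance of Haar measure. Equivalently, $\tr_{\reg{B}^n}J_\calR=P$ by \cref{eq:choi_op_perm_expansion} together with $\sum_{\pi}(G_k^+)_{\pi,\sigma}\tr F^k_\pi=\delta_{\sigma,e}$ on the relevant span; I would verify this via the first form of $J_\calR$ instead. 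Since $\calT_{\mathfrak{S}_n}$ is trace preserving and commutes with $P$ and $Q$, we get $\tr\calP^{\mathrm{CPTP}}(X)=\tr(PX)+\tr(QX)=\tr X$.

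\emph{Output formula.} For $\rho=W\rho_0W^\dag$, the state $\rho^{\otimes n}$ is exchangeable and supported on $\mathrm{im}(M)$ by \Cref{lem:supp_of_c_op}. Hence $\calT_{\mathfrak{S}_n}$ fixes it and $\tr(Q\rho^{\otimes n})=0$, which gives the first two equalities. For the last equality:
\begin{enumerate}
\item Fix the purification $\ket{\phi^\rho}=(W\sqrt{\rho_0}\otimes I_{\reg{B}})\ket{\Gamma}_{\reg{BB}}$; any other purification differs by a unitary on $\reg{B}$, which the Haar average absorbs.
\item Write
\begin{equation*}
\E(\ketbra{\bm\phi^\rho}{\bm\phi^\rho})^{\otimes n}=\big((W\sqrt{\rho_0})^{\otimes n}\otimes I\big)\,(\id\otimes\calT_{\mathrm{U}(k)})(\Gamma^{\otimes n}_{\reg{BB}})\,\big((\sqrt{\rho_0}W^\dag)^{\otimes n}\otimes I\big),
\end{equation*}
where $\calT_{\mathrm{U}(k)}$ denotes the $k$-dimensional unitary twirl.
\item Expand the twirl by \Cref{lem:weingarten} as $\sum_{\pi,\sigma}(G_k^+)_{\pi,\sigma}\,\overline{F^k_\sigma}\otimes F^k_\pi$.
\item On the other side, $\calP(\rho^{\otimes n})=\sum_{\pi,\sigma}(G_k^+)_{\pi,\sigma}(F^d_\sigma\rho^{\otimes n})\otimes F^k_\pi$, using $PF_\sigma\rho^{\otimes n}=F_\sigma\rho^{\otimes n}$.
\item Use $F^d_\sigma\rho^{\otimes n}=(W\sqrt{\rho_0})^{\otimes n}F^k_\sigma(\sqrt{\rho_0}W^\dag)^{\otimes n}$, which follows from the intertwining in \Cref{lem:c-commutes} and the fact that $\sqrt{\rho_0}^{\otimes n}$ commutes with permutations.
\end{enumerate}

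The main obstacle I expect is bookkeeping of transposes and inverses. Transposition enters both through the Choi convention $\calN(X)=\tr_{\reg{A}}(J_\calN(X^T\otimes I))$ versus $\calP(X)=J_\calR(X\otimes I)$, and through $\overline{F_\sigma}=F_\sigma$ versus $F_\sigma^T=F_{\sigma^{-1}}$. I would resolve this by checking that $G_k$, and hence $G_k^+$, is invariant under $(\pi,\sigma)\mapsto(\pi^{-1},\sigma^{-1})$ and is a class function of $\pi^{-1}\sigma$. Then any relabeling by inverses leaves both sums unchanged, and the two expansions in steps 3 and 4 agree term by term.
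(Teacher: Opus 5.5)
Your proposal is correct and follows the paper's proof essentially step for step. Complete positivity comes from $[J_\calR, X\otimes I]=0$ for exchangeable $X$ together with the square-root trick. Trace preservation comes from $\tr\calR(Y)=\tr(PY)$ and $P^T=P$. The output formula comes from matching the Weingarten expansion of the twirled purification against \cref{eq:choi_op_perm_expansion} via $F^d_\sigma\rho^{\otimes n}=(W\sqrt{\rho_0})^{\otimes n}F^k_\sigma(\sqrt{\rho_0}W^\dag)^{\otimes n}$.

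The transpose bookkeeping you flag as the main obstacle does not arise: $\overline{F^k_\sigma}=F^k_\sigma$ because permutation matrices are real, so your steps 3 and 4 agree term by term with no relabeling by inverses.
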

For the special case of $\dim(\calH_\reg{B})=\dim(\calH_\reg{A})$, our description of the random purification channel coincides with that in a recent work \cite{girardi2026random}. \Cref{thm:rand_pur} addresses the general case of $\dim(\calH_\reg{B})\leq \dim(\calH_\reg{A})$ which the authors left as an open question.

We omit the statement for the wider class of exchangeable operators for clarity since the resulting output is not readily interpretable as a random tensor power of a purification. We also remark that if one is interested in Kraus operators for the random purification channel $\calP^{\mathrm{CPTP}}$, a natural choice is furnished by the projectors onto the eigenspaces of the Choi--Jamio\l{}kowski operator $J_\calR$. Before proving the theorem, we record the following helpful result.
\begin{lemma}[Sufficient condition for CP]\label{lem:mat_mult_cp_condn}
  Let $A\in \mathrm{L}(\calH)$ for some Hilbert space $\calH$ and $\calS$ be a subalgebra $\calS\subseteq \mathrm{L}(\calH)$. The map $f\colon \calS\to\mathrm{L}(\calH)$ with the action $f(X)=A X$ for every $X\in \calS$ is CP if $A\succeq 0$ and $[X,A]=0$ for all $X\in \calS$.
\end{lemma}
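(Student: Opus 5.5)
The plan is to write $A$ through its square root and recognize $f$ as a compression. Since $A\succeq 0$, it has a unique positive semidefinite square root $\sqrt{A}$. This square root is a polynomial in $A$ (take the interpolating polynomial on the spectrum), so anything commuting with $A$ also commutes with $\sqrt{A}$. Hence for every $X\in\calS$ we get
\begin{equation*}
  f(X)=AX=\sqrt{A}\,\sqrt{A}\,X=\sqrt{A}\,X\sqrt{A}.
\end{equation*}
So on $\calS$, the map $f$ agrees with the conjugation map $\Phi(Y)=\sqrt{A}\,Y\sqrt{A}^{\dag}$, which is defined on all of $\mathrm{L}(\calH)$.

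Next I will fix what CP means for a map defined only on the subalgebra $\calS$. The natural reading is that for every $m$ and every positive semidefinite $Z\in \mathrm{L}(\C^m)\otimes\calS$, the operator $(\id_m\otimes f)(Z)$ is positive semidefinite. Write such a $Z=\sum_j E_j\otimes X_j$ with $X_j\in\calS$. By linearity and the identity above,
\begin{equation*}
  (\id_m\otimes f)(Z)=\sum_j E_j\otimes \sqrt{A}X_j\sqrt{A}=(I\otimes\sqrt{A})\,Z\,(I\otimes\sqrt{A}).
\end{equation*}
The right-hand side is a congruence of the positive semidefinite operator $Z$ by a Hermitian matrix, so it is positive semidefinite. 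This means $f$ is the restriction to $\calS$ of $\Phi$, a map with the single Kraus operator $\sqrt{A}$, and is therefore CP.

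The only step needing care is the commutation $[\sqrt{A},X]=0$. I will justify it with the polynomial functional calculus as above. An equivalent route is the spectral projectors of $A$: each is a polynomial in $A$, so each commutes with $X$. Everything else is immediate. I expect no real obstacle; the main point is to state the CP notion on a subalgebra clearly.
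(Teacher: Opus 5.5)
Your proposal is correct and follows the paper's proof. Both use the fact that $\sqrt{A}$ commutes with every $X\in\calS$ to write $f(X)=\sqrt{A}X\sqrt{A}$, so $f$ is the restriction of a single-Kraus-operator map; your polynomial functional calculus justification of $[\sqrt{A},X]=0$ is a bit more explicit than the paper's remark that $\sqrt{A}$ is diagonal in the same basis as $A$.
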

\begin{proof}
  If $A\succeq 0$ then it has a positive semidefinite square-root $\sqrt{A}$ diagonal in the same basis. Thus, for any $X\in \calS$, if $[X,A]=0$ then $[X,\sqrt{A}]=0$ as well and the action of $f$ can be written as $f(X) = \sqrt{A}X\sqrt{A}$, which is manifestly CP.
\end{proof}

\begin{proof}[Proof of \Cref{thm:rand_pur}]
  We first show that the restriction of $\calP$ to exchangeable operators is CP, which implies $\calP^{\mathrm{CP}}=\calP\circ \calT_{\mathfrak{S}_n}$ is CP on the full domain. Note that $\vee^n\mathrm{L}(\calH_\reg{A})\otimes I_{\reg{B}^n}$ is a subalgebra of $\mathrm{L}(\calH_{\reg{A}^n}\otimes \calH_{\reg{B}^n})$. Therefore, by \Cref{lem:mat_mult_cp_condn}, it suffices to show that if $X\in \vee^n\mathrm{L}(\calH_{\reg{A}})$ then $[J_\calR,X\otimes I_{\reg{B}^n}]=0$. This in turn follows from \cref{eq:choi_op_perm_expansion} upon establishing that $F^d_\sigma$ and $P$ commute with such an $X$, for any $\sigma\in \mathfrak{S}_n$. But commutation with a permutation operator is by definition of $X$ being exchangeable, and $[X,P]=0$ is straightforward to show (\Cref{lem:exchangeable_commutes_w_M}). 

  Next, let $P$ denote the projector onto the image of $M$. We compute
  \begin{align*}
    \tr(\calP^{\mathrm{CP}}(X)) = \tr(\calR(\calT_{\mathfrak{S}_n}(X)^T)) = \tr(P \calT_{\mathfrak{S}_n}(X^T)) = \tr(P X^T) = \tr(PX).
  \end{align*}
  The first equality follows from $\calR(\cdot)=\tr_{\reg{A}^n}(\calP((\cdot)^T))$, the second from the definition of $\calR$ and the fact that transpose commutes with $\calT_{\mathfrak{S}_n}$, the third is because $\calT_{\mathfrak{S}_n}$ is self-adjoint and fixes $P$, and the final equality follows from $P^T = P$. Consequently, $\tr(\calP^{\mathrm{CPTP}}(X)) = \tr((P+Q)X) = \tr(X)$, so $\calP^{\mathrm{CPTP}}$ is CPTP.

  It remains to show \cref{eq:random_pur_output}. Since $\rho^{\otimes n}$ commutes with any permutation operator, it is an element of $\vee^n\mathrm{L}(\calH_{\reg{A}})$. Moreover, the permutation twirl $\calT_{\mathfrak{S}_n}$ acts trivially on this subspace, so $\calP(\rho^{\otimes n}) = \calP^\mathrm{CP}(\rho^{\otimes n}) = \calP^{\mathrm{CPTP}}(\rho^{\otimes n})$, where the second equality is by \Cref{lem:supp_of_c_op}. Thus, it suffices to show the final equality in \cref{eq:random_pur_output}. To this end, note that by \Cref{lem:weingarten} if $\rho=YY^\dag$ for some $Y\colon \calH_{\reg{B}}\to\calH_\reg{A}$ then we have
  \begin{align*}
    \E \left[(\ketbra{\bm{\phi}^\rho}{\bm{\phi}^\rho})^{\otimes n}\right] &= \E \left((Y\otimes \bm{V})\Gamma_{\reg{BB}}(Y^\dag\otimes \bm{V}^\dag)\right)^{\otimes n}\\
    &= \sum_{\pi,\sigma\in \mathfrak{S}_n}(G_k^+)_{\pi,\sigma} \tr_{\reg{B}^n}\left\{(Y^{\otimes n}\otimes F^k_{\sigma^{-1}})\Gamma^{\otimes n}_{\reg{BB}}((Y^\dag)^{\otimes n}\otimes I_{\reg{B}^n})\right\}\otimes F_\pi^k\\
    &= \sum_{\pi,\sigma\in \mathfrak{S}_n}(G_k^+)_{\pi,\sigma} Y^{\otimes n}F^k_\sigma (Y^\dag)^{\otimes n}\otimes F_\pi^k\\
    &= \sum_{\pi,\sigma\in \mathfrak{S}_n}(G_k^+)_{\pi,\sigma} \rho^{\otimes n}F^d_\sigma\otimes F_\pi^k\\
    &= \left(\sum_{\pi,\sigma\in \mathfrak{S}_n}(G_k^+)_{\pi,\sigma} PF^d_\sigma\otimes F_\pi^k\right)(\rho^{\otimes n}\otimes I_{\reg{B}^n})\\
    &= \calP(\rho^{\otimes n})
  \end{align*}
  where the third equality makes use of \Cref{lem:max_ent_state_properties}, the fourth equality follows from the fact that $F^k_\sigma(Y^\dag)^{\otimes n}=(Y^\dag)^{\otimes n}F^d_\sigma$, in the second-to-last line we made use of \Cref{lem:supp_of_c_op} to conclude $P\rho^{\otimes n}=\rho^{\otimes n}$ as well as the fact that $\rho^{\otimes n}\in \vee^n\mathrm{L}(\calH_{\reg{A}^n})$, and in the final line we used \cref{eq:choi_op_perm_expansion,eq:random_pur_ext}.
\end{proof}

\section{Efficient implementation from the Schur transform}\label{sec:schur_and_efficient}
\subsection{Representation theory}
In this section, we set up the necessary representation theory ingredients and notations. (See the PhD theses~\cite{Harrow05thesis,wright2016}, or \cite{fulton2013representation,goodman2009symmetry} for standard introductions to these topics.) 
\begin{itemize}
    \item The irreducible representations of the symmetric group $\mathfrak{S}_n$ are indexed by partitions $\lambda\vdash n$ and are written $(\kappa_\lambda, \Sp_\lambda)$. 

    \item The irreducible polynomial representations of the general linear group $\GL(d)$ are indexed by partitions $\lambda$ in which $\ell(\lambda)\leq d$ and are written $(\nu_\lambda^d, V_\lambda^d)$. 
    We also use $\nu_\lambda^d$ as the polynomial extension to evaluate any matrix $M\in \mathrm{L}(\C^d)$ not necessarily invertible.     
\end{itemize}
One may view $\calH_{\reg{A}^n} = (\C^d)^{\otimes n}$ as a representation space for the group $\mathfrak{S}_n\times \mathrm{U}(d)$ via $(\pi, U) \mapsto F_{\pi}^d U^{\otimes n}$. 
Since this representation is unitary, it has a canonical decomposition into a direct sum of orthogonal, isotypic components $(\C^d)^{\otimes n} = \bigoplus_{\lambda}\calV_\lambda$. Here and throughout we let $\Pi_\lambda$ denote the projection onto the $\lambda$th subspace in this direct sum. By Schur--Weyl duality, the labels for the direct sum are partitions $\lambda\vdash n$ of length at most $\ell(\lambda)\leq d$, and the corresponding isotypic component is irreducible and isomorphic to $\Sp_\lambda\otimes V^d_\lambda$. The Schur transform is the unitary transformation that intertwines these isomorphic representation spaces, i.e.,
\begin{align}
  U_\schur^{d,n} F^d_\pi M^{\otimes n}(U^{d,n}_\schur)^\dag = \bigoplus_{\lambda\vdash n\colon \ell(\lambda)\leq d} \kappa_\lambda(\pi)\otimes \nu^d_\lambda(M)\quad \textnormal{for all}\ \pi\in \mathfrak{S}_n,\ M\in \mathrm{L}(\C^d) \,.\label{eq:schur_action}
\end{align}
It is well-known that the Schur transform can be implemented efficiently: 
\begin{theorem}[Efficient implementation of the Schur transform {\cite{burchardt2025highdimensionalquantumschurtransforms}}]\label{thm:eff_schur}
The Schur transform $U_{\schur}^{d,n}$ can be implemented to error $\epsilon$ in diamond norm with gate complexity $\poly(n, \log(d), \log(1/\epsilon))$. 
\end{theorem}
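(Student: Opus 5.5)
The statement is quoted from \cite{burchardt2025highdimensionalquantumschurtransforms}. Here is how I would reconstruct a proof that gives the claimed $\poly(n,\log d,\log(1/\epsilon))$ scaling. The logarithmic dependence on $d$ is the whole difficulty: the classical construction of Bacon, Chuang and Harrow (see \cite{Harrow05thesis}) composes $n$ Clebsch--Gordan transforms for $\mathrm{U}(d)$, and that only gives $\poly(n,d,\log(1/\epsilon))$.

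\emph{Step 1: the $\mathfrak{S}_n$ side.} I would use the ``dual'' route through the symmetric group, which avoids $\mathrm{U}(d)$ Clebsch--Gordan coefficients. Run generalized phase estimation over $\mathfrak{S}_n$:
\begin{enumerate}
\item prepare a uniform superposition $\frac{1}{\sqrt{n!}}\sum_\pi\ket{\pi}$ in an ancilla;
\item apply the controlled permutation $\sum_\pi \ketbra{\pi}{\pi}\otimes F^d_\pi$, which costs $\poly(n,\log d)$ gates since it only permutes $n$ registers of $\lceil\log d\rceil$ qubits;
\item apply the inverse quantum Fourier transform over $\mathfrak{S}_n$, which has $\poly(n)$ gate complexity by Beals' algorithm.
\end{enumerate}
By \cref{eq:schur_action}, the ancilla then holds $\ket{\lambda}$ together with a register carrying $\Sp_\lambda$ in Young's orthogonal basis. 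The $\kappa_\lambda(\pi)$ action is thereby isolated, and what remains in the original register transforms as $\nu^d_\lambda$.

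\emph{Step 2: the $\mathrm{U}(d)$ side.} What is left is a coherent map from the leftover register onto a canonical register for $V^d_\lambda$, for instance Gelfand--Tsetlin patterns, or semistandard tableaux with entries in $[d]$ encoded in $O(n\log d)$ qubits. The key observation is that any standard basis vector $\ket{i_1,\dots,i_n}$ uses at most $n$ distinct symbols. So I would first reversibly compute the sorted set of distinct letters $S\subseteq[d]$, $|S|\le\min(n,d)$, and relabel each $i_j$ by its rank in $S$. This uses only sorting and comparison circuits of size $\poly(n,\log d)$. The set $S$ is a $\mathfrak{S}_n$-invariant classical label, and the relabelling intertwines the restriction of $\mathrm{U}(d)$ to the coordinate subgroup on $S$ with $\mathrm{U}(|S|)$. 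Inside each sector I would then run a Schur transform for local dimension $m=|S|\le n$, with cost $\poly(n,\log(1/\epsilon))$ by Bacon--Chuang--Harrow. Finally I would reversibly map (tableau over $[m]$, $S$) to a semistandard tableau over $[d]$ by substituting letters.

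\emph{Main obstacle.} The delicate point is that this map must actually be unitary and respect the $\mathrm{U}(d)$ action on $V^d_\lambda$, not merely be a bijection of basis labels. The set $S$ is not $\mathrm{U}(d)$-invariant, so the sectors do not directly realize the irreducible decomposition. I would first try to handle this using the branching rule for the chain $\mathrm{U}(m)\subset\mathrm{U}(d)$ together with the weight-space decomposition: a Gelfand--Tsetlin basis vector of $V^d_\lambda$ has a weight supported on some set $S$, and Gelfand--Tsetlin vectors with a fixed weight support correspond to those of $V^{|S|}_\lambda$ for the sub-group indexed by $S$, up to the standard embedding. Verifying that the Gelfand--Tsetlin normalization coefficients agree under this identification, so that the letter substitution is an isometry intertwining the two constructions, is what I expect to require the most care. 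Error accumulation is routine: each of the $\poly(n)$ subroutines is approximated to accuracy $\epsilon/\poly(n)$, and the diamond-norm errors add by the triangle inequality.
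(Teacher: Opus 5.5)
The paper gives no proof of this statement; it imports it from the cited reference. Your reconstruction therefore has to stand on its own, and as written it has a genuine gap in Step~1 and in how Steps~1 and~2 fit together.

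\textbf{Where the gap is.} Generalized phase estimation does not leave the data register carrying only $V^d_\lambda$. Write the input in the Schur basis of the data register as $\sum_{\lambda,a}\ket{\lambda,a}\otimes\ket{w_{\lambda,a}}$. The uniform superposition, the controlled $F^d_\pi$, and the Fourier transform over $\mathfrak{S}_n$ then produce, by Schur orthogonality for Young's orthogonal form,
\[
  \sum_{\lambda}\sum_{i,j}\frac{1}{\sqrt{\dim\Sp_\lambda}}\,\ket{\lambda,i,j}_{\mathrm{anc}}\otimes\ket{\lambda,i}\ket{w_{\lambda,j}} .
\]
So the ancilla holds $\lambda$ and two Young-basis indices. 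The index $j$ carries the input's $\Sp_\lambda$ content, but $i$ is maximally entangled with an $\Sp_\lambda$ factor that still lives in the $n$ data qudits. Coherently mapping the pair (data register, index $i$) onto a canonical $V^d_\lambda$ register is the genuinely nontrivial part of the phase-estimation route, and your proposal never does it. Step~2 acts on standard-basis words of the data register and runs a \emph{full} $m$-dimensional Schur transform. That extracts its own $\lambda$ label and a second Young-basis register, so you end with a duplicated $\lambda$ and a maximally entangled pair of Young-basis registers that are never removed.

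\textbf{How to repair it.} Delete Step~1: Step~2 alone is a complete $\poly(n,\log d,\log(1/\epsilon))$ Schur transform.
\begin{enumerate}
\item Compute $S$ and relabel. This is reversible, since the word is recovered from $S$ and the ranks.
\item Apply Bacon--Chuang--Harrow with fixed local dimension $\min(n,d)$. No control on $m$ is needed: the torus acts on a word by its content, so weight spaces map to weight spaces and the output tableau uses exactly the letters $1,\dots,m$.
\item Substitute letters, and uncompute $S$ as the set of entries of the resulting tableau.
\end{enumerate}

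Your `main obstacle' then disappears without any comparison of Gelfand--Tsetlin coefficients. The composite $T$ is a unitary onto $\bigoplus_\lambda\Sp_\lambda\otimes\C^{\mathrm{SSYT}(\lambda,d)}$ with $TF^d_\pi=(\bigoplus_\lambda\kappa_\lambda(\pi)\otimes I)T$. This holds because relabelling acts letterwise with a permutation-invariant $S$, the $m$-dimensional transform is $\mathfrak{S}_n$-covariant, and substitution touches only the tableau register. Hence $TM^{\otimes n}T^\dag$ commutes with every $\bigoplus_\lambda\kappa_\lambda(\pi)\otimes I$. By \Cref{lem:commuting_implies_blocks} it equals $\bigoplus_\lambda I\otimes\tilde\nu_\lambda(M)$, and $\tilde\nu_\lambda\cong\nu^d_\lambda$ by Schur--Weyl duality. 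That is precisely \cref{eq:schur_action} with $V^d_\lambda$ realized in the basis your circuit outputs, which is all \Cref{alg:dim_red} needs.

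If you want the standard Gelfand--Tsetlin basis specifically, note that its subspaces are cut out by $\mathrm{U}(j)$-isotypic projectors. On words with letters in $S$, these act as $\mathfrak{S}_A$-isotypic projectors on the set $A$ of positions carrying letters $\le j$. After relabelling they are exactly the $\mathrm{U}(|S\cap[j]|)$ projectors, which gives agreement up to tableau-dependent phases.

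Thus repaired, your route is a legitimate and fairly elementary alternative to importing the result: classically compress the alphabet to at most $n$ letters, then invoke the $\poly(n,d,\log(1/\epsilon))$ construction.
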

We sometimes omit the superscript $n$ and simply write $U_{\schur}^d$ when the context is clear.

\subsection{Efficient implementation}\label{sec:efficient}
The change of basis in \cref{eq:schur_action}, known as the Schur basis, elucidates the actions of the objects introduced in \Cref{sec:dim_red}. For example, the following natural algorithm implements the dimension reduction channel $\calR^{\mathrm{TP}}$, which is stated explicitly in \Cref{thm:implementation} below.

\begin{tcolorbox}[colback=white, arc=0pt, boxrule=0.5pt]
  \begin{algorithm}[Random dimension reduction]\label{alg:dim_red}\;\vspace{0.5em}
  \newline
  \textbf{Input:} Quantum state $X\in \mathrm{D}((\C^d)^{\otimes n})$ such that $X = W^{\otimes n} X_0 (W^\dag)^{\otimes n}$ for some $X_0\in \mathrm{D}((\C^k)^{\otimes n})$ and isometry $W\colon \C^k\to \C^d$.\vspace{0.5em}\newline
  \textbf{Output:} $\calR^{\mathrm{TP}}(X)=\E_{\bm{U}} \bm{U}^{\otimes n} X_0 (\bm{U}^\dag)^{\otimes n}$ where $\bm{U}\in \mathrm{U}(k)$ is Haar-random.
  \begin{enumerate}
    \item Apply the Schur transform $U_{\schur}^{d,n}$ to $X$.
    \item Measure the label $\lambda\vdash n$ (weak Schur sampling). If $\ell(\lambda)>k$, output the maximally mixed state (pick a random basis state) and terminate. 
    \item If $\ell(\lambda)\leq k$, preserve the $\Sp_\lambda$ register, discard the $V_\lambda^d$ register, and initialize a fresh $V_\lambda^k$ register in the maximally mixed state.
    \item Apply the inverse Schur transform $(U_{\schur}^{k,n})^\dagger$.
  \end{enumerate}
  \end{algorithm}
\end{tcolorbox}

It follows from \Cref{thm:eff_schur} that this entire algorithm can be implemented to accuracy $\epsilon$ in diamond distance using a quantum circuit of size polynomial in $n$, $\log d$, and $\log(1/\epsilon)$. Hence, the proof of \Cref{thm:rand_joint_dim_red} is complete upon establishing the following result. Here and throughout, we let indexing by partitions denote the corresponding Schur block, e.g., if $X\in\mathrm{L}((\C^d)^{\otimes n})$ then $X_{\lambda\mu}\colon \Sp_\mu\otimes V^d_\mu\to \Sp_\lambda\otimes V^d_\lambda$ is the $(\lambda,\mu)$th block of $U^d_\schur X (U^d_\schur)^\dag$, according to the vector space decomposition in the right-hand side of \cref{eq:schur_action}. We also collect subspaces through $\Pi_{\leq k}\coloneq \sum_{\lambda\vdash n\colon \ell(\lambda)\leq k}\Pi_\lambda$ and $\Pi_{> k}\coloneq \sum_{\lambda\vdash n\colon \ell(\lambda) > k}\Pi_\lambda$.
\begin{theorem}[Dimension reduction implementation]\label{thm:implementation}
  \Cref{alg:dim_red} implements the dimension reduction channel $\calR^{\mathrm{TP}}$ defined in \cref{eq:red_tp}. Namely, for every $X\in \mathrm{L}((\C^d)^{\otimes n})$, one has
  \begin{align*}
    U_\schur^k \calR^{\mathrm{TP}}(X)(U_\schur^k)^\dag &= \tr(\Pi_{>k} X)\frac{I}{k^n} + \bigoplus_{\lambda\vdash n\colon \ell(\lambda)\leq k}\tr_{V^d_\lambda}(X_{\lambda\lambda})\otimes \frac{I_{V^k_\lambda}}{\dim(V_\lambda^k)} \,.
  \end{align*}
\end{theorem}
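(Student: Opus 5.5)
We work block by block in the Schur basis. The main task is to identify $M=M^{(n)}_{k\embed d}$, its support projector $P$, and $Q=I-P$ there.

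\textbf{Step 1: $M$ in the Schur basis.} By \cref{eq:schur_action}, $U^d_\schur (\bm{W}\bm{W}^\dag)^{\otimes n}(U^d_\schur)^\dag=\bigoplus_\lambda I_{\Sp_\lambda}\otimes \nu^d_\lambda(\bm{W}\bm{W}^\dag)$. Averaging over $\bm{W}=\bm{U}W_0$ and using unitary invariance, the $\lambda$-block becomes $I_{\Sp_\lambda}\otimes \E\,\nu^d_\lambda(\bm{U})\nu^d_\lambda(W_0W_0^\dag)\nu^d_\lambda(\bm{U})^\dag$. Since $\nu^d_\lambda$ is irreducible, Schur's lemma gives
\[
M=\bigoplus_\lambda c_\lambda\, I_{\Sp_\lambda}\otimes I_{V^d_\lambda},\qquad c_\lambda=\frac{\tr\,\nu^d_\lambda(W_0W_0^\dag)}{\dim V^d_\lambda}.
\]
Here $\tr\,\nu^d_\lambda(\mathrm{diag}(1^k,0^{d-k}))$ is the Schur polynomial $s_\lambda(1^k)=\dim V^k_\lambda$. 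This equals the number of semistandard tableaux with entries in $[k]$, which is positive iff $\ell(\lambda)\le k$. Hence $P=\Pi_{\le k}$, $Q=\Pi_{>k}$, and $\sqrt{M^+}=\bigoplus_{\ell(\lambda)\le k}c_\lambda^{-1/2}\,I$. This immediately gives the $\tr(\Pi_{>k}X)I/k^n$ term of \cref{eq:red_tp}.

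\textbf{Step 2: evaluating $\calR$.} To avoid leaning on Schur polynomials, one could instead argue directly that $\bm{W}^{\otimes n}$ maps $(\C^k)^{\otimes n}$ into $\Pi_{\le k}$ and that each block $\ell(\lambda)\le k$ is hit. Either route gives
\[
\calR(X)=\sum_{\lambda,\mu}(c_\lambda c_\mu)^{-1/2}\,\E\,(\bm{W}^\dag)^{\otimes n}\Pi_\lambda X\Pi_\mu \bm{W}^{\otimes n}.
\]
Write $\bm{W}=\bm{U}W_0$. The map $W_0^{\otimes n}$ intertwines the $\mathfrak{S}_n$ actions, so in Schur bases it is block diagonal, $\bigoplus_\lambda I_{\Sp_\lambda}\otimes \nu_\lambda(W_0)$, where $\nu_\lambda(W_0)\colon V^k_\lambda\to V^d_\lambda$ is an isometry; this follows from restricting $\GL(d)$ irreps to $\GL(k)$.

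Cross terms $\lambda\neq\mu$ vanish. Integrating $\bm{U}$ turns the $(\lambda,\mu)$ term into $\mathrm{U}(d)$-twirling of an operator from $V^d_\mu$ to $V^d_\lambda$, and this twirl is zero for inequivalent irreps. For the diagonal terms, the twirl gives $\tr_{V^d_\lambda}(X_{\lambda\lambda})\otimes I_{V^d_\lambda}/\dim V^d_\lambda$. Conjugating by $\nu_\lambda(W_0)$ yields $\tr_{V^d_\lambda}(X_{\lambda\lambda})\otimes I_{V^k_\lambda}/\dim V^d_\lambda$. Multiplying by $c_\lambda^{-1}=\dim V^d_\lambda/\dim V^k_\lambda$ produces exactly the claimed normalization $I_{V^k_\lambda}/\dim V^k_\lambda$.

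\textbf{Main obstacle.} The key points are the two uses of representation theory. The first is the block-diagonal form of $W_0^{\otimes n}$ with isometric $\nu_\lambda(W_0)$. I would justify it by functoriality of $\nu_\lambda$ for polynomial representations: $\nu_\lambda(W_0)^\dag\nu_\lambda(W_0)=\nu_\lambda(W_0^\dag W_0)=I$, together with the intertwining of $F_\pi$. The second is the identification $\tr\,\nu^d_\lambda(W_0W_0^\dag)=\dim V^k_\lambda$, which follows from the same functoriality since the trace equals $\tr(\nu_\lambda(W_0)\nu_\lambda(W_0)^\dag)=\tr(\nu^k_\lambda(I_k))$.
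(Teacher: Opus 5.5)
Your proposal is correct and follows essentially the paper's route: you diagonalize $M$ in the Schur basis to get $c_\lambda=\dim V^k_\lambda/\dim V^d_\lambda$ on $\ell(\lambda)\le k$ (so $Q=\Pi_{>k}$), twirl so that only diagonal blocks $\tr_{V^d_\lambda}(X_{\lambda\lambda})\otimes I_{V^d_\lambda}/\dim V^d_\lambda$ survive, then conjugate by the block-isometric $W_0^{\otimes n}$ and rescale. The differences are cosmetic: the paper obtains the coefficients of $M$ by tracing against $\Pi_\lambda$ and lets its twirl lemma remove the cross terms implicitly, whereas you apply Schur's lemma on each $V^d_\lambda$ block (evaluating the character or using functoriality) and kill the $\lambda\neq\mu$ terms explicitly.
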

Before proving the theorem, we make use of the following observations. Each of these is based on Schur's lemma: an operator which commutes with a given group action block-diagonalizes in an orthonormal basis partially labelled by irreducible representations. Namely, it acts within each isotypic space and non-trivially only on the multiplicity space. This also generalizes to non-square intertwining operators, and we record these facts explicitly as \Cref{lem:commuting_implies_blocks} and \Cref{lem:intertwiner_implies_blocks}.
\begin{lemma}[Unitary twirl in the Schur basis]\label{lem:twirl_schur}
  The action of the $n$-fold unitary twirl in the Schur basis is given by
  \begin{align*}
    U^d_\schur\left[\E_{\bm{U}}\bm{U}^{\otimes n} X(\bU^\dag)^{\otimes n}\right](U^d_\schur)^\dag &= \bigoplus_{\lambda\vdash n\colon \ell(\lambda)\leq d} \tr_{V^d_\lambda}(X_{\lambda\lambda})\otimes \frac{I_{V^d_\lambda}}{\dim(V_\lambda^d)}
  \end{align*}
  for every $X\in\mathrm{L}((\C^d)^{\otimes n})$.
\end{lemma}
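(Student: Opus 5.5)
The plan is to work entirely in the Schur basis, using the intertwining relation \cref{eq:schur_action} together with Schur's lemma in the form recorded as \Cref{lem:commuting_implies_blocks}. Write $T(X)\coloneq \E_{\bU}\bU^{\otimes n}X(\bU^\dag)^{\otimes n}$. The argument has two parts: first show that $T(X)$ commutes with the unitary action, which pins down its block form up to one operator per $\lambda$; then identify that operator by a trace computation.

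For the first part, left-invariance of the Haar measure gives $V^{\otimes n}T(X)(V^\dag)^{\otimes n}=T(X)$ for every $V\in\mathrm{U}(d)$. Hence $T(X)$ lies in the commutant $\{V^{\otimes n}\}'$. Conjugating by $U^d_\schur$, the group acts as $\bigoplus_\lambda I_{\Sp_\lambda}\otimes\nu^d_\lambda(V)$. The $\nu^d_\lambda$ restricted to $\mathrm{U}(d)$ are irreducible and pairwise inequivalent for distinct $\lambda$, since polynomial irreducibles are determined by their restriction to the Zariski-dense subgroup $\mathrm{U}(d)$. Schur's lemma then forces $U^d_\schur T(X)(U^d_\schur)^\dag=\bigoplus_\lambda A_\lambda\otimes I_{V^d_\lambda}$ for some operators $A_\lambda\in\mathrm{L}(\Sp_\lambda)$. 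In particular, all off-diagonal blocks $\lambda\neq\mu$ vanish.

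For the second part, we identify $A_\lambda$ by testing against operators of the form $B\otimes I_{V^d_\lambda}$, with $B\in\mathrm{L}(\Sp_\lambda)$ placed in block $\lambda$. Let $Y_B\coloneq (U^d_\schur)^\dag (B\otimes I)(U^d_\schur)$. This operator commutes with every $V^{\otimes n}$, so cyclicity of the trace gives $\tr(Y_B\,T(X))=\E\tr((\bU^\dag)^{\otimes n}Y_B\bU^{\otimes n}X)=\tr(Y_BX)$. Expanding both sides in the Schur basis, the left side equals $\dim(V^d_\lambda)\tr(BA_\lambda)$ and the right side equals $\tr(B\,\tr_{V^d_\lambda}X_{\lambda\lambda})$. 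Since $B$ is arbitrary, we conclude $A_\lambda=\tr_{V^d_\lambda}(X_{\lambda\lambda})/\dim(V^d_\lambda)$, which is the claimed formula.

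The main obstacle is the Schur's lemma step, specifically justifying that the irreducibles $\nu^d_\lambda$ for distinct $\lambda$ remain inequivalent under $\mathrm{U}(d)$, so that the commutant has exactly the block form $\bigoplus_\lambda \mathrm{L}(\Sp_\lambda)\otimes I$ with no cross-terms. I would take this directly from Schur--Weyl duality as stated: the isotypic decomposition is indexed by distinct $\lambda$, which is exactly the content of \Cref{lem:commuting_implies_blocks}. The remaining steps are routine bookkeeping.
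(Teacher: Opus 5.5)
Your proof is correct and follows the paper's two-step structure. First, left-invariance of the Haar measure together with Schur's lemma (\Cref{lem:commuting_implies_blocks}) gives the block form $\bigoplus_\lambda A_\lambda\otimes I_{V^d_\lambda}$; then $A_\lambda$ is pinned down by invariance of a trace under conjugation by $I_{\Sp_\lambda}\otimes\nu^d_\lambda(U)$. The only difference is cosmetic: the paper computes $\tr_{V^d_\lambda}(A_{\lambda\lambda})$ directly using cyclicity of the partial trace (\Cref{lem:partial_trace_invariance}), whereas you pair against test operators $B\otimes I_{V^d_\lambda}$ and use cyclicity of the full trace, which is the weak form of the same identity.
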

\begin{proof}
  Set $A\equiv \E_{\bm{U}}\bm{U}^{\otimes n} X(\bU^\dag)^{\otimes n}$. By left-invariance of the Haar measure, we have that $[A,U^{\otimes n}]=0$ for every $U\in \mathrm{U}(d)$. Hence, by Schur's lemma --- specifically, \Cref{lem:commuting_implies_blocks} --- we must have $U^d_\schur A(U^d_\schur)^\dag = \bigoplus_{\lambda\vdash n\colon \ell(\lambda)\leq d} B_\lambda \otimes I_{V^d_\lambda}$ for some $B_\lambda\in \mathrm{L}(\Sp_\lambda)$ which satisfy 
  \begin{equation*}
        B_\lambda \dim(V^d_\lambda) = \tr_{V_\lambda^d}(A_{\lambda\lambda}) = 
      \E_{\bU}\tr_{V^d_\lambda}((\bU^{\otimes n}X(\bU^\dag)^{\otimes n})_{\lambda\lambda})  \,.
  \end{equation*}
  Next, observe that
  \begin{equation*}
    (U^{\otimes n} X (U^\dag)^{\otimes n})_{\lambda\lambda} = (I_{\Sp_\lambda}\otimes \nu^d_\lambda(U))X_{\lambda\lambda}(I_{\Sp_\lambda}\otimes \nu^d_\lambda(U^\dag))
  \end{equation*}
  for any $U\in\mathrm{U}(d)$. 
  Then, by the cyclicity of partial trace in \Cref{lem:partial_trace_invariance} and the fact that $\nu^d_\lambda(U^\dag) \nu^d_\lambda(U) = I_{V_\lambda^d}$, we have
  \begin{equation*}
      \tr_{V^d_\lambda}\paren*{(I_{\Sp_\lambda}\otimes \nu^d_\lambda(U))X_{\lambda\lambda}(I_{\Sp_\lambda}\otimes \nu^d_\lambda(U^\dag))} = \tr_{V^d_\lambda}(X_{\lambda\lambda})
  \end{equation*}
  Therefore $B_\lambda \dim(V_\lambda^d) = \tr_{V^d_\lambda}(X_{\lambda\lambda})$ for each $\lambda$, which completes the proof.
\end{proof}
\begin{lemma}[Tensor power of an isometry in the Schur basis]\label{lem:isometry_schur}
For any isometry $W: \C^k \to \C^d$, the action of $W^{\otimes n}$ in the Schur basis satisfies
  \begin{align}\label{eq:isometry_schur}
    U^d_\schur W^{\otimes n} (U^k_\schur)^\dag &= \bigoplus_{\lambda\vdash n\colon \ell(\lambda)\leq k} I_{\Sp_\lambda}\otimes \eta_\lambda(W) \,,
  \end{align}
  where $\eta_\lambda(W)\colon V^k_\lambda\to V^d_\lambda$
  and $(\eta_\lambda(W))^\dag \eta_\lambda(W) = I_{V^k_\lambda}$ for every $\lambda\vdash n$ in the direct sum.
\end{lemma}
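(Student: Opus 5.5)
The plan is to treat $W^{\otimes n}\colon(\C^k)^{\otimes n}\to(\C^d)^{\otimes n}$ as an intertwiner and apply Schur's lemma for non-square intertwiners (\Cref{lem:intertwiner_implies_blocks}). The representations involved are those of $\mathfrak{S}_n$ acting by $F^k_\pi$ on the domain and $F^d_\pi$ on the codomain. The proof of \Cref{lem:c-commutes} already records that $W^{\otimes n}F^k_\pi = F^d_\pi W^{\otimes n}$ for every $\pi$, so $W^{\otimes n}$ is an $\mathfrak{S}_n$-intertwiner.

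The first step is block structure. Conjugating by the Schur transforms, $B\coloneq U^d_\schur W^{\otimes n}(U^k_\schur)^\dag$ intertwines $\bigoplus_{\ell(\mu)\le k}\kappa_\mu\otimes I_{V^k_\mu}$ with $\bigoplus_{\ell(\lambda)\le d}\kappa_\lambda\otimes I_{V^d_\lambda}$, using \cref{eq:schur_action} with $M=I$. The multiplicity spaces $V_\lambda$ carry trivial $\mathfrak{S}_n$ action, and the $\kappa_\lambda$ are pairwise inequivalent irreducibles. Schur's lemma therefore forces the $(\lambda,\mu)$ block to vanish for $\lambda\ne\mu$, and each diagonal block to have the form $I_{\Sp_\lambda}\otimes \eta_\lambda(W)$ for some linear map $\eta_\lambda(W)\colon V^k_\lambda\to V^d_\lambda$. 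Only $\lambda$ with $\ell(\lambda)\le k$ appear, because the domain contains no other labels. This gives \cref{eq:isometry_schur}.

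The second step is the isometry property. Since $W^\dag W=I_k$, we have $(W^{\otimes n})^\dag W^{\otimes n}=I$, so $B^\dag B = U^k_\schur (U^k_\schur)^\dag = I$. Computed blockwise, this reads $\bigoplus_\lambda I_{\Sp_\lambda}\otimes \eta_\lambda(W)^\dag\eta_\lambda(W)=\bigoplus_\lambda I_{\Sp_\lambda}\otimes I_{V^k_\lambda}$. Comparing the $\lambda$ blocks gives $\eta_\lambda(W)^\dag\eta_\lambda(W)=I_{V^k_\lambda}$, since $A\otimes B$ with $A=I\ne 0$ determines $B$.

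The only delicate point is the Schur's lemma step for non-square intertwiners between different ambient spaces. I rely on \Cref{lem:intertwiner_implies_blocks} for this. As a sanity check one could also note that $\eta_\lambda(W)$ can be identified with $\nu^d_\lambda$ evaluated on $W$ padded to a $d\times d$ matrix, but that identification is not needed for the statement.
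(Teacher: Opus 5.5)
Your proposal is correct and follows the paper's proof essentially verbatim. You view $W^{\otimes n}$ as an intertwiner of the $\mathfrak{S}_n$ actions $F^k_\pi$ and $F^d_\pi$, apply \Cref{lem:intertwiner_implies_blocks} to obtain the block form, and deduce $\eta_\lambda(W)^\dag\eta_\lambda(W)=I_{V^k_\lambda}$ from $W^\dag W=I$; your blockwise comparison of $B^\dag B=I$ simply spells out the last step, which the paper states in one line.
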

\begin{proof}
  First, note that $V^k_\lambda$ and $V^d_\lambda$ are the multiplicity spaces for $(\C^k)^{\otimes n}$ and  $(\C^d)^{\otimes n}$ viewed as $\mathfrak{S}_n$-representations, respectively. Since $W^{\otimes n}$ intertwines the symmetric group actions on these spaces, Schur's lemma --- specifically, \Cref{lem:intertwiner_implies_blocks} --- implies the left-hand side of \cref{eq:isometry_schur} must be of the form $\bigoplus_{\lambda\vdash n\colon \ell(\lambda)\leq k} I_{\Sp_\lambda}\otimes \eta_\lambda(W)$ for some $\eta_\lambda(W)\colon V^k_\lambda\to V^d_\lambda$. The fact that $(\eta_\lambda(W))^\dag \eta_\lambda(W) = I_{V^k_\lambda}$ for every $\lambda\vdash n$ follows from $W^\dag W=I$.
\end{proof}

\begin{lemma}[Grassmannian moment in the Schur basis]\label{lem:moment_schur}
Recall that $M^{(n)}_{k\embed d}\coloneq \E_{\bm{W}} (\bm{W}\bm{W}^\dag)^{\otimes n}$ where $\bm{W}\colon \C^k\to\C^d$ is a Haar-random isometry. It holds that
\begin{equation*}
    M^{(n)}_{k\embed d} = \sum_{\lambda\vdash n\colon \ell(\lambda)\leq k}\frac{\dim(V^k_\lambda)}{\dim(V^d_\lambda)}\Pi_\lambda \,.
\end{equation*}
Consequently, the projectors onto the image and kernel of $M^{(n)}_{k\embed d}$ are $\Pi_{\leq k}$ and $\Pi_{> k}$, respectively.
\end{lemma}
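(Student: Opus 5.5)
Write $\bm{W}=\bm{U}W_0$ with $\bm{U}$ Haar on $\mathrm{U}(d)$. Then $M^{(n)}_{k\embed d}=\E_{\bm{U}}\bm{U}^{\otimes n}(W_0W_0^\dag)^{\otimes n}(\bm{U}^\dag)^{\otimes n}$ is the $n$-fold unitary twirl of the operator $X\coloneq (W_0W_0^\dag)^{\otimes n}=W_0^{\otimes n}(W_0^\dag)^{\otimes n}$. The plan is to apply \Cref{lem:twirl_schur} to $X$. That reduces the problem to computing the partial traces $\tr_{V^d_\lambda}(X_{\lambda\lambda})$, and these can be read off from \Cref{lem:isometry_schur}.

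\textbf{Schur blocks of $X$.} By \Cref{lem:isometry_schur},
\[
U^d_\schur W_0^{\otimes n}(U^k_\schur)^\dag=\bigoplus_{\ell(\lambda)\le k} I_{\Sp_\lambda}\otimes\eta_\lambda(W_0).
\]
Multiplying this by its adjoint gives
\[
U^d_\schur X (U^d_\schur)^\dag=\bigoplus_{\ell(\lambda)\le k} I_{\Sp_\lambda}\otimes \eta_\lambda(W_0)\eta_\lambda(W_0)^\dag .
\]
So $X_{\lambda\lambda}=I_{\Sp_\lambda}\otimes \eta_\lambda\eta_\lambda^\dag$ when $\ell(\lambda)\le k$, and $X_{\lambda\lambda}=0$ when $\ell(\lambda)>k$, since those blocks receive no contribution. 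Because $\eta_\lambda^\dag\eta_\lambda=I_{V^k_\lambda}$, the operator $\eta_\lambda\eta_\lambda^\dag$ is a projector of rank $\dim V^k_\lambda$. Hence
\[
\tr_{V^d_\lambda}(X_{\lambda\lambda})=\dim(V^k_\lambda)\, I_{\Sp_\lambda}.
\]

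\textbf{Applying the twirl.} \Cref{lem:twirl_schur} now gives
\[
U^d_\schur M (U^d_\schur)^\dag=\bigoplus_{\ell(\lambda)\le k}\frac{\dim V^k_\lambda}{\dim V^d_\lambda}\, I_{\Sp_\lambda}\otimes I_{V^d_\lambda}.
\]
Conjugating back by $U^d_\schur$ turns each $I_{\Sp_\lambda}\otimes I_{V^d_\lambda}$ into $\Pi_\lambda$, which is the claimed formula. For $\ell(\lambda)\le k$ the coefficients are strictly positive, because $\dim V^k_\lambda\ge 1$. The isotypic projectors $\Pi_\lambda$ are mutually orthogonal and sum to the identity over all $\lambda$ with $\ell(\lambda)\le d$. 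It follows that the image of $M$ is the range of $\Pi_{\le k}$ and the kernel is the range of $\Pi_{>k}$.

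\textbf{Main obstacle.} The only delicate point is bookkeeping in the block computation: making sure that the blocks of $U^d_\schur W_0^{\otimes n}(U^k_\schur)^\dag$ for $\lambda$ with $k<\ell(\lambda)\le d$ really vanish, so that $X$ has no support there. This follows because $W_0^{\otimes n}$ intertwines the $\mathfrak{S}_n$-actions, and $(\C^k)^{\otimes n}$ contains no $\Sp_\lambda$-isotypic component when $\ell(\lambda)>k$. By Schur's lemma (\Cref{lem:intertwiner_implies_blocks}), an intertwiner cannot map into an isotypic component whose irrep does not appear in the source.
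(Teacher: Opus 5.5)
Your proof is correct, but it takes a slightly different route from the paper's.

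\textbf{The paper's route.} It first observes that $M$ commutes with every $F^d_\pi$ (\Cref{lem:c-commutes}) and with every $U^{\otimes n}$. Schur's lemma for the joint $\mathfrak{S}_n\times\mathrm{U}(d)$ action then forces $M=\sum_\lambda a_\lambda\Pi_\lambda$ before any computation. It fixes each scalar $a_\lambda$ by taking the trace against $\Pi_\lambda$. That step needs $[\Pi_\lambda,U^{\otimes n}]=0$ and cyclicity to strip off the Haar average, and only then invokes \Cref{lem:isometry_schur}.

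\textbf{Your route.} You recognize $M$ as the unitary twirl of $(W_0W_0^\dag)^{\otimes n}$ and reuse \Cref{lem:twirl_schur}, so only the $\mathrm{U}(d)$-covariance is used as input. The identity factor on $\Sp_\lambda$, which encodes permutation invariance, is not imposed a priori. It falls out of the explicit block form $X_{\lambda\lambda}=I_{\Sp_\lambda}\otimes\eta_\lambda(W_0)\eta_\lambda(W_0)^\dag$.

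\textbf{Comparison.} Given the lemmas already in place, your version is a bit more economical. It needs neither \Cref{lem:c-commutes} nor the commutation of $\Pi_\lambda$ with $U^{\otimes n}$, and it computes the full operator blockwise rather than just one trace per block. The paper's version instead makes transparent, from symmetry alone, why $M$ must be a combination of isotypic projectors. The remaining work is then a single scalar trace identity. Your handling of the blocks with $k<\ell(\lambda)\le d$ is also correct, and is exactly what the direct-sum range in \Cref{lem:isometry_schur} encodes.
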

\begin{proof}
  The operator $M\equiv M^{(n)}_{k\embed d}$ commutes with $F^d_\pi$ for any $\pi\in \mathfrak{S}_n$ (\Cref{lem:c-commutes}) and, by left-invariance of the Haar measure, it also commutes $U^{\otimes n}$ for any unitary operator $U\in \mathrm{U}(d)$. Hence, by Schur's lemma - specifically, \Cref{lem:commuting_implies_blocks} --- it must be a linear combination of projectors onto the isotypic components, i.e., of the form $M=\sum_{\lambda\vdash n\colon \ell(\lambda)\leq d}a_\lambda\Pi_\lambda$ for some $a_\lambda\in \C$. Using the definition of $M$, these coefficients satisfy
  \begin{align}\label{eq:lambdath_comp_of_M}
    \E_{\bU}\tr(\Pi_\lambda (\bU W_0W_0^\dag\bU^\dag)^{\otimes n}) &= a_\lambda\dim(\Sp_\lambda)\dim(V^d_\lambda)
  \end{align}
  for each $\lambda$. Note that $\Pi_\lambda$ commutes with $U^{\otimes n}$ for any unitary $U$ since it is a linear combination of permutation operators (see, e.g., \cite[Theorem~8]{Serre1977}). Hence, by the cyclic property of trace, the left-hand side of \cref{eq:lambdath_comp_of_M} above equals
  \begin{align*}
    \tr((W_0^\dag)^{\otimes n}\Pi_\lambda W_0^{\otimes n}) = \tr\left(I_{\Sp_\lambda}\otimes \eta_\lambda(W_0)^\dag\eta_\lambda(W_0)\right) = \dim(\Sp_\lambda)\dim(V^k_\lambda)
  \end{align*}
  if $\ell(\lambda)\leq k$, and is equal to zero otherwise, where we have made use of \Cref{lem:isometry_schur}. Substituting into \cref{eq:lambdath_comp_of_M} yields $a_\lambda = \dim(V^k_\lambda)/\dim(V^d_\lambda)$ for each $\lambda\vdash n$ such that $\ell(\lambda)\leq k$. 
\end{proof}

\begin{proof}[Proof of \Cref{thm:implementation}]
  Recall from \cref{eq:red_tp} that the action of $\calR^{\mathrm{TP}}$ is given by
  \begin{align*}
    \calR^{\mathrm{TP}}(X) = \tr(Q X)\frac{I}{k^n} + \E_{\bm{U}}\left[(W_0^\dag\bm{U}^\dag)^{\otimes n} \sqrt{M^+}X\sqrt{M^+}(\bm{U}W_0)^{\otimes n}\right]
  \end{align*}
  for every $X\in\mathrm{L}((\C^d)^{\otimes n})$ where $M = M^{(n)}_{k\embed d}$, $\bm{U}\in\mathrm{U}(d)$ is Haar-random, $W_0\colon \C^k\to\C^d$ is the natural embedding of $\C^k$ in $\C^d$, and $Q$ is the projector onto the kernel of $M$. By \Cref{lem:moment_schur}, we have
  $
    \sqrt{M^+} = \sum_{\lambda\colon \ell(\lambda)\leq k}\sqrt{\frac{\dim(V^d_\lambda)}{\dim(V^k_\lambda)}}\Pi_\lambda
  $
  from which we obtain
  \begin{align*}
    \sqrt{M^+}X\sqrt{M^+} &= \sum_{\lambda,\mu}a_\lambda a_\mu\Pi_\lambda X\Pi_\mu
  \end{align*}
  where $a_\lambda\coloneq \sqrt{\frac{\dim(V^d_\lambda)}{\dim(V^k_\lambda)}}$ for each $\lambda\vdash n$ such that $\ell(\lambda)\leq k$ and the sum is over all such pairs of such partitions. Hence, the $(\lambda,\lambda)$th block of this matrix is equal to $a_\lambda^2 X_{\lambda\lambda}$ if $\ell(\lambda)\leq k$ and zero otherwise, from which we obtain
  \begin{align*}
    U_\schur^d\left(\E_{\bm{U}}(\bU^\dag)^{\otimes n} \sqrt{M^+}X\sqrt{M^+}\bU^{\otimes n}\right)(U_\schur^d)^\dag &= \bigoplus_{\lambda\vdash n\colon \ell(\lambda)\leq k} \tr_{V^d_\lambda}(a_\lambda^2X_{\lambda\lambda})\otimes \frac{I_{V^d_\lambda}}{\dim(V^d_\lambda)}\\
    &= \bigoplus_{\lambda\vdash n\colon \ell(\lambda)\leq k} \tr_{V^d_\lambda}(X_{\lambda\lambda})\otimes \frac{I_{V^d_\lambda}}{\dim(V^k_\lambda)}
  \end{align*}
  The proof is complete by \Cref{lem:isometry_schur} and making use of the fact that $Q=\Pi_{>k}$, by \Cref{lem:moment_schur}.
\end{proof}
\subsection{Alternative proof of dimension reduction}
The tools established in \Cref{sec:efficient} allow us to give a second proof of \cref{eq:rand_joint_dim_red}. Let $X=W^{\otimes n} X_0 (W^\dag)^{\otimes n}$ for some $X\in\mathrm{L}((\C^d)^{\otimes n})$, $X_0\in\mathrm{L}((\C^k)^{\otimes n})$, and fixed isometry $W\colon \C^k\to\C^d$, as in \Cref{thm:rand_joint_dim_red}. By \Cref{lem:twirl_schur} we have
\begin{align*}
  U_\schur^k\left[\E_{\bU\sim \mathrm{U}(k)} \bU^{\otimes n} X_0 (\bU^\dag)^{\otimes n}\right](U_\schur^k)^\dag &= \bigoplus_{\lambda\vdash n\colon \ell(\lambda)\leq k} \tr_{V^k_\lambda}((X_0)_{\lambda\lambda})\otimes \frac{I_{V^k_\lambda}}{\dim(V^k_\lambda)}.
\end{align*}
Therefore, \cref{eq:rand_joint_dim_red} follows from \Cref{thm:implementation} upon establishing that, for such an operator $X$, (i) $\tr(\Pi_{> k} X)=0$ and (ii) $\tr_{V^d_\lambda}(X_{\lambda\lambda})=\tr_{V^k_\lambda}((X_0)_{\lambda\lambda})$. For the first item, let $W\equiv U W_0$ for some unitary $U\in \mathrm{U}(\C^d)$. Then
\begin{align*}
  \tr(\Pi_{>k}X) &= \sum_{\lambda\vdash n\colon \ell(\lambda) > k}\tr(\Pi_\lambda (UW_0)^{\otimes n} X_0 (W_0^\dag U^\dag)^{\otimes n})\\
  &= \sum_{\lambda\vdash n\colon \ell(\lambda)> k}\tr(\Pi_\lambda W_0^{\otimes n} X_0 (W_0^\dag)^{\otimes n}) \tag{$[\Pi_\lambda, U^{\otimes n}] = 0$}\\
  &= 0
\end{align*}
where the last equality is because $\Pi_\lambda W_0^{\otimes n}=0$ for any $\lambda\vdash n$ such that $\ell(\lambda)>k$, which is clear upon expressing the action of this operator in the Schur basis using \Cref{lem:isometry_schur}. Similarly, for the second item, we have
\begin{align*}
  \tr_{V^d_\lambda}(X_{\lambda\lambda}) &= \tr_{V^d_\lambda}\left([I_{\Sp_\lambda}\otimes \nu_\lambda^d(U)\eta_\lambda(W_0)](X_0)_{\lambda\lambda}[I_{\Sp_\lambda}\otimes \eta_\lambda(W_0)^\dag\nu^d_\lambda(U)^\dag]\right)\\
  &= \tr_{V^k_\lambda}\left([I_{\Sp_\lambda}\otimes \eta_\lambda(W_0)^\dag\nu_\lambda^d(U)^\dag\nu_\lambda^d(U)\eta_\lambda(W_0)](X_0)_{\lambda\lambda}\right)\\
  &= \tr_{V^k_\lambda}((X_0)_{\lambda\lambda})
\end{align*}
where the second line follows from \Cref{lem:partial_trace_invariance} and the third line follows from the fact that $\nu^d_\lambda(U)^\dag\nu^d_\lambda(U)=I$ since $\nu^d_\lambda(\cdot)$ is a unitary representation together with \Cref{lem:isometry_schur}.

\section{Impossibility of non-exchangeable random purification}\label{sec:nogo}
For operators of the form $\rho^{\otimes n}$ where $\rho\in\mathrm{D}(\C^d)$ has rank at most $k$, the action of the dimension reduction map can be recovered by first applying the random purification channel and then tracing out the data, leaving behind the purifying registers. (The transpose does not affect the output in this case.) On the other hand, as we have seen, dimension reduction can be applied to a broader class of states, including those of the form $\rho^{\otimes a}\otimes \sigma^{\otimes b}$ for $\rho\neq \sigma$.

In this section, we show that, by contrast, states of this form cannot be randomly purified; at least, by one natural definition of the action one might desire from such a non-exchangeable random purification. This rules out the interpretation of dimension reduction as simply discarding purifying registers in a random purification, in all generality.
\begin{theorem}[No random joint purification channel]\label{thm:nogo}
    Let $d\ge 2$, $n\ge 1$, and $2\le k\le d$. Define
    \begin{equation*}
        \mathcal H_{\data}:=(\C^d)^{\otimes n}\otimes (\C^d)^{\otimes n} \,,
        \qquad
        \mathcal H_{\purify}:=(\C^k)^{\otimes n}\otimes (\C^k)^{\otimes n} \,.
    \end{equation*}
    Then there does not exist a quantum channel $\calP_{d,k,n}: \mathrm{D}(\calH_{\data}) \to \mathrm{D}(\calH_{\data} \otimes \calH_{\purify})$ with the following property:  
    for any $\rho, \sigma \in \mathrm{D}(\C^d)$ and $\widetilde{\rho},\widetilde{\sigma}\in \mathrm{D}(\C^k)$ such that $W\widetilde{\rho} \, W^\dagger = \rho$ and $W\widetilde{\sigma} W^\dagger = \sigma$ for some isometry $W:\C^k\to \C^d$, there is a probability measure $\nu$ on $\mathrm{U}(k)$ such that
    \begin{equation*}
        \Tr_{\purify}\paren*{\calP_{d,k,n}(\rho^{\otimes n} \otimes \sigma^{\otimes n})} = \rho^{\otimes n} \otimes \sigma^{\otimes n} \,,
    \end{equation*}
    and
    \begin{equation}\label{eq:marginal_purify}
        \Tr_{\data}\paren*{\calP_{d,k,n}(\rho^{\otimes n} \otimes \sigma^{\otimes n})} = \mathbb{E}_{\bU\sim \nu} \left[ (\bU \widetilde{\rho} \, \bU^\dagger)^{\otimes n} \otimes (\bU \widetilde{\sigma} \, \bU^\dagger)^{\otimes n}\right] \,.
    \end{equation}
\end{theorem}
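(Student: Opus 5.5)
The plan is to prove the contrapositive: assuming such a channel $\calP\equiv\calP_{d,k,n}$ exists, we derive a contradiction using only \emph{pure} inputs. The mechanism is the "no information gain without disturbance" principle. The first marginal condition says $\calP$ leaves every pure input $\ket{\Psi}=\ket{\psi}^{\otimes n}\otimes\ket{\phi}^{\otimes n}$ undisturbed. This forces the purifying output to be independent of the input whenever inputs overlap. The second marginal condition, \cref{eq:marginal_purify}, instead forces that output to record the overlap $|\braket{\psi|\phi}|^2$.

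\emph{Step 1 (pure inputs give product outputs).} Let $\rho=\ketbra{\psi}{\psi}$ and $\sigma=\ketbra{\phi}{\phi}$ be pure, with unit vectors $\ket\psi,\ket\phi\in\C^d$. Both lie in a common subspace of dimension at most $2\le k$, so an isometry $W$ as in the statement exists. Set $\ket{\Psi}=\ket\psi^{\otimes n}\ket\phi^{\otimes n}$. Since $\tr_{\purify}\calP(\Psi)=\Psi$ is pure, the output is a product $\Psi\otimes\tau_{\psi,\phi}$. Here, by \cref{eq:marginal_purify},
\begin{equation*}
\tau_{\psi,\phi}=\E_{\bU\sim\nu}\big[(\bU\tilde\psi\bU^\dag)^{\otimes n}\otimes(\bU\tilde\phi\bU^\dag)^{\otimes n}\big],
\end{equation*}
where $\tilde\psi,\tilde\phi$ are pure states on $\C^k$ with $|\braket{\tilde\psi|\tilde\phi}|=|\braket{\psi|\phi}|$.

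\emph{Step 2 (the purifying output is constant).} Take a Stinespring isometry $V$ for $\calP$ with environment $\reg{E}$. Purity of the data marginal gives $V\ket{\Psi}=\ket{\Psi}\otimes\ket{\xi_\Psi}$ for some unit vector $\ket{\xi_\Psi}$ on $\calH_{\purify}\otimes\calH_{\reg{E}}$, with $\tr_{\reg E}\ketbra{\xi_\Psi}{\xi_\Psi}=\tau_{\psi,\phi}$. For two such inputs, $V^\dag V=I$ gives
\begin{equation*}
\braket{\Psi|\Psi'}=\braket{\Psi|\Psi'}\braket{\xi_\Psi|\xi_{\Psi'}}.
\end{equation*}
Hence $\braket{\Psi|\Psi'}\neq0$ implies $\braket{\xi_\Psi|\xi_{\Psi'}}=1$. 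By Cauchy--Schwarz this means $\ket{\xi_\Psi}=\ket{\xi_{\Psi'}}$, so $\tau_{\psi,\phi}=\tau_{\psi',\phi'}$. Now chain the pairs
\begin{equation*}
(e_0,e_0)\to(\ket{+},e_0)\to(e_1,e_0),
\end{equation*}
with $\ket{+}=(e_0+e_1)/\sqrt2$. Consecutive inputs have overlap $2^{-n/2}\neq0$. Therefore $\tau_{e_0,e_0}=\tau_{e_1,e_0}$.

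\emph{Step 3 (the purifying output detects the overlap).} Let $S$ be the swap operator exchanging the first $\C^k$ factor of each of the two blocks of $\calH_{\purify}$. For any product $\ket{a}^{\otimes n}\ket{b}^{\otimes n}$ we have $\tr(S\,\cdot)=|\braket{a|b}|^2$. Since $|\braket{\bU\tilde\psi|\bU\tilde\phi}|=|\braket{\tilde\psi|\tilde\phi}|$ for every $\bU$, this gives $\tr(S\tau_{\psi,\phi})=|\braket{\psi|\phi}|^2$ for every measure $\nu$. So $\tr(S\tau_{e_0,e_0})=1$ while $\tr(S\tau_{e_1,e_0})=0$, contradicting Step 2. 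The orthogonal pair $e_0\perp e_1$ is embedded into $\C^k$ using $k\ge2$, and $d\ge2$ supplies $e_1$ in $\C^d$.

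The main obstacle is Step 2. It works only because the measure $\nu$ is allowed to depend on the input, which is exactly why we need an invariant, the overlap measured by $S$, that is independent of $\nu$. The delicate points are that the output for a pure input is exactly a product, and that the Stinespring argument needs nonzero overlaps. Both are handled by choosing the chain of pure-state inputs above.
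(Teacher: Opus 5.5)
Your proposal is correct and follows the paper's argument. A Stinespring dilation together with purity of the data marginal forces identical purifying outputs on non-orthogonal inputs, and a swap observable whose expectation is independent of $\nu$ then gives the contradiction. The differences are cosmetic: the paper uses the single pair $(\ket{0},\ket{0})$ versus $(\ket{0},\ket{+})$ with the full product $S=\bigotimes_j \mathrm{SWAP}_{A_jB_j}$, getting $1$ versus $2^{-n}$, so your two-step chain to an orthogonal pair and your single-factor swap are valid but not needed.
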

\begin{proof}
  Assume for contradiction that such a channel $\calP_{d,k,n}$ exists. We first construct two non-orthogonal pure input states $\ket{\Psi_1}, \ket{\Psi_2}\in (\C^d)^{\otimes 2n}$ whose reduced states on the purifying register of the two outputs are equal.
    
    Fix orthonormal vectors $\ket{0},\ket{1}\in \C^k$, and let $\ket{+} = \frac{1}{\sqrt{2}} (\ket{0} + \ket{1})$. 
    Choose any isometry $W: \C^k \to \C^d$, and define
    \begin{equation*}
        \widetilde{\rho}_1 \coloneqq \ketbra{0}{0} \,, \qquad \widetilde{\sigma}_1 \coloneqq \ketbra{0}{0} \,, \qquad \widetilde{\rho}_2 \coloneqq \ketbra{0}{0} \,, \qquad \widetilde{\sigma}_2 \coloneqq \ketbra{+}{+} \,.
    \end{equation*}
    For each $i=1,2$, define $\rho_i \coloneqq W \widetilde{\rho}_i W^\dagger$ and $\sigma_i \coloneqq W \widetilde{\sigma}_i W^\dagger$. 
    We now have two pure input states
    \begin{equation*}
        \Psi_i = \ketbra{\Psi_i}{\Psi_i} = \rho_i^{\otimes n} \otimes \sigma_i^{\otimes n}
    \end{equation*}
    to the quantum channel $\calP_{d,k,n}$ for which $\braket{\Psi_1|\Psi_2} = (\braket{0|0}\cdot\braket{0|+})^n \neq 0$. 

    Let $V: \calH_{\data} \to \calH_{\data} \otimes \calH_{\purify} \otimes \calH_{\env}$ be a Stinespring isometry for $\calP_{d,k,n}$. 
    Then 
    \begin{equation*}
        \Tr_{\env,\purify}\paren{V \proj{\Psi_i} V^\dagger} = \Tr_{\purify}\paren*{\calP_{d,k,n}(\proj{\Psi_i})} = \proj{\Psi_i} 
    \end{equation*}
    is a pure state for each $i=1,2$. So there exists a unit vector $\ket{\chi_i} \in \calH_\purify \otimes \calH_{\env}$ such that
    \begin{equation*}
        V\ket{\Psi_i} = \ket{\Psi_i} \otimes \ket{\chi_i} \,.
    \end{equation*}
    Because $V$ is an isometry, then
    \begin{equation*}
        \braket{\Psi_1|\Psi_2} = \braket{\Psi_1|V^\dagger V |\Psi_2} = \braket{\Psi_1|\Psi_2} \cdot \braket{\chi_1|\chi_2} \,.
    \end{equation*}
    Because $\braket{\Psi_1|\Psi_2} \neq 0$, then $\ket{\chi_1} = \ket{\chi_2}$. 
    Since
    \begin{equation*}
        \Tr_{\data}\paren*{\calP_{d,k,n}(\proj{\Psi_i})} = 
        \Tr_{\data,\env}\paren{V \proj{\Psi_i} V^\dagger} = \Tr_{\env}\paren{\proj{\chi_i}} \,,
    \end{equation*}
    then
    \begin{equation*}
        \Tr_{\data}\paren*{\calP_{d,k,n}(\proj{\Psi_1})} = \Tr_{\data}\paren*{\calP_{d,k,n}(\proj{\Psi_2})} \,.
    \end{equation*}
    By \cref{eq:marginal_purify}, there exists probability measures $\nu_1, \nu_2$ on $\mathrm{U}(k)$ such that 
    \begin{equation}\label{eq:nogo_same_marginal}
        \mathbb{E}_{\bU\sim \nu_1} \left[ (\bU \widetilde{\rho}_1 \bU^\dagger)^{\otimes n} \otimes (\bU \widetilde{\sigma}_1 \bU^\dagger)^{\otimes n}\right] = \mathbb{E}_{\bU\sim \nu_2} \left[ (\bU \widetilde{\rho}_2 \bU^\dagger)^{\otimes n} \otimes (\bU \widetilde{\sigma}_2 \bU^\dagger)^{\otimes n}\right]
    \end{equation}
    
    We now show that this is impossible. 
    Write $\calH_\purify = A^n B^n$, and let $S \coloneqq \bigotimes_{j=1}^n \SWAP_{A_jB_j}$. Then for any states $\tau, \eta\in \mathrm{D}(\C^k)$, we have that
    \begin{equation*}
        \Tr \paren*{S \cdot \tau^{\otimes n} \otimes \eta^{\otimes n}} = \paren*{\Tr(\tau \eta)}^n \,.
    \end{equation*}
    Therefore, for any probability measure $\nu$ on $\mathrm{U}(k)$, we have that
    \begin{equation*}
        \Tr \paren*{S \cdot \E_{\bU\sim \nu} \left[ (\bU\tau \bU^\dagger)^{\otimes n} \otimes (\bU\eta \bU^\dagger)^{\otimes n} \right]} = \E_{\bU\sim \nu} \paren*{\Tr(\tau \eta)}^n = \paren*{\Tr(\tau \eta)}^n \,,
    \end{equation*}
    which is independent of $\nu$. By \cref{eq:nogo_same_marginal}, we must have $(\Tr(\widetilde{\rho}_1 \widetilde{\sigma}_1))^n = (\Tr(\widetilde{\rho}_2 \widetilde{\sigma}_2))^n$. However this is false because $\Tr(\widetilde{\rho}_1 \widetilde{\sigma}_1) = 1$ and $\Tr(\widetilde{\rho}_2 \widetilde{\sigma}_2) = \frac12$. This completes our proof by contradiction. 
\end{proof}

\section*{Acknowledgments}

We thank Aram Harrow for helpful feedback on an early draft of this work. XT thanks John Wright for helpful discussion.

AL is supported by the NSF (PHY-2325080) and the Simons Foundation (MP-SIP-00001553, AWH).
XT is supported by the U.S. Department of Energy, Office of Science, National Quantum Information Science Research Centers, Co-design Center for Quantum Advantage (C2QA) under contract number DE-SC0012704.

\printbibliography
\appendix
\section{Optimal quantum tomography without representation theory}\label{sec:tomography}
In this section, we give an elementary description of a sample-optimal protocol for mixed state tomography of rank-$r$ quantum states. It is known from prior work~\cite{haah2016sample,odonnell2016efficient,odonnell2017efficientii,pelecanos2025mixedstatetomographyreduces} that the sample complexity of this task is $\Theta(dr/\delta)$ where $d$ is the dimension and $\delta$ is the accuracy in infidelity (one minus the fidelity). The lower bound comes from information-theoretic arguments, and the upper bound is obtained by analyzing algorithms using Schur polynomials. An optimal bound in terms of the success probability as well was recently provided in Ref.~\cite{pelecanos2025mixedstatetomographyreduces}. The argument here recovers the tight upper bound without any reference to the Schur transform, being essentially basis-free. This means in particular that there is no hope of characterizing the gate complexity in this way. Nevertheless, the description suffices to recover the optimal sample complexity in a succinct manner.

The argument uses a reduction to pure state tomography via random purifications, as described in Ref.~\cite{pelecanos2025mixedstatetomographyreduces}. In fact, when combined with the other prior work~\cite{girardi2026random}, one may already infer the $O(d^2/\delta)$ upper bound without representation-theoretic tools. Our observation is to recover the rank-dependent bound from this line of reasoning.

The argument is: Given $n$ copies of an unknown rank-$r$ density matrix $\rho\in \mathrm{D}(\C^d)$, apply the random purification channel $\calP^{\mathrm{CPTP}}$ defined in \cref{eq:tp_rand_pur} to $\rho^{\otimes n}$, with $k$ set to $r$. This yields $n$ identical copies of a random purification $\ket{\bm{\phi}^{\rho}}\in \C^d\otimes \C^r$ of $\rho$ which satisfies $\tr_2(\ketbra{\bm{\phi}^{\rho}}{\bm{\phi}^\rho})=\rho$ with certainty. Recall that the action of $\calP^{\mathrm{CPTP}}$ on exchangeable operators can be derived from an elementary analysis of random isometries applied to entangled states, which we give in \Cref{sec:acorn}. Hence, it suffices to show that tomography of $d$-dimensional \emph{pure} states to within $\delta$ accuracy in infidelity can be accomplished using $n=O(d/\delta)$ copies, and then set $d\to dr$. For completeness we reproduce this well-known result, which has appeared in multiple prior works~\cite{hayashi1998asymptotic,harrow2013churchsymmetricsubspace,pelecanos2025mixedstatetomographyreduces}. We let $\Pi^n_{\textnormal{sym}}\colon (\C^d)^{\otimes n}\to \vee^n(\C^d)$ denote the projector onto the symmetric subspace of $(\C^d)^{\otimes n}$, and set $d[n]\coloneq \binom{d+n-1}{n}$. We also let $\mathrm{d}u$ denote the Haar measure on the unit sphere in $\C^d$.
\begin{theorem}[Sample-optimal pure state tomography]
  Let $\ket{\psi}\in \C^d$ be an unknown pure state. With $n=O(d/\delta)$, measuring $\ket{\psi}^{\otimes n}$ using the POVM $\{d[n]\ketbra{u}{u}^{\otimes n}\mathrm{d} u\}_u$ gives an outcome $\bm{u}$ such that, with probability at least $9/10$, the fidelity between $\ket{\bm{u}}$ and $\ket{\psi}$ is at least $1-O(\delta)$.
\end{theorem}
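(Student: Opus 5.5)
First I would check that the POVM is valid, then compute the outcome density and bound the expected infidelity. Since $\ket{\psi}^{\otimes n}$ lies in $\vee^n(\C^d)$, the relevant resolution of identity is
\begin{equation*}
  d[n]\int \ketbra{u}{u}^{\otimes n}\,\mathrm{d}u = \Pi^n_{\textnormal{sym}} .
\end{equation*}
This can be shown without representation theory. The left-hand side is supported on the symmetric subspace and commutes with every $U^{\otimes n}$, by unitary invariance of $\mathrm{d}u$. By the Weingarten-type description (\Cref{lem:weingarten}), it is a combination of permutation operators. Restricted to the symmetric subspace, every permutation acts as the identity, so the operator is a scalar times $\Pi^n_{\textnormal{sym}}$. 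Taking traces fixes the scalar, since $\tr \ketbra{u}{u}^{\otimes n}=1$ and $\tr \Pi^n_{\textnormal{sym}} = d[n]$. On inputs in the symmetric subspace the POVM is therefore complete; if one insists on a POVM on all of $(\C^d)^{\otimes n}$, one adds the element $I-\Pi^n_{\textnormal{sym}}$, which the input never triggers.

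Next, the outcome density is $p(u)=d[n]\,|\braket{u|\psi}|^{2n}$ with respect to $\mathrm{d}u$. Write $t=|\braket{u|\psi}|^2$ for the fidelity of the outcome. The key fact is that, for Haar-random $u$ on the sphere, $t$ is distributed as $\mathrm{Beta}(1,d-1)$, with density $(d-1)(1-t)^{d-2}$. This can be seen by writing a Gaussian vector's squared coordinate as a ratio of independent exponential/Gamma variables, or by citing the standard fact. The distribution of $t$ under the measurement is then
\begin{equation*}
  q(t)\propto t^{n}(1-t)^{d-2},
\end{equation*}
that is, $\mathrm{Beta}(n+1,d-1)$. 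Its normalization agrees with $d[n]$, because $(d-1)B(n+1,d-1)=1/d[n]$, which is a useful consistency check.

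The expected infidelity is then
\begin{equation*}
  \E[1-t] = \frac{d-1}{n+d}.
\end{equation*}
By Markov's inequality, $1-t\le 10(d-1)/(n+d)$ with probability at least $9/10$. Choosing $n=C d/\delta$ for a constant $C$ makes this at most $O(\delta)$, which gives the claim.

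The main obstacle is justifying the resolution of identity and the Beta law for $|\braket{u|\psi}|^2$ in an elementary way. For the former I would use the Weingarten/commutant argument already in the paper, restricted to the symmetric subspace, rather than Schur's lemma. For the latter I would rely on the Gaussian representation of the Haar-random unit vector, avoiding representation theory altogether.
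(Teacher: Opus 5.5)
Your proof is correct. It follows the same outline as the paper: compute the expected infidelity $\E[1-t]=(d-1)/(n+d)$, then apply Markov's inequality. The difference is how the mean is obtained.

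The paper never identifies the law of the outcome fidelity. It writes $\E[t]=d[n]\int |\braket{u|\psi}|^{2(n+1)}\,\mathrm{d}u$ and evaluates it with the moment identity $\E\ketbra{\bm v}{\bm v}^{\otimes m}=\Pi^m_{\textnormal{sym}}/d[m]$ at order $m=n+1$. Since $\ket{\psi}^{\otimes(n+1)}\in\vee^{n+1}(\C^d)$, this gives $d[n]/d[n+1]=(n+1)/(n+d)$ in one line.

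You instead derive the $\mathrm{Beta}(1,d-1)$ law of $|\braket{u|\psi}|^2$ from a Gaussian representation and tilt it to $\mathrm{Beta}(n+1,d-1)$. The step you flag as the main obstacle can be bypassed with a tool you already have. Your commutant/Weingarten argument for the resolution of identity works for every $m$, not just $m=n$. Applying it at $m=n+1$ gives the mean directly, with no Gaussians or Beta integrals, and that is exactly the paper's proof.

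Your route does buy two things in exchange:
\begin{itemize}
\item It gives the full distribution of the outcome fidelity. That supports tail bounds much sharper than Markov, which is relevant to the high-probability refinement the paper only cites.
\item Your explicit check that the POVM is complete on $\vee^n(\C^d)$, padded by $I-\Pi^n_{\textnormal{sym}}$ if needed, fills in a point the paper leaves implicit.
\end{itemize}
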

\begin{proof}
Let $\ket{\bm{v}}\in \C^d$ denote a Haar-random state and compute
\begin{align}
  \E |\langle \bm{u}|\psi\rangle|^2&=\int \tr(d[n]\ketbra{u}{u}^{\otimes n}\cdot \ketbra{\psi}{\psi}^{\otimes n}) |\langle u|\psi\rangle|^2 \mathrm{d}u\nonumber\\
  &= d[n] \tr\left(\E\ketbra{\bm{v}}{\bm{v}}^{\otimes n+1}\cdot \ketbra{\psi}{\psi}^{\otimes n+1}\right)\nonumber\\
  &= \frac{d[n]}{d[n+1]}\tr\left(\Pi^{n+1}_{\textnormal{sym}}\cdot \ketbra{\psi}{\psi}^{\otimes n+1}\right)\nonumber\\
  &= \frac{d[n]}{d[n+1]}\label{eq:pure_state_tomog_bound}
\end{align}
where the third line follows from the identity $\E\ketbra{\bm{v}}{\bm{v}}^{\otimes n}=\Pi^n_{\textnormal{sym}}/d[n]$ (see, e.g., \cite[Proposition~6]{harrow2013churchsymmetricsubspace}) and the final line follows since $\ket{\psi}^{\otimes n+1}\in \vee^{n+1}(\C^d)$. With $n=O(d/\delta)$, the right-hand side of \cref{eq:pure_state_tomog_bound} is at least $1-O(\delta)$.
This completes the proof using Markov's inequality.
\end{proof}
The improvement of this statement to a high probability bound is given as Proposition~5.1 in~\cite{pelecanos2025mixedstatetomographyreduces}.

\section{Proof of the improved sample complexity upper bounds in \texorpdfstring{\Cref{tab:bounds}}{Table}}\label{sec:continuity}

In this appendix, we prove the improved sample complexity upper bounds stated in \Cref{tab:bounds}. The strategy is as follows: we first recall the sample-optimal full state tomography guarantees in fidelity (\Cref{thm:tomography_samples}), then establish conversion lemmas (\Cref{lem:conversion}) that translate tomography accuracy into estimation accuracy for symmetric properties such as the trace distance, fidelity, and quantum Tsallis relative entropy. The proof of the bounds in \Cref{tab:bounds} then follows by combining these conversion lemmas with the random dimension reduction result (\Cref{cor:rho_sigma_cor}).

Denote by $\norm{A}_p$ the Schatten-$p$ norm of a matrix $A$. 
\begin{definition}[A collection of symmetric properties]
    For any $\rho, \sigma\in \mathrm{D}(\C^d)$, we recall the definitions of the following symmetric properties between $\rho$ and $\sigma$: 
    \begin{itemize}
        \item (Trace distance) $D_{\mathrm{tr}}(\rho, \sigma) = \frac12 \norm{\rho - \sigma}_1$
        \item (Fidelity) $F(\rho, \sigma) = \norm{\sqrt{\rho}\sqrt{\sigma}}_1^2$. We also use the square root fidelity $\sqrt{F}(\rho, \sigma) = \sqrt{F(\rho, \sigma)}$. 
        \item (Quantum Hellinger distance) $D_{\mathrm{H}}(\rho, \sigma) = \sqrt{2(1 - \tr(\sqrt{\rho}\sqrt{\sigma}))}$
        \item (Quantum $\alpha$-Tsallis relative entropy) $D^{\mathrm{Tsa}}_{\alpha}(\rho \Vert \sigma) = \frac{1}{1-\alpha}(1 - \tr(\rho^{\alpha} \sigma^{1-\alpha}))$
    \end{itemize}
\end{definition}

\begin{theorem}[State tomography in fidelity, {\cite[Theorem 1.3]{pelecanos2025mixedstatetomographyreduces}}]\label{thm:tomography_samples}
For any unknown state $\rho \in \mathrm{D}(\C^d)$, there is an algorithm that outputs a classical description $\widehat{\rho}$ with probability $99\%$ such that $F(\rho, \widehat{\rho} \, ) \geq 1- \delta$ using $O(d^2/\delta)$ samples of $\rho$.
\end{theorem}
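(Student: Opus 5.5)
The plan is to derive \Cref{thm:tomography_samples} from the appendix's pure-state result together with random purification (\Cref{thm:rand_pur}), taking $k=d$ so that no rank assumption is needed. Given $n$ copies of $\rho\in\mathrm{D}(\C^d)$, I will apply $\calP^{\mathrm{CPTP}}$ with $\calH_{\reg{B}}=\C^d$. By \Cref{thm:rand_pur}, the output is $\E[(\ketbra{\bm{\phi}^\rho}{\bm{\phi}^\rho})^{\otimes n}]$, where each $\ket{\bm{\phi}^\rho}\in\C^d\otimes\C^d$ is a purification of $\rho$ with certainty. Running the pure-state tomography procedure (the POVM $\{D[n]\ketbra{u}{u}^{\otimes n}\mathrm{d}u\}$ in dimension $D=d^2$) then produces, conditioned on each realization $\bm{V}$, a unit vector $\ket{\bm{u}}$ with $|\langle \bm{u}|\bm{\phi}^\rho\rangle|^2\ge 1-O(\delta)$, provided $n=O(d^2/\delta)$. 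For the estimate I take $\widehat{\rho}=\tr_2\ketbra{\bm{u}}{\bm{u}}$.

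The accuracy claim comes from monotonicity of fidelity under partial trace: $F(\rho,\widehat\rho)\ge F(\ketbra{\bm{\phi}^\rho}{\bm{\phi}^\rho},\ketbra{\bm u}{\bm u})=|\langle\bm u|\bm\phi^\rho\rangle|^2$. Since the mixture over $\bm V$ is just a classical mixture of product inputs, the guarantee conditioned on each $\bm V$ carries over to the average.

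The main obstacle is the success probability. The appendix's pure-state theorem delivers only $9/10$ through Markov's inequality applied to $\E|\langle\bm u|\psi\rangle|^2\ge 1-O(\delta)$. Markov gives $\Pr[1-|\langle\bm u|\psi\rangle|^2>c\delta]\le O(1)/c$, so choosing the constant $c$ large (for instance $c=100$ times the hidden constant) pushes the failure probability to $1\%$ while keeping the infidelity $O(\delta)$. Rescaling $\delta$ by that constant keeps $n=O(d^2/\delta)$.

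Concretely, \cref{eq:pure_state_tomog_bound} gives $1-\frac{D[n]}{D[n+1]}=\frac{D-1}{n+D}\le D/n$. Taking $n=100D/\delta'$ makes the expected infidelity at most $\delta'/100$, so Markov yields infidelity at most $\delta'$ with probability $99\%$. Setting $\delta'=\delta$ and $D=d^2$ gives the $O(d^2/\delta)$ bound.

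Alternatively, one could simply cite \cite[Theorem~1.3]{pelecanos2025mixedstatetomographyreduces}, which gives a sharper rank-dependent bound, $O(dr/\delta)$; specializing $r\le d$ recovers this statement.
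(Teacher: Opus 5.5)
Your proposal is correct and follows the route the paper sketches in Appendix A: random purification, then pure-state tomography of the purification in dimension $dk$, specialized here to $k=d$. The statement itself is otherwise handled in the paper only by citation to Pelecanos et al. You also make explicit the partial-trace/monotonicity-of-fidelity step, and you get the $99\%$ success probability from Markov's inequality with $n=100d^2/\delta$, using $1-D[n]/D[n+1]=(D-1)/(n+D)$, rather than deferring to the cited high-probability result; this is valid because the required confidence is a fixed constant.
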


Before proving the continuity bounds in \Cref{lem:conversion}, we record an auxiliary estimate that converts fidelity guarantees into control of the overlap $\tr(\rho^\alpha\sigma^{1-\alpha})$, which is the quantity appearing in the quantum $\alpha$-Tsallis relative entropy.

\begin{lemma}\label{lem:alpha-overlap-fidelity-continuity}
Let $0<\alpha\leq \frac12$. For any quantum states $\rho,\widehat{\rho},\sigma,\widehat{\sigma}$, we have
\begin{equation*}
\abs*{
    \tr(\rho^\alpha\sigma^{1-\alpha})
    -
    \tr(\widehat{\rho}^{\,\alpha}
        \widehat{\sigma}^{\,1-\alpha})
}\leq 2^{\alpha}(1-F(\rho, \widehat{\rho} \, ))^{\alpha} + 3\sqrt{2} \sqrt{1 - F(\sigma, \widehat{\sigma})} \,.
\end{equation*}
\end{lemma}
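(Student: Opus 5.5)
The plan is to split the difference with the triangle inequality so that only one state is perturbed in each term:
\begin{equation*}
\abs*{\tr(\rho^\alpha\sigma^{1-\alpha})-\tr(\widehat{\rho}^{\,\alpha}\widehat{\sigma}^{\,1-\alpha})}
\leq \abs*{\tr\big((\rho^\alpha-\widehat{\rho}^{\,\alpha})\sigma^{1-\alpha}\big)} + \abs*{\tr\big(\widehat{\rho}^{\,\alpha}(\sigma^{1-\alpha}-\widehat{\sigma}^{\,1-\alpha})\big)} \,.
\end{equation*}
I would then bound each term by H\"older's inequality in Schatten norms, always with exponents $1/\alpha$ and $1/(1-\alpha)$. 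This choice is convenient because $\norm{\tau^\alpha}_{1/\alpha}=\norm{\tau^{1-\alpha}}_{1/(1-\alpha)}=1$ for any state $\tau$. Throughout, I convert fidelity into Hilbert--Schmidt distance of square roots via
\begin{equation*}
\norm{\sqrt{\tau}-\sqrt{\widehat{\tau}}}_2^2 = 2\big(1-\tr(\sqrt{\tau}\sqrt{\widehat{\tau}})\big) \leq 2\big(1-F(\tau,\widehat{\tau})\big) \,.
\end{equation*}
The inequality here is the standard bound $\tr(\sqrt{\tau}\sqrt{\widehat{\tau}})\geq \norm{\sqrt{\tau}\sqrt{\widehat{\tau}}}_1^2$, which follows from $\tr(\sqrt{\tau}\sqrt{\widehat\tau}) \geq \norm{\sqrt\tau\sqrt{\widehat\tau}}_1^2$ via Powers--St\o rmer/concavity. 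I would double-check this fact, or replace it by the trivial chain $1-\sqrt{F}\le 1-F$ applied after bounding with $\norm{\sqrt\tau\sqrt{\widehat\tau}}_1$ in a suitable unitary frame.

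\emph{First term.} H\"older gives the bound $\norm{\rho^\alpha-\widehat{\rho}^{\,\alpha}}_{1/\alpha}$. Write $\rho^\alpha=(\sqrt{\rho})^{2\alpha}$ with $2\alpha\leq 1$. I then invoke the operator-monotone power inequality (Ando / Birman--Koplienko--Solomyak): for $0<\gamma\le 1$ and $p\ge1$, $\norm{A^\gamma-B^\gamma}_{p/\gamma}\le\norm{A-B}_p^\gamma$ for $A,B\succeq 0$. Applying it with $A=\sqrt\rho$, $B=\sqrt{\widehat\rho}$, $\gamma=2\alpha$, $p=2$ gives
\begin{equation*}
\norm{\rho^\alpha-\widehat{\rho}^{\,\alpha}}_{1/\alpha}\le\norm{\sqrt\rho-\sqrt{\widehat\rho}}_2^{2\alpha}\le 2^\alpha(1-F(\rho,\widehat\rho))^\alpha \,,
\end{equation*}
which is exactly the first term of the claim.

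\emph{Second term.} Here $\beta\coloneq 1-\alpha\in[1/2,1)$, so the same trick would only give a power $(1-F)^{\beta}$ or worse. Since we want a bound linear in $\sqrt{1-F}$, I would instead telescope through the square roots. Set $\gamma=\beta-\tfrac12\in[0,\tfrac12)$ and write
\begin{equation*}
\sigma^\beta-\widehat\sigma^{\,\beta}=\sigma^{\gamma/2}(\sqrt\sigma-\sqrt{\widehat\sigma})\sigma^{\gamma/2}+\big(\text{terms in which } \sigma^{\gamma}\text{-type factors are swapped for }\widehat\sigma^{\,\gamma}\text{-type factors next to }\sqrt{\widehat\sigma}\big) \,,
\end{equation*}
arranging the split so that every term contains exactly one factor $\sqrt\sigma-\sqrt{\widehat\sigma}$ in $2$-norm. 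The remaining factors are powers of $\widehat\rho$, $\sigma$, $\widehat\sigma$ whose Schatten exponents sum with $1/2$ to $1$. Each such term is then at most $\norm{\sqrt\sigma-\sqrt{\widehat\sigma}}_2\le\sqrt2\sqrt{1-F(\sigma,\widehat\sigma)}$ by generalized H\"older. Three terms would give the constant $3\sqrt2$. One natural arrangement uses the identity $\widehat\sigma^{\,\beta}=\widehat\sigma^{\,\gamma}\sqrt{\widehat\sigma}$ together with cyclicity of the trace, moving the $\widehat\rho^{\,\alpha}$ factor to the middle.

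\emph{Main obstacle.} I expect the second term to be the hard step. A naive telescoping leaves one term with a factor $\sigma^\gamma-\widehat\sigma^{\,\gamma}$ and no square-root difference, and that term is only controlled as $(1-F)^{\gamma}$. The first thing I would try is to avoid ever differencing $\sigma^\gamma$. To do so, I would use a fixed $\sigma^\gamma$ on one side and a fixed $\widehat\sigma^{\,\gamma}$ on the other, and absorb the mismatch through the identity $\tr(X\sigma^\gamma\sqrt\sigma)-\tr(X\widehat\sigma^{\,\gamma}\sqrt{\widehat\sigma})$ with $X=\widehat\rho^{\,\alpha}$. This is rewritten as a sum of terms of the form $\tr(X\cdot\text{power}\cdot(\sqrt\sigma-\sqrt{\widehat\sigma})\cdot\text{power})$. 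If that fails, I would fall back on the concavity of $(\rho,\sigma)\mapsto\tr(\rho^\alpha\sigma^{1-\alpha})$ (Lieb) and Uhlmann purifications. Choosing purifications $\ket{\psi_\sigma},\ket{\psi_{\widehat\sigma}}$ with $\norm{\psi_\sigma-\psi_{\widehat\sigma}}_1\le 2\sqrt{1-F}$ and expressing the overlap as an expectation in these vectors would also yield a bound linear in $\sqrt{1-F(\sigma,\widehat\sigma)}$.
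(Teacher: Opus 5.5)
\textbf{There is a genuine gap in the second term.} Your treatment of the first term matches the paper exactly: H\"older with exponents $1/\alpha$ and $1/(1-\alpha)$, then the small-power inequality $\|X^r-Y^r\|_{2/r}\le\|X-Y\|_2^r$ with $r=2\alpha$, $X=\sqrt\rho$, $Y=\sqrt{\widehat\rho}$. The conversion $\|\sqrt\tau-\sqrt\omega\|_2^2\le 2(1-F(\tau,\omega))$ is also the paper's. The problem is the second term. It is the only source of the $\sqrt{1-F(\sigma,\widehat\sigma)}$ dependence and of the constant $3\sqrt2$, and you give no argument for it. The splitting of $\sigma^{1-\alpha}-\widehat\sigma^{\,1-\alpha}$ into terms ``each containing exactly one factor $\sqrt\sigma-\sqrt{\widehat\sigma}$'' is never written down, and the number three is read off from the target constant.

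As you note yourself, for $\alpha<\tfrac12$ telescoping $\sigma^{1-\alpha}=\sigma^{\gamma}\sqrt\sigma$ with $\gamma=\tfrac12-\alpha$ leaves a term $\tr\bigl(\widehat\rho^{\,\alpha}(\sigma^{\gamma}-\widehat\sigma^{\,\gamma})\sqrt{\widehat\sigma}\bigr)$. H\"older and the small-power inequality control this only by $\|\sqrt\sigma-\sqrt{\widehat\sigma}\|_2^{1-2\alpha}$, i.e.\ $(1-F)^{1/2-\alpha}$. The expression $\tr(X\sigma^\gamma\sqrt\sigma)-\tr(X\widehat\sigma^{\,\gamma}\sqrt{\widehat\sigma})$ that you propose to ``rewrite'' is just the quantity to be bounded, not an identity that removes this term.

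Moving $\widehat\rho^{\,\alpha}$ around by cyclicity cannot help. By H\"older duality, the supremum of the second term over all states $\widehat\rho$ lies between $\tfrac12\|\sigma^{1-\alpha}-\widehat\sigma^{\,1-\alpha}\|_{1/(1-\alpha)}$ and $\|\sigma^{1-\alpha}-\widehat\sigma^{\,1-\alpha}\|_{1/(1-\alpha)}$. So what you actually need is a Lipschitz estimate for $t\mapsto t^{r}$, with $r=2(1-\alpha)\in[1,2)$, from $S_2$ to $S_{2/r}$ on positive operators. That exponent lies outside the range $0<r\le 1$ of the operator-monotone (Ando/BKS) inequality you used for the first term. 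Your fallback via Lieb concavity and Uhlmann purifications is also not an argument. The map $\omega\mapsto\tr(\widehat\rho^{\,\alpha}\omega^{1-\alpha})$ is nonlinear, so it is not an expectation value in a purification, and concavity or data processing only gives one-sided inequalities.

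\textbf{The missing ingredient} is the large-power estimate of Ricard that the paper quotes: for $X,Y\succeq 0$ and $1<r<2$,
$\|X^r-Y^r\|_{2/r}\le 3\|X-Y\|_2\max\{\|X\|_2,\|Y\|_2\}^{r-1}$.
Take $r=2(1-\alpha)$, $X=\sqrt\sigma$, $Y=\sqrt{\widehat\sigma}$, and use $\|\sqrt\sigma\|_2=\|\sqrt{\widehat\sigma}\|_2=1$. Right after H\"older this gives $\|\sigma^{1-\alpha}-\widehat\sigma^{\,1-\alpha}\|_{1/(1-\alpha)}\le 3\|\sqrt\sigma-\sqrt{\widehat\sigma}\|_2\le 3\sqrt2\sqrt{1-F(\sigma,\widehat\sigma)}$; the case $\alpha=\tfrac12$ is trivial. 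This is a nontrivial Schatten-class estimate that you would have to prove or cite, and your sketch does not supply it.

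\textbf{A smaller point on the fidelity conversion.} Your justification of $F(\tau,\omega)\le\tr(\sqrt\tau\sqrt\omega)$ is circular as written, and Powers--St\o rmer is not the relevant tool. The paper gets it in one line from H\"older:
$\|\sqrt\tau\sqrt\omega\|_1=\|\tau^{1/4}(\tau^{1/4}\omega^{1/4})\omega^{1/4}\|_1\le\|\tau^{1/4}\|_4\|\tau^{1/4}\omega^{1/4}\|_2\|\omega^{1/4}\|_4=\sqrt{\tr(\sqrt\tau\sqrt\omega)}$.
Your suggested ``unitary frame'' alternative would not plug in. The power inequalities need positive arguments, and $\sqrt{\widehat\tau}\,U$ is not positive.
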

\begin{proof}
By adding and subtracting
$\tr(\widehat{\rho}^{\,\alpha}\sigma^{1-\alpha})$
and using the triangle inequality, we obtain
\begin{align*}
\abs*{\tr(\rho^\alpha\sigma^{1-\alpha}) - \tr(\widehat{\rho}^{\,\alpha}
    \widehat{\sigma}^{\,1-\alpha})}
\leq
\abs*{\tr\paren*{
(\rho^\alpha-\widehat{\rho}^{\,\alpha})
\sigma^{1-\alpha}}}
+
\abs*{\tr\paren*{
\widehat{\rho}^{\,\alpha}
(\sigma^{1-\alpha}
-\widehat{\sigma}^{\,1-\alpha})}} \,.
\end{align*}
By H\"older's inequality for Schatten norms,
\begin{align*}
\abs*{
\tr\paren*{
(\rho^\alpha-\widehat{\rho}^{\,\alpha})
\sigma^{1-\alpha}}
}
&\leq
\norm*{\rho^\alpha-\widehat{\rho}^{\,\alpha}}_{1/\alpha}
\norm*{\sigma^{1-\alpha}}_{1/(1-\alpha)}
\\
&=
\norm*{\rho^\alpha-\widehat{\rho}^{\,\alpha}}_{1/\alpha}
\tag{$\norm*{\sigma^p}_{1/p} = \tr(\sigma)=1$} \,. 
\end{align*}
Similarly, we have
\begin{align*}
\abs*{\tr\paren*{\widehat{\rho}^{\,\alpha}
(\sigma^{1-\alpha}
-\widehat{\sigma}^{\,1-\alpha})}} \leq
\norm*{\sigma^{1-\alpha}-\widehat{\sigma}^{1-\alpha}}_{1/(1-\alpha)} \,.
\end{align*}
We use the following standard power estimates for positive semidefinite operators, obtained from
\cite[Lemmas~2.1 and~2.2]{Ricard2015}:
\begin{align}
\norm{X^r-Y^r}_{2/r}
&\leq
\norm{X-Y}_2^r,
&&0<r\leq1,
\label{eq:small-power}
\\
\norm{X^r-Y^r}_{2/r}
&\leq
3\norm{X-Y}_2 \cdot 
\max\!\left\{
    \norm{X}_2,\norm{Y}_2
\right\}^{r-1},
&&1<r<2.
\label{eq:large-power}
\end{align}
Then using \cref{eq:small-power} with $r=2\alpha$, $X = \sqrt{\rho}$, and $Y = \sqrt{\widehat{\rho} \,}$, we have
\begin{equation*}
    \norm*{\rho^\alpha-\widehat{\rho}^{\,\alpha}}_{1/\alpha} \leq \norm[\big]{\sqrt{\rho} - \sqrt{\widehat{\rho} \,} }_2^{2\alpha} \,.
\end{equation*}
Using \cref{eq:large-power} with $r=2(1-\alpha)$, $X = \sqrt{\sigma}$, and $Y = \sqrt{\widehat{\sigma} \,}$, we have
\begin{equation*}
    \norm*{\sigma^{1-\alpha}-\widehat{\sigma}^{1-\alpha}}_{1/(1-\alpha)} \leq 3 \norm[\big]{\sqrt{\sigma} - \sqrt{\widehat{\sigma}}}_2 \,.
\end{equation*}
We now bound $\norm{\sqrt{\rho} - \sqrt{\widehat{\rho} \,} }_2$ and $\norm{\sqrt{\sigma} - \sqrt{\widehat{\sigma}} }_2$.
For any quantum states $\tau, \omega$, 
\begin{align*}
\sqrt{F(\tau,\omega)}
&=
\norm{\sqrt{\tau}\sqrt{\omega}}_1
\\
&=
\norm{
    \tau^{1/4} \cdot 
    \bigl(\tau^{1/4}\omega^{1/4}\bigr) \cdot 
    \omega^{1/4}
}_1
\\
&\leq
\norm{\tau^{1/4}}_4 \cdot 
\norm{\tau^{1/4}\omega^{1/4}}_2 \cdot
\norm{\omega^{1/4}}_4 \tag{H\"older's inequality for Schatten norms}
\\
&=
\sqrt{\tr(\sqrt{\tau}\sqrt{\omega})}.
\end{align*}
Consequently, 
\begin{equation*}
\norm{\sqrt{\tau}-\sqrt{\omega}}_2^2 = \tr\!\left((\sqrt{\tau}-\sqrt{\omega})^2\right) = 2-2\tr(\sqrt{\tau}\sqrt{\omega}) \leq
2\bigl(1-F(\tau,\omega)\bigr).
\end{equation*}
Plugging them back in completes the proof. 
\end{proof}

We now collect the fidelity-to-property continuity bounds used in the applications below.

\begin{lemma}[Continuity bounds for symmetric properties]\label{lem:conversion}
    Let $0\leq \delta \leq \frac12$. For any $\rho, \widehat{\rho}, \sigma, \widehat{\sigma} \in \mathrm{D}(\C^d)$ such that $F(\rho, \widehat{\rho} \, )\geq 1 - \delta$ and $F(\sigma, \widehat{\sigma} \, )\geq 1 - \delta$, we have
    \begin{enumerate}
        \item\label{eq:convert_trace}
            $\abs*{D_{\tr}(\rho,\sigma) - D_{\tr}(\widehat{\rho}, \widehat{\sigma})} \leq  2\sqrt{\delta}$ ,
        \item\label{eq:convert_fidelity}
        $\frac12\abs*{F(\rho, \sigma) - F(\widehat{\rho}, \widehat{\sigma})} \leq \abs[\big]{\sqrt{F}(\rho, \sigma) - \sqrt{F}(\widehat{\rho}, \widehat{\sigma})} \leq 2\sqrt{\delta}$ , 
        \item\label{eq:convert_tsa}
        $\abs*{D_{\alpha}^{\mathrm{Tsa}}(\rho\Vert \sigma) - D_{\alpha}^{\mathrm{Tsa}}(\widehat{\rho} \, \Vert \widehat{\sigma})} \leq \frac{4(2\delta)^{\alpha}}{1-\alpha}$ for any $0< \alpha \leq 1/2$ .
    \end{enumerate}
\end{lemma}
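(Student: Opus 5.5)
We prove the three items separately. In each case we first pass from fidelity to a distance through a standard inequality, and then invoke the triangle inequality. Throughout, fix $\delta\le 1/2$ with $F(\rho,\widehat\rho)\ge 1-\delta$ and $F(\sigma,\widehat\sigma)\ge1-\delta$.

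\textbf{Trace distance.} By the Fuchs--van de Graaf inequality, $D_{\tr}(\rho,\widehat\rho)\le\sqrt{1-F(\rho,\widehat\rho)}\le\sqrt\delta$, and the same bound holds for $\sigma,\widehat\sigma$. Since $D_{\tr}$ is a metric, the triangle inequality gives
\[
|D_{\tr}(\rho,\sigma)-D_{\tr}(\widehat\rho,\widehat\sigma)|\le D_{\tr}(\rho,\widehat\rho)+D_{\tr}(\sigma,\widehat\sigma)\le 2\sqrt\delta .
\]

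\textbf{Fidelity.} The first inequality is elementary. Since $0\le\sqrt F\le1$,
\[
|F(\rho,\sigma)-F(\widehat\rho,\widehat\sigma)|=\bigl|\sqrt F(\rho,\sigma)-\sqrt F(\widehat\rho,\widehat\sigma)\bigr|\cdot\bigl(\sqrt F(\rho,\sigma)+\sqrt F(\widehat\rho,\widehat\sigma)\bigr)\le 2\bigl|\sqrt F(\rho,\sigma)-\sqrt F(\widehat\rho,\widehat\sigma)\bigr| .
\]
For the second inequality we use the purified distance $P(\tau,\omega)=\sqrt{1-F(\tau,\omega)}$, which is a metric. One checks that $|\sqrt{F}(\rho,\sigma)-\sqrt{F}(\widehat\rho,\widehat\sigma)|\le P(\rho,\widehat\rho)+P(\sigma,\widehat\sigma)$. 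The cleanest route is through the Bures angle $A(\tau,\omega)=\arccos\sqrt F(\tau,\omega)$, which is a metric. Because $\cos$ is $1$-Lipschitz,
\[
\bigl|\sqrt F(\rho,\sigma)-\sqrt F(\widehat\rho,\widehat\sigma)\bigr|\le |A(\rho,\sigma)-A(\widehat\rho,\widehat\sigma)|\le A(\rho,\widehat\rho)+A(\sigma,\widehat\sigma).
\]
Each term satisfies $A\le \arccos\sqrt{1-\delta}=\arcsin\sqrt\delta$. On $\delta\le1/2$ we have $\arcsin x\le (\pi/2\cdot\sqrt2)\,x$ for $x\le 1/\sqrt2$, which would give a constant slightly above $2$. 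We therefore switch to the Lipschitz estimate $|\cos a-\cos b|\le|\sin|$-weighted bounds, or, more simply, argue directly: $\bigl|\sqrt F(\rho,\sigma)-\sqrt F(\widehat\rho,\sigma)\bigr|\le \sqrt{1-F(\rho,\widehat\rho)}$. This follows from Uhlmann's theorem: choosing optimal purifications, $|\langle a|c\rangle|-|\langle b|c\rangle|\le\||a\rangle-|b\rangle\|$ with phases aligned, and optimizing yields the bound $\sqrt{2(1-\sqrt F)}\le\sqrt{2}\sqrt{1-F}$. Even with the constant $\sqrt2$, applying this once to each argument gives at most $2\sqrt2\sqrt\delta$. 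To reach the stated $2\sqrt\delta$ we use the sharper known estimate $|\sqrt F(\rho,\sigma)-\sqrt F(\widehat\rho,\sigma)|\le P(\rho,\widehat\rho)$, which holds because $P$ is a metric and $\sqrt{1-x^2}$ composed appropriately is controlled. The main obstacle is obtaining exactly this constant, and we expect to cite the purified-distance triangle inequality for it.

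\textbf{Tsallis relative entropy.} We apply \Cref{lem:alpha-overlap-fidelity-continuity} and divide by $1-\alpha$, which gives
\[
\frac{1}{1-\alpha}\Bigl[(2\delta)^\alpha+3\sqrt2\sqrt\delta\Bigr].
\]
Since $\alpha\le1/2$ and $2\delta\le1$, we have $\sqrt{2\delta}\le(2\delta)^\alpha$, so $3\sqrt2\sqrt\delta=3\sqrt{2\delta}\le3(2\delta)^\alpha$. The total is therefore at most $4(2\delta)^\alpha/(1-\alpha)$, as claimed.
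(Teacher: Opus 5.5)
Your arguments for the trace distance, for the first fidelity inequality, and for the Tsallis bound match the paper's. The gap is in the second fidelity inequality, $\lvert\sqrt{F}(\rho,\sigma)-\sqrt{F}(\widehat{\rho},\widehat{\sigma})\rvert\le 2\sqrt{\delta}$. None of the routes you actually carry out reaches the constant $2$. The $1$-Lipschitz bound on $\cos$ gives $2\arcsin\sqrt{\delta}$, and the Uhlmann/chord argument gives at best $2\sqrt{2(1-\sqrt{1-\delta})}$; both are strictly larger than $2\sqrt{\delta}$. You then fall back on the estimate $\lvert\sqrt{F}(\rho,\sigma)-\sqrt{F}(\widehat{\rho},\sigma)\rvert\le\sqrt{1-F(\rho,\widehat{\rho})}$. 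That estimate is true, but it does not follow from the purified distance being a metric. With $\theta=\arccos\sqrt{F}\in[0,\pi/2]$, the triangle inequality for $P=\sin\theta$ only bounds $\lvert\sin\theta_1-\sin\theta_2\rvert$, whereas you need $\lvert\cos\theta_1-\cos\theta_2\rvert$. Knowing $\lvert\sin\theta_1-\sin\theta_2\rvert\le\sin\phi$ does not give $\lvert\cos\theta_1-\cos\theta_2\rvert\le\sin\phi$: take $\theta_1=\pi/2$, $\theta_2=\pi/2-\epsilon$, and $\sin\phi=1-\cos\epsilon$. So as written, the step that delivers the stated constant rests on a citation that does not imply it.

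The missing ingredient is a sharper trigonometric bound than $1$-Lipschitzness, one that uses the fact that Bures angles lie in $[0,\pi/2]$. For $0\le a\le b\le\pi/2$,
\[
\cos a-\cos b=2\sin\tfrac{a+b}{2}\,\sin\tfrac{b-a}{2}\le 2\cos\tfrac{b-a}{2}\,\sin\tfrac{b-a}{2}=\sin(b-a),
\]
because $\tfrac{a+b}{2}\le\tfrac{\pi}{2}-\tfrac{b-a}{2}$. You already wrote the Bures-angle triangle inequality $\lvert\theta(\rho,\sigma)-\theta(\widehat{\rho},\widehat{\sigma})\rvert\le 2\gamma$, with $\gamma=\arccos\sqrt{1-\delta}$. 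Combine it with this bound and with $2\gamma\le\pi/2$ (this is where $\delta\le 1/2$ is used). The result is $\sin 2\gamma=2\sqrt{\delta(1-\delta)}\le 2\sqrt{\delta}$. This is exactly the paper's proof, which phrases it as maximizing $\lvert\cos\theta_1-\cos\theta_2\rvert$ subject to $\lvert\theta_1-\theta_2\rvert\le 2\gamma$. Alternatively, apply the same inequality one argument at a time. This gives $\lvert\sqrt{F}(\rho,\sigma)-\sqrt{F}(\widehat{\rho},\sigma)\rvert\le\sin\theta(\rho,\widehat{\rho})=\sqrt{1-F(\rho,\widehat{\rho})}$, which is precisely the estimate you wanted. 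Your two-step triangle argument then yields $2\sqrt{\delta}$, without even needing $\delta\le 1/2$.
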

\begin{proof}
    \cref{eq:convert_trace} follows from the triangle inequality for the trace distance:
    \begin{equation*}
        \abs*{D_{\tr}(\rho,\sigma) - D_{\tr}(\widehat{\rho}, \widehat{\sigma})} \leq D_{\tr}(\rho,\widehat{\rho} \, ) + D_{\tr}(\sigma, \widehat{\sigma})\leq \sqrt{1-F(\rho, \widehat{\rho} \, )} + \sqrt{1-F(\sigma, \widehat{\sigma} )} \leq 2\sqrt{\delta} \,.
    \end{equation*}
    \cref{eq:convert_tsa} follows from \Cref{lem:alpha-overlap-fidelity-continuity}:
    \begin{align*}
        \abs*{D_{\alpha}^{\mathrm{Tsa}}(\rho\Vert \sigma) - D_{\alpha}^{\mathrm{Tsa}}(\widehat{\rho} \, \Vert \widehat{\sigma})} &\leq \frac{1}{1-\alpha} \abs*{
        \tr(\rho^\alpha\sigma^{1-\alpha})
        -
        \tr(\widehat{\rho}^{\,\alpha}
            \widehat{\sigma}^{\,1-\alpha})
        } \\
        &\leq \frac{1}{1-\alpha} \paren*{ 2^{\alpha}(1-F(\rho, \widehat{\rho} \, ))^{\alpha} + 3\sqrt{2} \sqrt{1 - F(\sigma, \widehat{\sigma})} }\tag{\Cref{lem:alpha-overlap-fidelity-continuity}}\\
        &\leq \frac{1}{1-\alpha}\paren*{(2\delta)^{\alpha} + 3\sqrt{2\delta}} \\
        &\leq \frac{4}{1-\alpha}(2\delta)^{\alpha} \,. \tag{$0\leq \delta \leq 1/2$ and $0<\alpha\leq 1/2$}
    \end{align*}
    The rest of this proof is to show \cref{eq:convert_fidelity}. 
    Since $|a^2-b^2| = |a-b|\cdot(a+b)\leq 2|a-b|$ for all $0\leq a,b \leq 1$, it suffices to show the second inequality.

    Note that the Bures angle defined as 
    \begin{equation*}
        \theta(\rho, \sigma) \coloneqq \arccos \paren[\big]{\sqrt{F}(\rho, \sigma)}
    \end{equation*}
    is a valid distance metric on the space of quantum states \cite{Gilchrist_2005}, and hence it satisfies the triangle inequality:
    \begin{equation*}
        \abs*{\theta(\rho, \sigma) - \theta(\widehat{\rho}, \widehat{\sigma})} \leq \theta(\rho, \widehat{\rho} \, ) + \theta(\sigma, \widehat{\sigma}) \,.
    \end{equation*}
    Let $\gamma = \arccos(\sqrt{1-\delta})$. Then $\cos(\gamma) = \sqrt{1-\delta}$ and $\sin(\gamma) = \sqrt{\delta}$. 
    Since the arccosine function is monotonically decreasing, then $\theta(\rho, \widehat{\rho} \, ) \leq \gamma$ and $\theta(\sigma, \widehat{\sigma} ) \leq \gamma$. 
    Let $\theta_1 = \theta(\rho, \sigma)$ and $\theta_2 = \theta(\widehat{\rho}, \widehat{\sigma})$. 
    Then 
    \begin{equation*}
        \abs[\big]{\sqrt{F}(\rho, \sigma) - \sqrt{F}(\widehat{\rho}, \widehat{\sigma})} = 
        \abs*{\cos(\theta_1) - \cos(\theta_2)} \,.
    \end{equation*}
    Hence the goal is to maximize $\abs*{\cos(\theta_1) - \cos(\theta_2)}$ subject to $\abs{\theta_1 - \theta_2}\leq 2\gamma$. Suppose $\theta_1\leq \theta_2$. Then the maximum is achieved at $\theta_1 =\frac{\pi}{2} - 2\gamma$ and $\theta_2 =\frac{\pi}{2}$. So
    \begin{align*}
    \abs[\big]{\sqrt{F}(\rho, \sigma) - \sqrt{F}(\widehat{\rho}, \widehat{\sigma})} \leq \sin(2\gamma) = 2\sin(\gamma)\cos(\gamma) = 2\sqrt{\delta} \sqrt{1-\delta} \leq 2\sqrt{\delta} \,. & \qedhere
    \end{align*}
\end{proof}

\begin{proof}[Proof of the bounds in \Cref{tab:bounds}]
  Let $f(\rho, \sigma)$ be a symmetric property between $\rho$ and $\sigma$ and $r$ be the maximum of the ranks of $\rho$ and $\sigma$.
By \Cref{cor:rho_sigma_cor}, when applying random dimension reduction to $\rho^{\otimes n} \otimes \sigma^{\otimes n}$, we obtain a dimension-reduced random state $(\bU \rho_0 \bU^\dagger)^{\otimes n} \otimes (\bU \sigma_0 \bU^\dagger)^{\otimes n}$ where $\bU$ is a $2r$-dimensional Haar-random unitary, $\bU \rho_0 \bU^\dagger$ and $\bU \sigma_0 \bU^\dagger$ are $2r$-dimensional, and the symmetric property is preserved: 
\begin{equation*}
    f(\rho, \sigma) = f(\bU \rho_0 \bU^\dagger, \bU \sigma_0 \bU^\dagger) \,.
\end{equation*}
Then the sample complexity upper bounds for the four estimation tasks listed in \Cref{tab:bounds} follow from \Cref{lem:conversion} and performing the corresponding full state tomography of $\bU \rho_0 \bU^\dagger$ and $\bU \sigma_0 \bU^\dagger$ using the fidelity guarantees in \Cref{thm:tomography_samples}. 
\end{proof}

\section{Miscellaneous facts}\label{sec:miscellaneous}
\begin{lemma}\label{lem:eigenspaces_commute}
  Let $M$ be a finite-dimensional normal operator and $P$ be a projector onto one of its eigenspaces. For any linear operator $X$, if $[X,M]=0$ then $[X,P]=0$.
\end{lemma}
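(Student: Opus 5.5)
We will use the spectral theorem for normal operators, and write every eigenprojector of $M$ as a polynomial in $M$. Once $P$ is a polynomial in $M$, commutation with $X$ follows immediately, since $X$ commutes with every power of $M$.

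First, since $M$ is normal on a finite-dimensional space, the spectral theorem gives $M=\sum_{j=1}^m \mu_j P_j$. Here $\mu_1,\dots,\mu_m$ are the distinct eigenvalues and $P_1,\dots,P_m$ are mutually orthogonal projectors onto the corresponding eigenspaces, with $\sum_j P_j=I$ and $P_iP_j=\delta_{ij}P_i$. Hence $q(M)=\sum_j q(\mu_j)P_j$ for every polynomial $q$. Suppose $P=P_i$ for some $i$. We take the Lagrange interpolation polynomial
\begin{equation*}
  q_i(z)=\prod_{j\neq i}\frac{z-\mu_j}{\mu_i-\mu_j}\,,
\end{equation*}
which is well defined because the $\mu_j$ are distinct, and which satisfies $q_i(\mu_j)=\delta_{ij}$. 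Therefore $q_i(M)=P_i=P$.

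Second, from $[X,M]=0$ we get $XM^\ell=M^\ell X$ for every $\ell\ge 0$, by induction: $XM^{\ell+1}=M^\ell XM=M^{\ell+1}X$. By linearity $X$ then commutes with $q(M)$ for every polynomial $q$. In particular $X$ commutes with $q_i(M)=P$, so $[X,P]=0$.

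There is no real obstacle here. The only point needing care is that normality, and not just diagonalizability, is what makes the eigenspace projectors orthogonal projectors summing to the identity, so that "projector onto an eigenspace" is unambiguous and equals $q_i(M)$. The trivial case $m=1$ is covered by the convention that the empty product is $1$, which gives $P=I$.
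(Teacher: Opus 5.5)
Your proof is correct, but it takes a different route from the paper. The paper argues via invariant subspaces. From $[X,M]=0$ it infers that $X$ preserves each eigenspace of $M$, writes $PX\ket{v} = PXP\ket{v} + PXP^\perp\ket{v}$, and concludes this equals $XP\ket{v}$. That last step implicitly uses $PXP^\perp\ket{v}=0$, i.e.\ that $X$ also preserves $\mathrm{range}(P^\perp)$. This holds because normality makes $\mathrm{range}(P)^\perp$ the direct sum of the remaining eigenspaces, each of which $X$ preserves.

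You instead realize $P=q_i(M)$ by Lagrange interpolation on the spectral decomposition, after which commutation is immediate from $XM^\ell=M^\ell X$. This buys three things:
\begin{enumerate}
  \item It isolates exactly where normality enters: it is what makes the spectral idempotent $q_i(M)$ the \emph{orthogonal} projector onto the eigenspace. For a merely diagonalizable $M$, your argument still shows that $X$ commutes with the oblique spectral projector.
  \item It avoids the unstated complement-invariance step.
  \item It extends verbatim to any function of $M$, such as $\sqrt{M^+}$, which the paper also needs to commute with permutation operators.
\end{enumerate}
The paper's version, in turn, stays at the level of invariant subspaces and needs no explicit polynomial construction.
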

\begin{proof}
  If $[X,M]=0$ then $X$ preserves the eigenspaces of $M$~\cite[Section~5E]{Axler2024}. Since $M$ is normal, by spectral theorem we can write any vector $\ket{v}$ as $\ket{v} = P\ket{v} + P^\perp \ket{v}$, where $P^\perp = I-P$. Hence,
  \begin{align*}
    PX\ket{v} = PXP\ket{v} + PXP^\perp \ket{v} = XP \ket{v}
  \end{align*}
  for any $\ket{v}$, which concludes the proof.
\end{proof}
The proof of the following, which we omit, is an easy consequence of the cyclic property of the trace, applied to partial traces.
\begin{lemma}\label{lem:partial_trace_invariance}
  Let $X\in \mathrm{L}(\calH_\reg{A}\otimes \calH_{\reg{B}})$ and $D,E\in\mathrm{L}(\calH_\reg{B},\calH_{C})$. It holds that
  \begin{align*}
    \tr_{\reg{C}}\left((I_\reg{A}\otimes D)X(I_\reg{A}\otimes E^\dag)\right) = \tr_{\reg{B}}((I_\reg{A}\otimes E^\dag D)X).
  \end{align*}
\end{lemma}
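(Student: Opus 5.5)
The statement to prove is \Cref{lem:partial_trace_invariance}: for $X\in\mathrm{L}(\calH_\reg{A}\otimes\calH_\reg{B})$ and $D,E\in\mathrm{L}(\calH_\reg{B},\calH_\reg{C})$, we have $\tr_{\reg{C}}((I\otimes D)X(I\otimes E^\dag)) = \tr_{\reg{B}}((I\otimes E^\dag D)X)$.

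I will use linearity to reduce to product operators. Both sides are linear in $X$, and $\mathrm{L}(\calH_\reg{A}\otimes\calH_\reg{B})$ is spanned by operators of the form $X = Y\otimes Z$ with $Y\in\mathrm{L}(\calH_\reg{A})$ and $Z\in\mathrm{L}(\calH_\reg{B})$. So it suffices to check the identity for such $X$.

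For a product $X=Y\otimes Z$, compute each side separately.

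\begin{itemize}
\item \emph{Left-hand side.} We have $(I\otimes D)(Y\otimes Z)(I\otimes E^\dag) = Y\otimes DZE^\dag$. Its partial trace over $\reg{C}$ is $\tr(DZE^\dag)\,Y$.
\item \emph{Right-hand side.} We have $(I\otimes E^\dag D)(Y\otimes Z) = Y\otimes E^\dag D Z$. Its partial trace over $\reg{B}$ is $\tr(E^\dag DZ)\,Y$.
\end{itemize}

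The two scalars agree by cyclicity of the ordinary trace for rectangular matrices: $\tr(DZE^\dag)=\tr(E^\dag DZ)$. This holds because $DZ\colon\calH_\reg{B}\to\calH_\reg{C}$ and $E^\dag\colon\calH_\reg{C}\to\calH_\reg{B}$, and $\tr(AB)=\tr(BA)$ for any $A\in\mathrm{L}(\calH_\reg{B},\calH_\reg{C})$ and $B\in\mathrm{L}(\calH_\reg{C},\calH_\reg{B})$, which one sees by expanding in bases.

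Linearity then extends the identity to all $X$. There is no real obstacle; the only point needing care is that the cyclicity step is applied to non-square operators, which the basis expansion handles directly.
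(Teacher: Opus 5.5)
Your proposal is correct and is essentially the argument the paper has in mind: the paper omits the proof and only remarks that it is an easy consequence of cyclicity of the trace applied to partial traces. Your reduction to product operators $X=Y\otimes Z$ by linearity, followed by $\tr(DZE^\dag)=\tr(E^\dag DZ)$ for the rectangular factors, is exactly that argument written out.
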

\begin{lemma}\label{lem:exchangeable_commutes_w_M}
  Let $M=\E(\bm{W}\bm{W}^\dag)^{\otimes n}$ where $\bm{W}\colon \calH_\reg{B}\to\calH_{\reg{A}}$ is a Haar-random isometry, and let $P$ and $Q$ be projectors onto the image and kernel of $M$, respectively. For any $X\in\vee^n\mathrm{L}(\calH_{\reg{A}})$ it holds that $[X,M]=[X,P]=[X,Q]=0$.
\end{lemma}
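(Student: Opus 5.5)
The plan is to prove $[X,M]=0$ first and then deduce the statements for $P$ and $Q$ from \Cref{lem:eigenspaces_commute}. By Schur--Weyl duality, every exchangeable operator $X\in\vee^n\mathrm{L}(\calH_\reg{A})$ lies in the algebra generated by $\{U^{\otimes n}\colon U\in\mathrm{U}(d)\}$. It is therefore a finite linear combination of operators of the form $U^{\otimes n}$, or products of them, which are again of that form. Since commutation is preserved under linear combinations, it suffices to show $[U^{\otimes n},M]=0$ for every unitary $U\in\mathrm{U}(\calH_\reg{A})$.

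For this step I would use left-invariance of the Haar measure. Write $\bm{W}=\bm{U}'W_0$ with $\bm{U}'$ Haar-random on $\mathrm{U}(\calH_\reg{A})$. Then
\begin{equation*}
U^{\otimes n}M(U^\dag)^{\otimes n}=\E\,(U\bm{U}'W_0W_0^\dag\bm{U}'^\dag U^\dag)^{\otimes n}.
\end{equation*}
Because $U\bm{U}'$ has the same distribution as $\bm{U}'$, the right-hand side equals $M$. Multiplying on the right by $U^{\otimes n}$ gives $[U^{\otimes n},M]=0$, and by linearity $[X,M]=0$. (\Cref{lem:c-commutes} is not even needed here. It only confirms that $M$ is itself exchangeable.)

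Next, $M$ is an average of positive semidefinite operators, so it is Hermitian and in particular normal. The projector $Q$ onto $\ker(M)$ is the projector onto the eigenspace of $M$ with eigenvalue $0$. If $\ker(M)=\{0\}$, then $Q=0$ and the claim is trivial. Otherwise, \Cref{lem:eigenspaces_commute} gives $[X,Q]=0$. Finally, $P=I-Q$, since $M$ is Hermitian and so its image is the orthogonal complement of its kernel. Hence $[X,P]=-[X,Q]=0$.

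There is no real obstacle. The only point needing care is the appeal to Schur--Weyl duality: it is what lets us pass from arbitrary exchangeable $X$ to tensor powers of unitaries, and the paper states it in the required form. If one prefers to avoid it, one could instead verify directly that $M$ is a linear combination of permutation operators $F^d_\pi$ and use the definition of exchangeability. That route needs the bicommutant statement anyway, so the Schur--Weyl route is the cleaner choice.
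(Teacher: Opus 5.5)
Your proof is correct. It runs the duality in the opposite direction from the paper.

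The paper expands $M$. Since $M$ is a Haar average of $(W_0W_0^\dag)^{\otimes n}$ conjugated by $\bm{U}^{\otimes n}$, \Cref{lem:weingarten} writes it as a linear combination of permutation operators $F^d_\pi$. An exchangeable $X$ commutes with each $F^d_\pi$ by definition.

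You expand $X$ instead. Schur--Weyl duality, in the form stated in the preliminaries, places $X$ in the span of the $U^{\otimes n}$. Then $M$ commutes with each $U^{\otimes n}$ by left-invariance of the Haar measure.

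The two routes rest on the two halves of the same bicommutant fact, so neither is more elementary in substance. The Weingarten formula encodes that the commutant of $\{U^{\otimes n}\}$ is spanned by permutations. You use that the commutant of the permutations is generated by tensor powers.

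What each buys is modest:
\begin{itemize}
\item The paper reuses a lemma already central to its dimension-reduction and random-purification proofs. It also gets the explicit fact that $M$ itself lies in the permutation algebra.
\item Your version needs only a one-line invariance computation and no Weingarten coefficients.
\end{itemize}

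Your handling of $P$ and $Q$ is more explicit than the paper's one-sentence appeal to \Cref{lem:eigenspaces_commute}. You note that $M$ is Hermitian hence normal, treat the trivial-kernel case, and use $P=I-Q$.
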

\begin{proof}
  By \Cref{lem:weingarten}, $M$ is a linear combination of permutation operators and therefore commutes with any exchangeable operator. Commutation with $P$ and $Q$ then follows from \Cref{lem:eigenspaces_commute}.
\end{proof}

The following is a straightforward consequence of Schur's lemma. Let $G$ be a compact group, and let $(\calH_{\reg{A}},r_\reg{A})$ and $(\calH_{\reg{B}},r_\reg{B})$ be finite-dimensional unitary representations of $G$. Let $\widehat G$ be a set of labels for a complete set of inequivalent irreducible representations of $G$, denoted $\{(\mathcal H^{(\lambda)},r^{(\lambda)}):\lambda\in\widehat G\}$.
As group representations, $\calH_{\reg{A}}$ and $\calH_{\reg{B}}$ have orthogonal decompositions
\begin{align*}
  \calH_{\reg{A}}
  &\cong
  \bigoplus_{\lambda\in\widehat G}
  \mathbb C^{k^{\reg{A}}_\lambda}\otimes \mathcal H^{(\lambda)}, \qquad
  \calH_{\reg{B}}
  \cong
  \bigoplus_{\lambda\in\widehat G}
  \mathbb C^{k^\reg{B}_\lambda}\otimes \mathcal H^{(\lambda)}.
\end{align*}
Denote these unitary relations by $T_\reg{A}$ and $T_\reg{B}$, so that
\begin{align*}
  T_\reg{A} r_\reg{A}(g)T_\reg{A}^\dag
  &=
  \bigoplus_{\lambda\in\widehat G}
  I_{k^\reg{A}_\lambda}\otimes r^{(\lambda)}(g), \qquad
  T_\reg{B} r_\reg{B}(g)T_\reg{B}^\dag
  =
  \bigoplus_{\lambda\in\widehat G}
  I_{k^\reg{B}_\lambda}\otimes r^{(\lambda)}(g).
\end{align*}
for every $g\in G$.
\begin{lemma}[Block diagonalization of intertwiners]\label{lem:intertwiner_implies_blocks}
Let $M\in\mathrm{L}(\calH_\reg{A},\calH_\reg{B})$ satisfy
$
  M r_\reg{A}(g) = r_\reg{B}(g)M
$
for every $g\in G$. Then
$
  T_\reg{B} M T_\reg{A}^\dag
  =
  \bigoplus_{\lambda\in\widehat G}
  M^{(\lambda)}\otimes I_{\mathcal H^{(\lambda)}},
$
where
$
  M^{(\lambda)}
  \in
  \mathcal L(\mathbb C^{k^\reg{A}_\lambda},\mathbb C^{k^\reg{B}_\lambda})
$.
\end{lemma}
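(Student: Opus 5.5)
The plan is to reduce to the standard form of Schur's lemma by conjugating with the decomposition unitaries. Set $\widetilde M \coloneq T_\reg{B} M T_\reg{A}^\dag$. Then the hypothesis $M r_\reg{A}(g) = r_\reg{B}(g) M$ becomes
\begin{equation*}
  \widetilde M \Big(\bigoplus_{\mu} I_{k^\reg{A}_\mu}\otimes r^{(\mu)}(g)\Big) = \Big(\bigoplus_{\lambda} I_{k^\reg{B}_\lambda}\otimes r^{(\lambda)}(g)\Big)\widetilde M \qquad \text{for every } g\in G .
\end{equation*}
Next I would split $\widetilde M$ into blocks $\widetilde M_{\lambda\mu}\colon \C^{k^\reg{A}_\mu}\otimes\calH^{(\mu)}\to \C^{k^\reg{B}_\lambda}\otimes\calH^{(\lambda)}$ according to the two direct sums. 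Because both sides of the displayed relation are block diagonal, it holds block by block: $\widetilde M_{\lambda\mu}(I\otimes r^{(\mu)}(g)) = (I\otimes r^{(\lambda)}(g))\widetilde M_{\lambda\mu}$ for all $g\in G$.

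Then I would refine each block further using matrix units on the multiplicity spaces. For $a\in[k^\reg{B}_\lambda]$ and $b\in[k^\reg{A}_\mu]$, define $N_{ab}\coloneq (\bra{a}\otimes I_{\calH^{(\lambda)}})\widetilde M_{\lambda\mu}(\ket{b}\otimes I_{\calH^{(\mu)}})\in \mathrm{L}(\calH^{(\mu)},\calH^{(\lambda)})$. Each $N_{ab}$ intertwines the irreducible representations $r^{(\mu)}$ and $r^{(\lambda)}$, since the operators $\ket{b}\otimes I$ and $\bra{a}\otimes I$ commute with $I\otimes r(g)$.

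Schur's lemma then finishes the argument. The irreducibles here are finite-dimensional and the representations are over $\C$, and for inequivalent irreducibles every intertwiner is zero. So $N_{ab}=0$ whenever $\lambda\neq\mu$, because distinct labels in $\widehat G$ index inequivalent representations. When $\lambda=\mu$, Schur's lemma gives $N_{ab}=c_{ab} I_{\calH^{(\lambda)}}$ for some $c_{ab}\in\C$. Reassembling, $\widetilde M_{\lambda\lambda} = \sum_{a,b} c_{ab}\ketbra{a}{b}\otimes I_{\calH^{(\lambda)}}$. I would set $M^{(\lambda)}\coloneq\sum_{a,b}c_{ab}\ketbra{a}{b}\in\mathrm{L}(\C^{k^\reg{A}_\lambda},\C^{k^\reg{B}_\lambda})$, which is exactly the claimed form.

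I do not expect any real obstacle. The one point needing care is a notational one: the statement writes $M^{(\lambda)}\otimes I_{\calH^{(\lambda)}}$ with the multiplicity factor first. This matches the ordering $\C^{k_\lambda}\otimes\calH^{(\lambda)}$ in the decompositions, so no swap of tensor factors is needed. I also need to confirm that the block decomposition is well defined when some multiplicities $k_\lambda$ vanish; in that case the corresponding blocks are simply empty.
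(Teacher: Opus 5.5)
Your proposal is correct and is essentially the paper's proof: conjugate by the decomposition unitaries, read off the intertwining relation on each $(\lambda,\mu)$ block, compress with $\bra{a}\otimes I$ and $\ket{b}\otimes I$ on the multiplicity spaces, and apply Schur's lemma to get $\delta_{\lambda\mu}c_{ab}I_{\mathcal H^{(\lambda)}}$. One small wording fix: it is the representation operators that are block diagonal, not both sides of the relation, and that is why the relation holds block by block.
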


\begin{proof}
Let
$
  \widetilde M
  \coloneq
  T_\reg{B} M T_\reg{A}^\dag.
$
which maps $\bigoplus_{\lambda\in\widehat{G}}\C^{k^\reg{A}_\lambda}\otimes \calH^{(\lambda)}$ to $\bigoplus_{\lambda\in\widehat{G}}\C^{k^\reg{B}_\lambda}\otimes \calH^{(\lambda)}$.
By the intertwining assumption, for every $g\in G$ we have
\begin{align*}
  \widetilde M
  \left(
    \bigoplus_{\lambda\in\widehat G}
    I_{k^\reg{A}_\lambda}\otimes r^{(\lambda)}(g)
  \right)
  =
  \left(
    \bigoplus_{\lambda\in\widehat G}
    I_{k^\reg{B}_\lambda}\otimes r^{(\lambda)}(g)
  \right)
  \widetilde M .
\end{align*}
Let
$
  \widetilde M^{\lambda\lambda'}
  \colon
  \mathbb C^{k^\reg{A}_{\lambda'}}\otimes\mathcal H^{(\lambda')}
  \to
  \mathbb C^{k^\reg{B}_\lambda}\otimes\mathcal H^{(\lambda)}
$
denote the $(\lambda,\lambda')$ block of $\widetilde M$. Then we have the matrix equality
\begin{align*}
  \widetilde M^{\lambda\lambda'}
  \left(
    I_{k^\reg{A}_{\lambda'}}\otimes r^{(\lambda')}(g)
  \right)
  =
  \left(
    I_{k^\reg{B}_\lambda}\otimes r^{(\lambda)}(g)
  \right)
  \widetilde M^{\lambda\lambda'}
\end{align*}
for every $g\in G$. This in turn implies that, for every
$i\in[k^\reg{B}_\lambda]$, $j\in[k^\reg{A}_{\lambda^\prime}]$, and $g\in G$ we have
\begin{align*}
  \widetilde M^{\lambda\lambda'}_{ij} r^{(\lambda')}(g)
  =
  r^{(\lambda)}(g)\widetilde M^{\lambda\lambda'}_{ij}
\end{align*}
where we have defined $\widetilde{M}^{\lambda\lambda^\prime}_{ij}:=(\bra{i}\otimes I_{\calH^{(\lambda)}})\widetilde{M}^{\lambda\lambda^\prime}(\ket{j}\otimes I_{\calH^{(\lambda^\prime)}})$.
By Schur's lemma (see, e.g., Lemma~4.1.4 in \cite{goodman2009symmetry}), for all such $i$, $j$, and $g$ there is a complex number $\alpha^{(\lambda)}_{ij}\in\mathbb C$ such that
$
  \widetilde M^{\lambda\lambda^\prime}_{ij}
  =
  \delta_{\lambda\lambda^\prime}\alpha^{(\lambda)}_{ij} I_{\mathcal H^{(\lambda)}}.
$
It follows that $\widetilde{M}$ is of the desired form with $(M^{(\lambda)})_{ij}=\alpha^{(\lambda)}_{ij}$.
\end{proof}

The special case when $\calH=\calH_\reg{A}=\calH_{\reg{B}}$ is perhaps more standard, and we record it explicitly below. Let $(\calH,r)$ be a finite-dimensional unitary representation of $G$. As a group representation, $\calH$ admits an orthogonal decomposition of the form
$\calH\cong \bigoplus_{\lambda\in \widehat{G}}\mathbb{C}^{k_\lambda}\otimes \calH^{(\lambda)}$. Denote this unitary isomorphism by $T$, so that $Tr(g)T^{\dag}=\bigoplus_{\lambda\in \widehat{G}} I_{k_{\lambda}}\otimes r^{(\lambda)}(g)$ for every $g\in G$. Note that if $G=\mathfrak{S}_n\times \mathrm{U}(d)$ then $T=U^d_\schur$.
\begin{lemma}[Block diagonalization of commuting operators]\label{lem:commuting_implies_blocks}
   Let $M\in \mathrm{L}(V)$ be a linear operator such that $r(g)$ commutes with $M$ for every
   $g\in G$. Then
   $T M T^{\dag}=\bigoplus_{\lambda\in \widehat{G}} M^{(\lambda)}\otimes I_{\calH^{(\lambda)}}$
   where for each $\lambda\in \widehat{G}$ we have $M^{(\lambda)}\in  \mathrm{L}(\mathbb{C}^{k_{\lambda}})$.
\end{lemma}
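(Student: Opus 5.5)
The plan is to reduce this to \Cref{lem:intertwiner_implies_blocks}, taking $\calH_\reg{A}=\calH_\reg{B}=\calH$ and $r_\reg{A}=r_\reg{B}=r$. The commutation hypothesis $M r(g)=r(g)M$ for every $g\in G$ is exactly the intertwining condition in that lemma. It therefore yields
\[
T M T^\dag=\bigoplus_{\lambda\in\widehat G} M^{(\lambda)}\otimes I_{\calH^{(\lambda)}}
\]
with $M^{(\lambda)}\in\mathrm{L}(\C^{k_\lambda},\C^{k_\lambda})=\mathrm{L}(\C^{k_\lambda})$. Here we take $T_\reg{A}=T_\reg{B}=T$, since the decompositions of the two spaces coincide, and this is the claim.

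If one prefers a self-contained argument, the steps would mirror the proof of \Cref{lem:intertwiner_implies_blocks}. First, conjugate by $T$ to get $\widetilde M=TMT^\dag$, which commutes with $\bigoplus_\lambda I_{k_\lambda}\otimes r^{(\lambda)}(g)$. Next, split $\widetilde M$ into blocks $\widetilde M^{\lambda\lambda'}$, and split each block further into the sub-blocks $\widetilde M^{\lambda\lambda'}_{ij}\colon\calH^{(\lambda')}\to\calH^{(\lambda)}$. Each sub-block intertwines $r^{(\lambda')}$ with $r^{(\lambda)}$. Schur's lemma then forces the sub-block to vanish when $\lambda\neq\lambda'$, since the irreducibles are inequivalent, and to be a scalar multiple of the identity when $\lambda=\lambda'$. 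Finally, collect these scalars into the matrix $M^{(\lambda)}$.

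There is no real obstacle here. The only point needing care is that Schur's lemma in its scalar form requires irreducibility over $\C$ and inequivalence of distinct labels, and both are guaranteed by the choice of $\widehat G$.
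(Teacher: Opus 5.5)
Your proposal is correct and matches the paper. The paper gives no separate proof and records this lemma precisely as the special case $\calH_{\reg{A}}=\calH_{\reg{B}}=\calH$, $r_{\reg{A}}=r_{\reg{B}}=r$, $T_{\reg{A}}=T_{\reg{B}}=T$ of \Cref{lem:intertwiner_implies_blocks}, and your optional self-contained argument is just that lemma's proof specialized to equal representations.
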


\end{document}